\def\dOi{11(4:5)2015}
\newif\iflncs\lncsfalse
\newif\iffull
\newif\ifmai\maifalse
\newcounter{myenum}
  {\end{list}}
\newif\ifmai\maifalse
\newif\ifsp\iflncs\spfalse\else\sptrue\fi
\renewcommand{\arraystretch}{1.5}
\newcommand{\defin}{\stackrel{\triangle}{=}}
\def \rightarrowfill{\m@th\mathord{\smash-}\mkern-6mu%
  \cleaders\hbox{$\mkern-2mu\mathord{\smash-}\mkern-2mu$}\hfill
  \mkern-6mu\mathord\rightarrow}
\def \mapstofill{\shortmid\!\!\!\!\m@th\mathord{\smash-}\mkern-6mu%
  \cleaders\hbox{$\mkern-2mu\mathord{\smash-}\mkern-2mu$}\hfill
  \mkern-6mu\mathord\rightarrow}
\newcommand{\dom}{\mathrm{dom}}
\newcommand{\mik}[1]{}
\newcommand{\fra}[1]{}
\newcommand{\es}{\emptyset}
\def \rightarrowfill{\m@th\mathord{\smash-}\mkern-6mu%
  \cleaders\hbox{$\mkern-2mu\mathord{\smash-}\mkern-2mu$}\hfill
  \mkern-6mu\mathord\rightarrow}
\def \longrightarrowfill{\m@th\mathord{\smash-}\mkern-6mu%
  \cleaders\hbox{$\mkern-2mu\mathord{\smash-}\mkern-2mu$}\hfill
  \mkern-6mu\mathord\longrightarrow}
\def \Rightarrowfill{\m@th\mathord{\smash=}\mkern-6mu%
  \cleaders\hbox{$\mkern-2mu\mathord{\smash=}\mkern-2mu$}\hfill
  \mkern-6mu\mathord\Rightarrow}
\def \midrightarrowfill{\m@th\mathord{\smash{\raisebox{.2ex}{$\scriptscriptstyle\mid$}}\!\!\,-}\mkern-6mu%
  \cleaders\hbox{$\mkern-2mu\mathord{\smash-}\mkern-2mu$}\hfill
  \mkern-6mu\mathord\rightarrow}
\def \midRightarrowfill{\m@th\mathord{\smash{\raisebox{.1ex}{$\scriptstyle\mid$}}\!\!\!=}\mkern-6mu%
  \cleaders\hbox{$\mkern-2mu\mathord{\smash=}\mkern-2mu$}\hfill
  \mkern-6mu\mathord\Rightarrow}
\newcommand{\tld}{\tilde}
\renewcommand{\mathring}[1]{\overset\circ{#1}}
\newcommand{\Act}{\mathcal{A}}
\newcommand{\supp}{\mathrm{supp}}
\newcommand{\K}{\mathcal{K}}
\newcommand{\vsp}{\vspace*{.2cm}}
\newcommand{\St}{\mathcal{S}}
\renewcommand{\P}{\mathcal{P}}
\newlength{\algrhswidth}
\lbrace\begin{array}{@{}l@{}}}%
\newcommand{\Y}{\mathcal{Y}}
\newcommand{\X}{\mathcal{X}}
\newcommand{\M}{\mathcal{M}}
\newcommand{\range}{\mathrm{range}}
\newcommand{\var}{\mathrm{var}}
\newcommand{\qif}{{\sc  qif}}
\newcommand{\UU}{\hat B}
\newcommand{\HH}{\hat H}
\newcommand{\DD}{\hat D}
\newcommand{\unif}{\mathrm{u}}   
\newcommand{\mdp}{{\sc mdp}}
  \let\@copyrightspace\relax
\keywords{Quantitative information flow, adaptive attackers, information theory, Bayesian decision theory.}
\theoremstyle{plain} 
\begin{document}

\title[Quantitative Information Flow]{Quantitative Information Flow under Generic\\Leakage Functions and Adaptive Adversaries\rsuper*}
\titlecomment{{\lsuper*}Extended and revised version of \cite{BP14}. Corresponding author: Michele Boreale.
Work partially supported by the {\sc eu}  project {\sc Ascens} under the {\sc fet} open initiative in  {\sc fp7} and by the italian {\sc prin} project {\sc cina}.}

\author[M.~Boreale]{Michele Boreale}	
\address{Universit\`a di Firenze, Dipartimento di Statistica, Informatica, Applicazioni (DiSIA), Viale Morgagni 65, I-50134  Florence, Italy.}	
\email{michele.boreale@unifi.it}  

\author[F.~Pampaloni]{Francesca Pampaloni}	
\address{Independent researcher, Italy.}	
\email{francesca.pampaloni@ge.com}  





\begin{abstract}
We put forward a  model of \emph{action-based} randomization mechanisms to analyse quantitative information flow (\qif) under generic leakage functions, and under possibly adaptive adversaries. This model subsumes many of the \qif\ models proposed so far. Our main contributions include the following: (1) we identify mild general conditions on the leakage function  under which it is possible to derive general and significant results on adaptive \qif; (2) we contrast the efficiency of adaptive and non-adaptive strategies, showing that the latter are as efficient as the former in terms of length  up to an expansion factor bounded by the number of available actions; (3) we show that the maximum information leakage over strategies,   given a finite time horizon, can be expressed in terms of a Bellman equation. This can be used to compute an optimal finite strategy recursively, by resorting to standard methods like backward induction.
\end{abstract}

\maketitle
\section{Introduction}\label{sec:Intro}
\emph{Quantitative Information Flow} (\qif) is a well-established approach to  confidentiality analysis: the basic   idea is measuring  how much  information flows  from sensitive
to observable data, relying on tools from Information Theory \cite{CHM01,BK08,CPP08,Smith09,BCP09,Bor09,KS10,BPP11,BPP11b}.
\fra{Ho modificato sia i titoli (tutti in maiuscole) che il modo di scrivere le citazioni. Manca da vedere se la bibliografia contiene tutte le info richieste}

Two important issues that arise in \qif\ are: what measure one should adopt to quantify the leakage of confidential data, and the relationship between adaptive and non adaptive adversaries.  Concerning the first issue, a long standing debate in the \qif\ community concerns the relative merits of leakage  functions based on   Shannon entropy (see e.g. \cite{CHM01,Bor09}) and min-entropy (see e.g. \cite{Smith09,BCP09}); other types of entropies are sometimes considered (see e.g. \cite{KB07}.) As a matter of fact, analytical results for each of these  types of leakage functions have been so far  worked out  in a non-uniform, ad hoc fashion.

Concerning the second issue, one sees that, with the notable exception  of \cite{KB07}  which we discuss later on,  the treatment of confidentiality  in \qif\ has  so far been almost exclusively confined   to   attackers that can only  passively   eavesdrop on the mechanism; or,  at best,  obtain answers in response to   queries (or \emph{actions})  submitted in   a non-adaptive fashion \cite{BP12}. By \emph{passive}, we mean here attackers that can eavesdrop on messages,  but not interfere  with the generation process of these messages\footnote{Passivity in this sense does not rule out attackers that can try secrets, based on some form of oracle. Active attackers are also considered in approaches to quantitative \emph{integrity}  \cite{Clarkson}, a theme that will not be considered here.}.  Clearly, there are situations where this model is not adequate. To mention but two: chosen plaintext/ciphertext attacks against cryptographic hardware or software;   adaptive  querying of   databases whose records contain  both sensitive and non-sensitive  fields.


In this paper, we tackle both   issues outlined above. We: (a)  put forward a general \qif\ model  where the leakage function is built around a \emph{generic uncertainty measure}; and, (b)  derive several general results on the relationship between  adaptive and non adaptive adversaries in this model. In more detail, we assume that, based on a secret  piece of information $X\in \X$,  the mechanism responds to a sequence of queries/actions $a_1,a_2,\ldots$ ($a_i\in Act$), adaptively submitted by an adversary, thus  producing a sequence of  answers/observations $Y\in \Y^*$. Responses  to   individual queries are in general probabilistic, either because of the presence of noise or by system's design. Moreover, the mechanism is stateless, thus answers are independent from one another.  The adversary is assumed to know   the distribution according to which $X$ has been generated (the prior) and the input-output behaviour of the mechanism.
An adaptive adversary can choose the next query based on past observations, according to a predefined strategy.
Once a strategy and a prior have been fixed, they together induce a probability space over   sequences of observations. Observing a specific sequence   provides the adversary with information  that   modifies his belief about $X$,   possibly reducing his uncertainty.  We measure   information leakage as the  \emph{average reduction in uncertainty}. We work with a  {generic} measure of uncertainty, $U(\cdot)$. Formally, $U(\cdot)$ is     just a real-valued function over the set of probability distributions on $\X$, which represent     possible beliefs of the adversary. Just two   properties are assumed of $U(\cdot)$: concavity and continuity. Note that   leakage functions commonly employed in   \qif, such as Shannon entropy, guessing entropy and error probability -- the additive version of Smith's min-entropy-based \emph{vulnerability}   \cite{Smith09} -- do fall in this category.

The other central theme of our study is the  comparison between      adaptive and the simpler non-adaptive strategies. All in all, our results indicate that, for even moderately powerful adversaries, there is no dramatic difference between the two, in terms of difficulty of analysis. A more precise account of our contributions follows.
\begin{itemize}
\item  We put forward a general model of adaptive \qif; we  identify mild general conditions on the uncertainty  function  under which it is possible to derive general and substantial results on adaptive \qif.
\item We compare the difficulty of analyzing mechanisms under adaptive and non-adaptive adversaries. We first note that,  for the class of mechanisms admitting a ``succinct'' syntactic  description  - e.g. devices specified by boolean formulae -  the analysis problem is  intractable ({\sc np}-hard), even if limited to very simple instances of the \emph{non-adaptive} case. This essentially depends on the fact that such mechanisms can feature   exponentially many actions in the syntactic size of the description. In the general case, we show that non-adaptive finite strategies are as efficient as adaptive ones, up to an  {expansion factor} in their length bounded by the number of distinct actions available.  Practically, this indicates that, for mechanisms described in explicit form (e.g. by tables, like a {\sc db}) hence      featuring an ``affordable'' number of  actions available to the adversary, it may be sufficient to assess   resistance of the mechanism against non-adaptive strategies. This is important, because simple analytical results are available   for such   strategies \cite{BP12}.
\item We show that the maximum leakage over all strategies is the same for both adaptive and non-adaptive adversaries, and only depends on an indistinguishability equivalence relation over the set of secrets.
  \item  
       We show that   maximum  leakage over all strategies over a finite horizon can be expressed in terms of a Bellman equation. This equation can be used to compute optimal finite strategies recursively. As an example, we show how to do that using Markov Decision Processes (\mdp's)  and   backward induction.
  \item We finally give a Bayesian decision-theoretic justification of our definition of uncertainty function.  We  argue that each such function arises as a measure of    expected  \emph{loss}; and, vice-versa, that (under a mild condition) each measure of expected loss is in fact an uncertainty function in our sense.
\end{itemize}




\subsection*{Structure of the Paper} Section \ref{sec:model} introduces the model. This is illustrated with a few examples in Section \ref{sec:examples}. The subsequent  four sections 4, 5, 6, 7 discuss the results outlined in (2), (3), (4) and (5) above, respectively. Section \ref{sec:concl} contains a few   concluding remarks, discussion of related work and some directions for further research. Some technical material has been confined to  three separate appendices.

\section{A model of adaptive \qif}\label{sec:model}
\subsection{Randomization mechanisms, uncertainty, strategies}

\begin{defi}
{An \emph{action-based randomization mechanism}\footnote{The term \emph{information hiding system}   is sometimes found in the literature to indicate randomization mechanisms. This   term, however, is also used with a  rather  different technical meaning in the literature on watermarking; so we prefer to avoid it altogether here.}   is a 4-tuple \[\St= (\X, \Y, Act, \{M_a : a\in Act\})\,,\]
where (all sets finite and nonempty):
$\X, \Y$ and $Act$ are respectively the sets of \emph{secrets}, \emph{observations}
and \emph{actions} (or \emph{queries}) and for each $a\in Act$, $M_a$ is a stochastic matrix of dimensions $|\X| \times |\Y|$.
}
\end{defi}

For each action $a\in Act$, $x\in \X$ and $y \in \Y$, the element of row $x$ and column $y$ of $M_a$ is denoted by $p_a(y|x)$. Note that for each $x$ and $a$, row $x$ of $M_a$ defines a probability distribution over $\Y$, denoted by
$p_a(\cdot|x)$.
A mechanism $\St$ is \emph{deterministic} if each entry of each $M_a$ is either 0
or 1. Note that to any deterministic mechanism there corresponds   a function $f:\X\times Act\rightarrow \Y$ defined by $f(x,a)=y$, where $p_a(y|x)=1$. Recall that a real function $F$ defined over a convex  subset $C\subseteq \mathbb{R}^n$ is \emph{concave} if, for each $\lambda\in [0,1]$ and $u,v\in C$, it holds true that $F(\lambda u+(1-\lambda)v)\geq \lambda F(u)+(1-\lambda)F(v)$. In the rest of the paper, we let $\P$ denote the set of all probability distributions on $\X$; this can  of course be seen as a convex subset of $\mathbb{R}^{|\X|}$ (the probability simplex.) 

\begin{defi}[Uncertainty]\label{def:uncert}
{A
function $U: \P \rightarrow \mathbb{R}$ is an \emph{uncertainty measure} if it is concave  and continuous over $\P$. 
}
\end{defi}

We postpone a full justification of the above definition to Section \ref{sec:roleconcave}. For the time being, we can explain intuitively the role  of concavity   as follows. Suppose the secret is generated  according to either a distribution $p$ or   to another distribution $q$, the choice depending on a   coin toss, with head's probability $\lambda$. The coin toss introduces \emph{extra randomness} in the generation process. Therefore, the overall uncertainty of the adversary about the secret, $U\big(\lambda\cdot p+(1-\lambda)\cdot q\big)$, should   be \emph{no less} than the average uncertainty of the two original generation processes considered separately, that is $\lambda U\big( p\big)+(1-\lambda)U\big( q\big)$. As a matter of fact, most uncertainty measures in \qif\ do satisfy concavity. Continuity is a technical requirement that comes into play only in Theorem~\ref{th:due}\footnote{In fact, concavity does imply continuity except possibly  on the frontier of $\P$.}.

\begin{exa}{
The following entropy functions, and variations thereof, are often considered in the quantitative security literature as measures of the difficulty or effort necessary for a passive adversary to identify a secret $X$, where $X$ is a random variable over $\X$ distributed according to a known distribution $p(\cdot)$. All of these functions are easily proven to be uncertainty    measures in our sense:
\begin{itemize}
\item \emph{Shannon entropy}: $H(p)\defin -\sum_{x\in\X}{p(x)\log p(x) }$, with $0\log 0=0$ and $\log$ in base $2$;
\item \emph{Error probability entropy}: $E(p)\defin 1-\max_{x\in\X}{p(x)}$;
\item \emph{Guessing entropy}: $G(p)\defin \sum_{i=1}^{n-1}{i\cdot p(x_i)}$ with $p(x_1)\geq p(x_2)\geq\ldots\geq p(x_n)$.
\end{itemize}
}
\end{exa}

\begin{exa}\label{ex:variance}
For a somewhat different example of uncertainty, suppose that $\X\subseteq \mathbb{R}$. Then each probability distribution over $\X$ corresponds to a real valued r.v., and we can set $U(X)\defin \var(X)$, where $\var(X)\defin E[(X-\mu)^2]$, with $\mu=E[X]$, is the familiar variance. This makes intuitive  sense, as the higher the variability of $X$, the higher the uncertainty about its value. Let us check that $\var(X)$ is concave and continuous.

Indeed, continuity follows immediately from the definition. Concerning concavity, first note that, for any real $z$, $E[(X-z)^2]=\var(X)+(\mu-z)^2$ (easily checked by writing $(X-z)^2$ as $((X-\mu)+(\mu-z))^2$, then expanding the square and then applying linearity of expectation.) This implies that $E[(X-z)^2] \geq \var(X)$. Now, let $p,q$ be any two distributions on $\X$, let $\lambda\in[0,1]$ and $r=\lambda\cdot p +(1-\lambda)\cdot q$. Denoting by $\mu_u$ and $\var_u$, respectively, the expectation and variance of $X$ taken according to a distribution $u$, we have the following.
\begin{eqnarray*}
\var_r(X) & = & E_r[(X-\mu_r)^2] \\
       & =  & \lambda E_p[(X-\mu_r)^2] + (1-\lambda) E_q[(X-\mu_r)^2]\\
       & \geq & \lambda \var_p(X) +  (1-\lambda) \var_q(X)\,.
\end{eqnarray*}
\end{exa}

We note that the min-entropy function, $H_\infty(p)=-\log \max_x p(x)$, is neither concave nor convex, so it does not fit in the present framework. However, one can at least indirectly express min-entropy via the error probability entropy $E(\cdot)$: $H_\infty(p) = -\log(1-E(p))$.

A  \emph{strategy} is a partial function $\sigma: \Y^* \rightarrow Act$ such
that $\dom(\sigma)$ is non-empty and prefix-closed\footnote{A set
  $B\subseteq \Y^*$ is prefix-closed if whenever $\sigma\in B$ and
  $\sigma'$ is a prefix of $\sigma$ then $\sigma'\in B$.}.    
A strategy is  \emph{finite} if $\dom(\sigma)$ is finite.
The \emph{length} of a finite strategy is defined as $\max\,\{|\xi|: \xi\in \dom(\sigma)\}+1$. For each $n\geq
0$ we will let   $y^n, w^n, z^n,\ldots$   range over sequences in $\Y^n$;   given $y^n=(y_1,\ldots,y_n)$ and $0\leq j\leq n$,   we will let $y^j$    denote the first $j$ components of $y^n$, $(y_1,\ldots,y_j)$.
Given a   strategy $\sigma$ and an integer $n\geq 0$, the  \emph{truncation} of $\sigma$ at level $n$,
denoted as $\sigma \backslash n$, is the finite strategy   $\sigma_{|\cup_{0\leq i\leq n}\Y^i}$.
\begin{wrapfigure}[9]{r}{0.35\textwidth}
\centering
\begin{tikzpicture}[-,>=stealth',shorten >=1pt,auto,node distance=2cm, thick, edge from parent path={(\tikzparentnode) -- (\tikzchildnode)}]
  \tikzstyle{level 1}=[level distance=2.2cm, sibling distance=1.75cm]
  \tikzstyle{every state}=[rectangle,rounded corners, inner sep=2pt,minimum size=.4cm]
\node[state] {{\scriptsize $a$}}
 child
  { node [state]    {{\scriptsize $b$}}
  edge from parent
  node [left=-1pt] {\scriptsize $y$}
  {}}
;
\end{tikzpicture}
\hspace{10pt}
\begin{tikzpicture}[-,>=stealth',shorten >=1pt,
auto,node distance=2cm, thick, edge from parent path={(\tikzparentnode) -- (\tikzchildnode)}]
  \tikzstyle{level 1}=[level distance=1.1cm, sibling distance=1.75cm]
  \tikzstyle{every state}=[rectangle,rounded corners, inner sep=2pt,minimum size=.4cm]
\node[state] {{\scriptsize $a$}}
 child
  { node [state]    {{\scriptsize $b$}}
  child
  { node [state]    {{\scriptsize $d$}}
  edge from parent
  node [left=-2pt] {\scriptsize $y'$}
  {}}
  edge from parent
  node [above] {\scriptsize $y\ \ $}}
  child
  { node [state]    {{\scriptsize $c$}}
  edge from parent
  node [above] {\scriptsize $\ \ y'$}
  {}}
;
\end{tikzpicture}
\vspace{-3pt}
\caption{Two strategy trees.}\label{fig:strategies}
\end{wrapfigure}
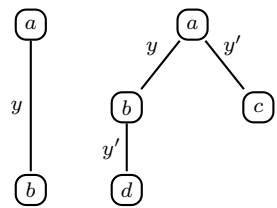
\vspace*{5pt}%
A finite  strategy of length $l$ is \emph{complete} if $\dom(\sigma)=\cup_{0\leq i\leq l-1}\Y^i$.
A strategy $\sigma$ is  \emph{non-adaptive} if whenever $y^n$ and $w^n$ are two
sequences of the same length then $\sigma(y^n)=\sigma(w^n)$ (that is, the decision of which action to play next only depends on the number of past actions); note that finite non-adaptive strategies are necessarily complete.
We note  that strategies can be described as trees, with nodes labelled
by actions and arc labelled by observations, in the obvious way. Any non-adaptive strategy also enjoys a simpler representation as a finite or infinite list  of actions: we write $\sigma=[a_1,\ldots,a_i,\ldots]$ if $\sigma(y^{i-1})=a_i$, for $i=1,2,\ldots$.

\begin{exa}
{
Strategies  $\sigma= [\varepsilon\mapsto a, y\mapsto b]$ and
$\sigma'=[\varepsilon\mapsto a, y\mapsto b, y'\mapsto c, yy'\mapsto d]$ can be represented as in Figure  \ref{fig:strategies}. Note that the tree's height is one less than the strategy's length.}
\end{exa}


\begin{rem}\label{rem:strat}{
It is worthwhile to comment on two possible objections to our strategy model. First,  we do not consider \emph{mixed} strategies, that is strategies where the next action is chosen probabilistically, rather than deterministically like in our case. It is true that mixed strategies play a key role in Game Theory:   equilibria typically arise  in the form of   profiles of mixed  strategies, as in many games any pure (deterministic) strategy could be easily beaten by an opponent. We believe, however, that mixed strategies are irrelevant in the present context:  there is only one player here (the adversary), and consequently no meaningful notion of equilibrium. In particular, there is no such role as a defender whose moves depend on the adversary's ones. In game-theoretical terms, the adversary is playing   against the Nature, represented by the mechanism.

Another possible objection is that in our model the next action depends solely on the sequence of past observations, rather than on the whole history comprising also the past actions played by the adversary. This limitation can be easily overcome by considering a modified action-based mechanism, where actions are part of the observation: that is, in the new mechanism, the set of observations is $Act\times \Y$, and one poses $p_{a}((b,y)|x)\defin p_a(y|x)$ if $a=b$ and  $p_{a}((b,y)|x)\defin 0$ otherwise, where $p_a(\cdot|\cdot)$ is the $a$-stochastic matrix of the original mechanism. This way, strategies for the new mechanism automatically take into account the whole history.
}
\end{rem}

\subsection{A probability space}
Informally, we consider  an adversary who repeatedly queries a mechanism, according to a predefined \emph{finite} strategy. At some point,  the strategy will terminate, and the adversary   will have collected a sequence of observations $y^n=(y_1,\ldots,y_n)$.
Note that both the length $n$ and the probability of the individual observations $y_i$, hence of the whole $y^n$,  will in general depend both on $X$ and on the strategy played by the adversary. In other words, the   distribution  $p(\cdot)$ of $X$ \emph{and} the strategy $\sigma$ together induce a probability distribution  on a subset of all observation sequences: the ones that may arise as a result of a complete interaction with the mechanism, according to the played strategy. 

Formally, let $p(\cdot)$ be any given probability distribution over $\X$, which we will often refer to as the \emph{prior}.
For each   finite strategy $\sigma$, we define a joint  probability
distribution $p_\sigma(\cdot)$ on $\X\times \Y^*$, depending on $\sigma$ and on  $p(\cdot)$, as follows.
We let $p_\sigma(x,\varepsilon)\defin 0$ and, for each $j\geq 0$:
\begin{eqnarray}\label{eq:probspace}
p_\sigma(x,y_1,\ldots,y_j, y_{j+1}) & \defin &
\left\{\begin{array}{ll} p(x)\cdot p_{a_1}(y_1|x)\cdot \cdots \cdot p_{a_{j}}(y_j|x) p_{a_{j+1}}(y_{j+1}|x) \\
\ \qquad\qquad\qquad\qquad\qquad\ \mbox{ if }y^j\in \dom(\sigma), y^jy_{j+1}\notin\dom(\sigma)\\[10pt]
\\
0  \qquad\qquad\qquad\qquad\qquad\ \mbox{otherwise}
\end{array}\right.
\end{eqnarray}

\ifmai
\begin{eqnarray*}
p_\sigma(x,y_1,\ldots,y_j, y_{j+1}) & \defin &
\left\{\begin{array}{ll} p(x)\cdot p_{a_1}(y_1|x)\cdot \cdots \cdot p_{a_{j}}(y_j|x) p_{a_{j+1}}(y_{j+1}|x) & \mbox{ if }y^j\in \dom(\sigma),\\
\ & \phantom{ if} \ y^jy_{j+1}\notin\dom(\sigma)\vspace{-2pt}\\
0 & \mbox{ otherwise}
\end{array}\right.
\end{eqnarray*}
\fi
\noindent where in the first case $a_i = \sigma(y^{i-1})$ for $i =1,\ldots,j+1$. In case $\sigma=[a]$, a single action strategy, we will often abbreviate $p_{[a]}(\cdot)$ as $p_a(\cdot)$. Note that the support of $p_\sigma(\cdot)$ is finite, in particular $\supp(p_\sigma) \subseteq \X\times  \{y^jy: j\geq 0, y^j\in \dom(\sigma), y^jy\notin\dom(\sigma)\}$.

Let $(X,Y)$ be a pair of random variables with outcomes in $\X\times \Y^*$, jointly distributed according to $p_\sigma(\cdot)$:  here $X$ represents the secret and $Y$ represents the sequence of observations obtained upon termination of the strategy.
We shall often use such shortened notations as: $p_\sigma(x|y^n)$ for
$\Pr(X=x|Y=y^n)$, $p_\sigma(y^n)$ for $\Pr(Y=y^n)$,  and so on. Explicit formulas for computing these quantities can be easily derived from the definition of $p_\sigma(\cdot)$ and using Bayes rule.
We will normally keep the dependence of $(X,Y)$ on $p(\cdot)$ and $\sigma$
implicit. When we
want to stress that we are considering   $Y$ according to the distribution  induced by  a specific $\sigma$ (e.g. because different strategies are being considered at the same time), we will write it as $Y_\sigma$.

\vsp
Consider a   prior $p(\cdot)$ and a \emph{finite} strategy $\sigma$, and the corresponding pair of \fra{se vogliamo aggiungere questa notazione non perdiamo spazio. Decidi te. Ho scritto per esteso anche w.l.o.g., tanto ho visto che compariva solo 1 volta in appendice}
   random variables (r.v.)  $(X,Y)$.
We define the following quantities,
expressing  average  uncertainty, conditional uncertainty  and information gain about $X$, that may result from interaction according to strategy $\sigma$ (by convention, below we let   $y^n$ range over sequences with $p_\sigma(y^n)>0$):
\renewcommand{\arraystretch}{0.9}
\begin{eqnarray}
U(X) &\defin & U\big(p \big)\nonumber\\
U_\sigma(X|Y) &\defin &\sum_{y^n} p_\sigma(y^n)U\big(p_\sigma(\cdot|y^n)\big)\label{eq:conditionaluncert}\\ 
I_\sigma(X;Y)&\defin & U(X)-U_\sigma(X|Y)\,.\nonumber
\end{eqnarray}

Again, we may drop the subscript $\sigma$ from $U_\sigma$ and $I_\sigma$ if the strategy $\sigma$ is clear from the context.  Note that, in the case of Shannon entropy, $I_\sigma (X;Y)$ coincides with the familiar mutual information, traditionally measured in bits. In the case of error entropy, $I_\sigma (X;Y)$ is what is called \emph{additive   leakage}  in  e.g. \cite{BCP09}  and \emph{advantage} in the cryptographic literature, see e.g. \cite{entropicSec} and references therein. 

In the rest of the paper, unless otherwise stated, we let $U(\cdot)$ be an arbitrary uncertainty function. The following fact about $I_\sigma(X;Y)$ follows from $U(\cdot)$'s concavity and Jensen's inequality, plus routine calculations on probability distributions (see Appendix).

\begin{lem}\label{lemma:uno}   $I_\sigma(X;Y)\geq 0$. Moreover $I_\sigma(X;Y)= 0$ if $X$ and $Y$ are independent.
\end{lem}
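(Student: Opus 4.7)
The proof reduces, by Jensen's inequality, to the observation that the prior $p(\cdot)$ is a convex combination of the posteriors $p_\sigma(\cdot\mid y^n)$, weighted by the marginals $p_\sigma(y^n)$. So the core of the argument is (i) a marginalization identity and (ii) an application of concavity, with the equality case handled by a direct calculation.

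First, I would establish that, for every $x\in\X$,
\[
p(x) \;=\; \sum_{y^n} p_\sigma(y^n)\,p_\sigma(x\mid y^n),
\]
where, as stated, $y^n$ ranges over sequences with $p_\sigma(y^n)>0$ (i.e.\ the leaves of the strategy tree). This follows from the definition of $p_\sigma(\cdot)$ in \eqref{eq:probspace}: summing $p_\sigma(x,y^n)$ over all $y^n$ in the support gives $p(x)$, because the product $p_{a_1}(y_1\mid x)\cdots p_{a_n}(y_n\mid x)$, when summed over all leaves of the finite tree induced by $\sigma$, equals $1$ (finiteness of $\sigma$ guarantees that the weights $p_\sigma(y^n)$ sum to $1$ as well). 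Then I rewrite the right-hand side using Bayes as $\sum_{y^n} p_\sigma(y^n) p_\sigma(x\mid y^n)$.

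Second, since $U$ is concave on $\P$ and the $\{p_\sigma(y^n)\}$ form a convex combination, Jensen's inequality yields
\[
U(p) \;=\; U\!\left(\sum_{y^n} p_\sigma(y^n)\,p_\sigma(\cdot\mid y^n)\right) \;\geq\; \sum_{y^n} p_\sigma(y^n)\, U\bigl(p_\sigma(\cdot\mid y^n)\bigr) \;=\; U_\sigma(X\mid Y),
\]
which is precisely $I_\sigma(X;Y)\geq 0$.

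Third, for the equality case, if $X$ and $Y$ are independent then $p_\sigma(x\mid y^n) = p(x)$ for every $y^n$ in the support. Hence $U(p_\sigma(\cdot\mid y^n)) = U(p)$ uniformly, and summing against the weights $p_\sigma(y^n)$ (which total $1$) gives $U_\sigma(X\mid Y) = U(p) = U(X)$, so $I_\sigma(X;Y)=0$.

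\textbf{Main obstacle.} There is no serious obstacle: the only subtlety is being careful that the relevant weights really sum to $1$, which requires the finiteness of $\sigma$ and the fact that the set of leaves $\{y^jy : j\geq 0, y^j\in\dom(\sigma), y^jy\notin\dom(\sigma)\}$ exhausts all terminated observation sequences. Once this bookkeeping is in place, the proof is a one-line application of Jensen's inequality.
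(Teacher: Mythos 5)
Your proposal is correct and follows essentially the same route as the paper's own proof: express the prior as the convex combination $p(\cdot)=\sum_{y^n}p_\sigma(y^n)p_\sigma(\cdot\mid y^n)$, apply Jensen's inequality via concavity of $U$, and handle independence by noting the posteriors all equal the prior. Your extra care about the weights summing to $1$ for finite strategies is sound bookkeeping that the paper leaves implicit.
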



Given the   definitions in \eqref{eq:conditionaluncert},    adaptive \qif\ can   be defined quite simply.

\begin{defi}[\qif\ under adaptive adversaries]\label{def:qifAdaptive}
Let $\St$ be a mechanism and $p(\cdot)$ be a prior over $\X$.
\begin{enumerate}
\item For a \emph{finite} strategy $\sigma$, let  $I_\sigma(\St,p) \defin I_\sigma(X;Y)$.
\item For an \emph{infinite} strategy  $\sigma$,
  let $I_\sigma(\St,p) \defin \lim_{l\rightarrow\infty} I_{\sigma\backslash l}(\St,p)$.
\item (\emph{Maximum IF} under  $p(\cdot)$) $I_\star(\St,p)\defin  \sup_\sigma I_\sigma(\St,p)$.
\end{enumerate}
\end{defi}

Note that $l'\geq l$ implies $I_{\sigma\backslash l'}(\St,p)\geq I_{\sigma\backslash l}(\St,p)$, hence the limit in (2) always exists.
Taking the distribution that achieves the maximum leakage, we can define an analog of channel capacity.

\begin{defi}[Adaptive secrecy capacity]
 $C(\St) \defin \sup_{p \in \P} I_\star(\St,p)$.
\end{defi}

\subsection{Attack Trees}
It is sometimes useful to work with a pictorial representation of the adversary's attack steps, under a given strategy and prior. This can take  the form of a tree, where each node   represents an adversary's \emph{belief} about the secret, that is, a probability distribution over $\X$. The tree describes the possible evolutions of the  belief, depending on the strategy and on the observations.  We formally introduce such a representation below: it will be extensively used in the examples. Note that \emph{attack} trees
are different from  \emph{strategy} trees. 

A \emph{history} is a sequence $h\in (Act\times\Y)^*$. Let
$h=(a_1,y_1,\ldots,a_n,y_n)$ be such a history.
Given a prior $p(\cdot)$, we define the \emph{update of $p(\cdot)$ after $h$}, denoted by
$p^h(\cdot)$,  as the distribution on $\X$ defined by
\begin{eqnarray}\label{eq:update}
p^h(x) &\defin & p_{\sigma_h}(x | y^n)
\end{eqnarray}
\noindent where $\sigma_h=[a_1,\ldots,a_n]$,  provided $p_{\sigma_h}(y^n)>0$;
otherwise $p^h(\cdot)$ is undefined.

The \emph{attack tree} induced by a   strategy $\sigma$ and a prior
$p(\cdot) $ is  a tree with nodes labelled by probability distributions over $\X$ and arcs labelled with pairs $(y,\lambda)$ of an observation $y$ and a probability value $\lambda$. This tree is obtained from the strategy tree of $\sigma$ as follows.  First, note that, in a strategy tree, each node can be identified with the unique history from the root leading to it. Given the strategy tree for $\sigma$: (a) for each $y\in \Y$ and each node missing an outgoing $y$-labelled arc,  attach   a new     $y$-labelled arc  leading to a new node;  (b) label each node of the resulting tree by $p^h(\cdot)$, where $h$ is the history  identifying the node,
if $p^h(\cdot)$ is defined,  otherwise remove the node and its descendants, as well as the incoming arc; (c) label each arc from a node $h$ to a child, represented by a history $h\cdot a\cdot y$,  in the resulting tree with  $\lambda=p_a^{h}(y)$ - to be parsed as $(p^h)_{[a]}(y)$. This  is  the probability of observing $y$ under a prior $p^h(\cdot)$ when submitting action $a$.

The concept of attack trees is demonstrated by a few examples in the next section.
Here, we just note the following  easy to check facts. For each leaf $h$ of the attack tree: (i) the leaf's label is $p^h(\cdot)=p_\sigma(\cdot|y^n)$, where $y^n$ is the sequence of observations in $h$;  (ii) if we let $\pi_h $ be the product of the probabilities on the edges from the root to the leaf, then $\pi_h=p_\sigma(y^n)$. Moreover, (iii) each $y^n$ s.t. $p_\sigma(y^n)>0$ is found in the tree. As a consequence, for a \emph{finite} strategy, taking \eqref{eq:conditionaluncert} into account, the uncertainty of $X$ given $Y$ can be computed from the attack tree as:
\begin{eqnarray}\label{eq:uncertree}
U_\sigma(X\,|\,Y) & = & \sum_{h  \ \mathrm{is\  a\  leaf}} \pi_h U(p^h)\,.
\end{eqnarray}
\section{Examples}\label{sec:examples}
We present a few instances of the framework introduced in the previous section. We emphasize that these examples  are quite simple and only serve  to illustrate our main definitions. In the rest of the paper, we shall use the following notation:  we let  $\unif\{x_1,\ldots,x_k\}$ denote the uniform distribution   on $\{x_1,\ldots,x_k\}$.

\begin{figure}[t]
\begin{minipage}[t]{0.50\textwidth}
\centering
\renewcommand{\arraystretch}{0.85}
\begin{tabular}{|>{\centering\arraybackslash}p{0.3cm}|>{\centering\arraybackslash}p{0.5cm}|
>{\centering\arraybackslash}p{0.5cm}|>{\centering\arraybackslash}p{0.6cm}|c|}
\hline
\scriptsize\textbf{id}   &\scriptsize\textbf{ZIP}     &\scriptsize\textbf{Age}   &\scriptsize\textbf{Date}   &\scriptsize\textbf{Disease}\\ \hline
\scriptsize$1$  &\scriptsize$z_1$   &\scriptsize$65$  &\scriptsize$d_2$  &\scriptsize Heart disease\\ \hline
\scriptsize$2$  &\scriptsize$z_1$   &\scriptsize$65$  &\scriptsize$d_2$  &\scriptsize Flu\\ \hline
\scriptsize$3$  &\scriptsize$z_1$   &\scriptsize$67$  &\scriptsize$d_2$  &\scriptsize Short breath\\ \hline
\scriptsize$4$  &\scriptsize$z_1$   &\scriptsize$68$  &\scriptsize$d_1$  &\scriptsize Obesity\\ \hline
\scriptsize$5$  &\scriptsize$z_1$   &\scriptsize$68$  &\scriptsize$d_1$  &\scriptsize Heart disease\\ \hline
\scriptsize$6$  &\scriptsize$z_3$   &\scriptsize$66$  &\scriptsize$d_2$  &\scriptsize Heart disease\\ \hline
\scriptsize$7$  &\scriptsize$z_3$   &\scriptsize$67$  &\scriptsize$d_2$  &\scriptsize Obesity\\ \hline
\scriptsize$8$  &\scriptsize$z_3$   &\scriptsize$31$  &\scriptsize$d_2$  &\scriptsize Short breath\\ \hline
\scriptsize$9$  &\scriptsize$z_2$   &\scriptsize$30$  &\scriptsize$d_3$  &\scriptsize Heart disease\\ \hline
\scriptsize$10$  &\scriptsize$z_2$   &\scriptsize$31$  &\scriptsize$d_3$  &\scriptsize Obesity\\ \hline
\end{tabular}
\caption{\small Medical {\sc db} of Example  \ref{exDBdet}.}
\label{database}
\end{minipage}
\hspace{15pt}
\begin{minipage}[t]{0.45\textwidth}
\centering
\tikzstyle{bag} = [text centered,circle, minimum width=3pt,fill, inner sep=0pt]
\tikzstyle{Node}=[draw, minimum width=1.5pt,fill, inner sep=1.5pt]
\begin{tikzpicture}[-,>=stealth',shorten >=1pt,auto,node distance=2cm, thick, edge from parent path={(\tikzparentnode) -- (\tikzchildnode)}, scale=0.7]
  \tikzstyle{level 1}=[level distance=1.2cm, sibling distance=1.75cm]
  \tikzstyle{every state}=[rectangle,rounded corners, inner sep=2pt,minimum size=.45cm]

\node[state] {{\scriptsize ZIP}}
 child
  { node [state]    {{\scriptsize Date}}
   edge from parent
                node[above=4pt, left=-7pt] {\scriptsize $z_1$}
   {}
   }
 child
  { node [state]    {{\scriptsize Date}}
  edge from parent
                node[left=-2pt] {\scriptsize $z_2$}
   {}
  }
  child
  { node [state]    {{\scriptsize Age}}
  edge from parent
                node[above=4pt, right=-7pt] {\scriptsize $z_3$}
   {}
  }
;
\end{tikzpicture}
\caption{\small Strategy tree of Example ~\ref{exDBdet}.}
\label{database3}
\end{minipage}
\end{figure}

\begin{exa}[medical {\sc db}]{
An attacker gets hold of the table shown in Figu\label{exDBdet}re  \ref{database}, which
represents  a fragment of a hospital's   database. Each row of the table  contains:  a numerical id followed by  the ZIP code, age,   discharge date and disease of an individual that has
been recently hospitalized. The table does not contain personal identifiable information. The attacker gets to know that a certain target individual, John Doe (JD),   has been recently hospitalized. However, the attacker is ignorant of the corresponding id in the table and any information about JD, apart from his name. The attacker's task  is to identify JD, i.e. to  find  JD's id in the table, thus learning his disease. The attacker is in a position to ask a   source, perhaps the hospital {\sc db},     queries  concerning non sensitive information (ZIP code, age and  discharge date) of any individual, including JD, and compare the answers with the table's entries.\footnote{That this is  unsafe is of course well-known  from   database security:  the present example only serves the purpose of illustration.}

This situation can be modeled quite simply as an action-based mechanism $\St$, as follows.
 We pose: $Act=\{\mathrm{ZIP,Age,Date}\}$; $\X=\{1,\ldots,10\}$, the set of possible  id's, and $\Y=\Y_{\mathrm{ZIP}}\cup \Y_{\mathrm{Age}}\cup\Y_{\mathrm{Date}}$, where $\Y_{\mathrm{ZIP}}=\{z_1,z_2,z_3\}$,  $\Y_{\mathrm{Age}}=\{30,31,65,66,67,68 \}$ and $\Y_{\mathrm{Date}}=\{d_1,d_2,d_3\}$. The   conditional probability matrices reflect the behaviour of the   source when queried about ZIP code, age and discharge date of an individual. We assume that the source is truthful, hence   answers will match   the entries    of the table. For example, $p_{\mathrm{Age}}(y|1)=1$ if $y=65$ and $0$ otherwise;
 $p_{\mathrm{ZIP}}(y|2)=1$ if $y=z_1$, $0$ otherwise; and so on.  Note that this defines a \emph{deterministic} mechanism.   Finally, since the attacker has no clues about JD's id, we set the prior to be the uniform distribution on $\X$, $p(\cdot)=\unif\{1,\ldots,10\}$.

Assume now that, possibly to protect privacy of individuals,  the number of queries to the   source  about any individual is limited to two. Figure  \ref{database3} displays a possible attacker's strategy $\sigma$, of length 2.  Figure  \ref{fig:attdet} displays the corresponding attack tree, under the given   prior. Note that the given strategy is not in any sense optimal. Assume we set $U(\cdot)=H(\cdot)$,   Shannon entropy, as a measure of   uncertainty.  Using \eqref{eq:uncertree}, we can compute $I_\sigma(\St,p)=H(X)-H(X|Y)=\log{10}-\frac{3}{10}\log{3}-\frac{2}{5}\approx 2.45$ bits. With $U(\cdot)=E(\cdot)$,
the error entropy, we have $I_\sigma(\St,p)= E(X)-E(X|Y) = 0.5.$
}
\end{exa}

\begin{figure}[t]
\begin{minipage}[t]{0.35\textwidth}
\centering
\begin{tikzpicture}[-,>=stealth',shorten >=1pt,auto,node distance=2cm, thick, edge from parent path={(\tikzparentnode) -- (\tikzchildnode)},scale=0.7]
  \tikzstyle{every state}=[rectangle,rounded corners, inner sep=2pt,minimum size=.45cm]
  \tikzstyle{level 1}=[level distance=1.3cm, sibling distance=3cm]
\tikzstyle{level 2}=[level distance=2.3cm, sibling distance=1.3cm]

\node[state]{{\scriptsize $\unif \{1,\ldots,10\}$}}
 child
  {
  node [state]{{\scriptsize $\unif \{1,\ldots,5\}$ }}
     child { node [state, left] {{\scriptsize $\unif \{1,2,3\}$ }}
             edge from parent
                node[above=-1pt] {\scriptsize $d_2\ \ $}
                 node[below=5pt, right=2pt]  {\scriptsize  $\!\!\!\!\!\frac{3}{5}$} }
      child { node [state, right] {{\scriptsize $\unif \{4,5\}$ }}
             edge from parent
                node[above=-1pt] {\scriptsize $\ \ d_1$}
                 node[below=5pt,left=-2pt]  {\scriptsize  $\frac{2}{5}$} }
     edge from parent
                node[above] {\scriptsize $z_1$}
                node[below right]  {\scriptsize $\!\!\!\!\!\!\!\!\frac{1}{2}$}
   }
 child
  { node [state]    {{\scriptsize $\unif \{9,10\}$}}
     child { node [state] {{\scriptsize $\unif \{9,10\}$ }}
             edge from parent
                node[below =-2pt, left =-2pt] {\scriptsize $d_3$}
                 node[below  right]  {\scriptsize  $1$} }
    edge from parent
                node[above= 2pt, left=-1pt] {\scriptsize $z_2$}
                node[below=-1pt, right=-1pt]  {\scriptsize  $\frac{1}{5}$}
     {} }
  child
  { node (P) [state]    {{\scriptsize $\unif \{6,7,8\}$}}
    child { node (Q) [state] {{\scriptsize $\unif \{8\}$ }}
             edge from parent
                node[above=4pt, left=-5pt] {\tiny $31$}
                node[below=5pt,right=-4pt]  {\scriptsize  $\frac{1}{3}$} }
    child { node (Q1) [state] {{\scriptsize $\unif \{6\}$ }}
             edge from parent
                node[above =3pt,left=-3pt] {\tiny $66$}
                 node[above=-5pt, right=-3pt]  {\scriptsize  $\frac{1}{3}$} }
    child { node (Q2) [state] {{\scriptsize $\unif \{7\}$ }}
             edge from parent
                node[above=4pt, right=-5pt] {\tiny $67$}
                 node[below=5pt,left=-5pt]  {\scriptsize  $\frac{1}{3}$} }
  edge from parent
                node[above] {\scriptsize $z_3$}
                node[below]  {\scriptsize $\frac{3}{10}$}
    {}  }
;
\end{tikzpicture}
\caption{\small The attack tree for Example  \ref{exDBdet}.}
\label{fig:attdet}
\end{minipage}
\hspace{40pt}
\begin{minipage}[t]{0.55\textwidth}
\centering
\begin{tikzpicture}[-,>=stealth',shorten >=1pt,auto,node distance=2cm, thick, edge from parent path={(\tikzparentnode) -- (\tikzchildnode)},scale=0.7]
  \tikzstyle{every state}=[rectangle,rounded corners, inner sep=2pt,minimum size=.45cm]
  \tikzstyle{level 1}=[level distance=1.3cm, sibling distance=3.7cm]
\tikzstyle{level 2}=[level distance=2.3cm, sibling distance=1.45cm]

\node[state]{{\scriptsize $\unif \{1,\ldots,10\}$}}
 child
  {
  node [state]{{\scriptsize $\unif \{1,\ldots,5\}$ }}
     child { node [state, left=0.5pt] {{\scriptsize $\unif \{1,2,3\}$ }}
             edge from parent
                node[above=-1pt] {\scriptsize $d_2\ \ $}
                 node[below=5pt, right=2pt]  {\scriptsize  $\!\!\!\!\!\frac{3}{5}$} }
      child { node [state, right=0.5pt] {{\scriptsize $\unif \{4,5\}$ }}
             edge from parent
                node[above=-1pt] {\scriptsize $\ \ d_1$}
                 node[below=5pt, left=-2pt]  {\scriptsize  $\frac{2}{5}$} }
     edge from parent
                node[above] {\scriptsize $z_1$}
                node[below right]  {\scriptsize $\!\!\!\!\!\!\!\!\frac{1}{2}$}
   }
 child
  { node [state]    {{\scriptsize $\unif \{9,10\}$}}
     child { node [state] {{\scriptsize $\unif \{9,10\}$ }}
             edge from parent
                node[below =-2pt, left =-2pt] {\scriptsize $d_3$}
                 node[below  right]  {\scriptsize  $1$} }
    edge from parent
                node[above= 2pt, left=-1pt] {\scriptsize $z_2$}
                node[below=-1pt, right=-1pt]  {\scriptsize  $\frac{1}{5}$}
     {} }
  child
  { node (P) [state]    {{\scriptsize $\unif \{6,7,8\}$}}
    child { node (Q) [state] {{\scriptsize $\unif \{8\}$ }}
             edge from parent
                node[above=-2pt,sloped] {\tiny $30,31,32\ \ $}
                node[below=5pt, right=-5pt]  {\scriptsize  $\frac{1}{9}$} }
    child { node (Q1) [state] {{\scriptsize $\unif \{6\}$ }}
             edge from parent
                node[above =-3pt,sloped] {\tiny $65$}
                 node[below=5pt, right=-9pt]  {\scriptsize  $\ \ \ \frac{1}{9}$} }
    child { node (Q2) [state] {{\scriptsize $\unif \{6,7\}$ }}
             edge from parent
                node[above=-3pt,sloped] {\tiny $66,67$}
                 node[below=5pt, left=-4pt]  {\scriptsize  $\frac{2}{9}$} }
    child { node (R)[state] {{\scriptsize $\unif \{7\}$ }}
             edge from parent
                node[above =-2pt,sloped] {\tiny 68 }
                 node[below=5pt, left=-5pt]  {\scriptsize  $\frac{1}{9}$} }
  edge from parent
                node[above] {\scriptsize $z_3$}
                node[below]  {\scriptsize $\frac{3}{10}$}
    {}  }
;
\path (P) -- coordinate[above] (PQ) (Q);
\path (P) -- coordinate[above] (PR) (R);
\end{tikzpicture}
\caption{\small The attack tree for Example  \ref{exDB}. Leaves   with the same label and their incoming arcs have been coalesced.}
\label{database2}
\end{minipage}
\end{figure}


\begin{exa}[noisy version]\label{exDB}{
We consider a version of the previous mechanism where the pu\-blic source  queried by the attacker is not entirely truthful. In particular,   for  security reasons, whenever queried about age of an individual, the source adds a uniformly distributed offset $r\in\{-1,0,+1\}$ to the real answer. The only difference from the previous example is that the conditional probability matrix $p_{\mathrm{Age}}(\cdot|\cdot)$  is not deterministic anymore. For example, for $x=1$, we have
\begin{eqnarray*}
 p_{\mathrm{Age}}(y|1) & = & \left\{\begin{array}{ll}\frac 1 3 & \mbox{ if }y\in\{64,65,66\}\\
                                                            0 & \mbox{otherwise }
                                    \end{array}\right.
 \end{eqnarray*}
(also note that we have to insert $29, 32, 64 \mbox{ and } 69$ as possible mechanism's observations into $\Y_{\mathrm{Age}}$.)
Figure  \ref{database2} shows the attack tree induced by the strategy $\sigma$ of Figure  \ref{database3} and the uniform prior in this case. If $U(\cdot)=H(\cdot)$  we obtain $I_\sigma(\St,p)=\log{10}-\frac{3}{10}\log{3}-\frac{8}{15}\approx 2.31$ bits; if  $U(\cdot)=E(\cdot)$, instead, $I_\sigma(\St,p)= \frac{13}{30} \approx 0.43.$}
\end{exa}


\begin{exa}[cryptographic devices]\label{ex:crypt}{
We can abstractly model a cryptographic device as a function $f$ taking   pairs of a key and a message into observations, thus, $f:\K \times \M \rightarrow \Y$.  Assume the attacker can  choose the message $m\in \M$ fed to the device, while the private key $k$ is fixed and unknown to him. This clearly yields an action-based mechanism $\St$ where $\X=\K$, $Act=\M$ and $\Y$ are the observations. If we assume the observations  noiseless, then the conditional probability matrices are defined by
\begin{eqnarray*}
p_m(y|k) & = & 1 \ \ \mbox{ iff }\ \  f(k,m)=y   \,.
\end{eqnarray*}
We obtain therefore a deterministic mechanism. This is the way, for example, modular exponentiation is modeled in \cite{KB07}. More realistically, the observations will be noisy, due e.g. to the presence of  ``algorithmic noise''. For example, assume $\Y\subseteq \mathbb{N}$ is the set of possible Hamming weights of the ciphertexts (this is related to power analysis attacks, see e.g. \cite{KSWH00}.)
Then we may set
\begin{eqnarray*}
p_m(y|k) & = & \Pr(f(k,m)+N =y)
\end{eqnarray*}
\noindent where $N$ is a random variable modelling noise. For example, in the model  of {\sc des} S-Boxes considered in \cite{BP12}, $\K=\M=\{0,1\}^6$, while $\Y=\{0,1,2,\ldots\}$ is the set of    observations: the (noisy) Hamming weight  of the outputs of the target S-Box. In this case, $N$  is taken to be   the  cumulative weight of the seven S-Boxes other than the target one. It is   sensible to assume this   noise to be   binomially distributed: $N\sim B(m,p)$, with $m=28$ and $p=\frac  1 2$. See \cite{BP12} for details.
}
\end{exa}

\ifmai
\begin{exa}
Let us consider the game Mastermind, a classic two-player game between a codemaker and a codebreaker. The codemaker secretly selects a $n$-digit code, where each digit is between $1$ and $m.$ The codebreaker then tries to guess the code by repeatedly guessing a valid code ($n$ digits belonging to $\{1,2,\ldots,m\}$.) After each guess, the codemaker tells the codebreaker two numbers, $(b, w):$ the number of correct digits in the correct position $(b)$ and the number of digits that are part in the code but not in the correct position $(w.)$ For example, if the code is $1233$ and the codebreaker's guess is $3243,$ then the codemaker's response would be $b=2, w=1,$ since the codebreaker has guessed the $2$ and the second $3$ correctly and in the correct positions, while the first $3$ is in the wrong position.
Consider now the case of a careless codemaker, that each time with a certain probability can make mistakes
and give a wrong answer. Let us denote by $b'$ and $w'$ the reported answers and let $q(\cdot|(b,w))$
represent the probability of obtaining a certain answer, given the real one.
Suppose then that in our case $n=3$ and $m=2.$ We can view it as an
action-based information hiding mechanism $\St$ where $\X=\Act={\{0,1\}}^3,$ $\Y=\{(b',w')\in{\{0,1,2,3\}}^2|b'+w'\leq 3\}
\setminus\{(2,1),(0,1),(0,3)\}$\footnote{It is easy to see that these three couples are not possible values for $(b,w):$
in the case of $(2,1),$ for example, if we have two correct values in the correct position ($b=2$) on a total of three,
it can not be possible that the third one is correct but in a wrong position ($w=1$) because there are not other possible positions it can occupy.} and for each action $a\in\Act$ the correspondent probability matrix
will be given by the product $Q\cdot P_a,$ where $P_a$ is the matrix corresponding to the deterministic case
($P_a((b,w)|k)=1$ if and only if, in the code $a,$ $b$ values are correct and in the correct position and $w$
correct but in a wrong position) and $Q$ contains the probabilities $q((b',w')|(b,w).)$ In other words,
$p((b',w')|k)=q((b',w')|(b,w))p((b,w)|k.)$
\end{exa}
\fi

\section{Comparing Adaptive and Non-adaptive Strategies}\label{sec:compare}

Conceptually, we can broadly  classify mechanisms  into two categories,  depending on the size of the set $Act$.  The first category consists of systems with a huge   - exponential, in the size of any reasonable syntactic description - number of actions. The second category consists of systems with an ``affordable''  number of actions. In the first category, we find, for instance, complex cryptographic hardware, possibly described via boolean circuits or other ``succinct'' notations (cf.  the public key exponentiation algorithms considered in    \cite{KB07}.) In the second category, we find  systems explicitly described by tables, such as databases (Example  \ref{exDBdet} and \ref{exDB}) and  S-Boxes  (Example \ref{ex:crypt}.) It makes sense to assess the difficulty of analysing the security of   mechanisms separately for these two categories.

\subsection{Systems in  Succinct Form}
We argue that the analysis of such systems is in general an intractable problem, even if restricted to simple special instances of the \emph{non-adaptive} case. We consider the problem of deciding if there is a  finite strategy  over a given time horizon   yielding an information flow exceeding a given threshold. This decision problem is of course simpler than the problem of finding an optimal strategy over a finite time horizon: indeed, any algorithm for finding the optimal strategy can also be used to answer the first problem. We give some definitions.

\begin{defi}[Systems in  boolean forms]\label{def:boolean} Let $t,u,v$ be nonnegative integers. We say a mechanism $\St=(\X,\Y,Act,\{M_a:a\in Act\})$ is in $(t,u,v)$-\emph{boolean form} if $\X=\{0,1\}^t$, $Act=\{0,1\}^u$, $\Y=\{0,1\}^v$   and   there is a boolean function $f:\{0,1\}^{t+u}\rightarrow \{0,1\}^v$ such that for each $x\in \X$, $y\in \Y$ and $a\in Act$, $p_a(y|x)=1$ iff $f(x,a)=y$. The \emph{size} of $\St$ is defined as the syntactic size of the smallest boolean formula for $f$.
\end{defi}

It is not difficult to see that the class of boolean forms coincides, up to   suitable encodings, with that of deterministic systems.

\begin{defi}[Adaptive Bounding Problem in succinct form, \textsc{abps}]\label{def:ADP} Given a mechanism $\St$  in a $(t,u,v)$-boolean form, represented by a boolean expression, a prior distribution $p(\cdot)$, $l\geq 1$ and $T\geq 0$,
decide if there is a strategy $\sigma$ of length $\leq l$ such that
$I_\sigma(\St;p)> T$.
\end{defi}

In the following theorem, we shall assume, for simplicity, \label{marker}
the following reasonable properties of $U(\cdot)$: if  $p(\cdot)$ concentrates all the probability mass on a single element, and  $q(\cdot)$ is the uniform distribution,  then $0=U(p)<U(q)$. A slight modification of the argument   also works without this assumption. 
The theorem says that even length 1 (hence non-adaptive) strategies are difficult to assess.

\begin{thm}\label{th:quattro} Assume $U(\cdot)$ satisfies the above stated property. Then the \textsc{abps} is \textsc{np}-hard, even if fixing $t=v=l=1$,     and $T=0$.
\end{thm}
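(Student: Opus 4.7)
The plan is to exhibit a polynomial-time reduction from \textsc{sat} to \textsc{abps} restricted to $t=v=l=1$, $T=0$. With $\X=\Y=\{0,1\}$ and $l=1$, any candidate strategy $\sigma$ is just a single action $a\in Act=\{0,1\}^u$ (length-$1$ strategies are trivially non-adaptive), and under action $a$ the mechanism is a deterministic binary channel determined by $f(\cdot,a)$.

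First I would fix the prior $p(\cdot)=\unif\{0,1\}$; by the stated property of $U(\cdot)$ this gives $U(p)>0$. I then split on the behaviour of $f(\cdot,a)$. If $f(0,a)=f(1,a)$, the observation $Y$ is a constant and hence independent of $X$, so Lemma~\ref{lemma:uno} yields $I_a(\St,p)=0$. If instead $f(0,a)\neq f(1,a)$, then every observable $y$ pins down $x$ uniquely, so each posterior $p_a(\cdot|y)$ concentrates on a single secret; the stated property then forces $U(p_a(\cdot|y))=0$, whence $U_{[a]}(X|Y)=0$ and $I_a(\St,p)=U(p)>0$. Consequently, on this restricted family of instances the \textsc{abps} question reduces to: does there exist $\vec a\in\{0,1\}^u$ with $f(0,\vec a)\neq f(1,\vec a)$?

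To encode \textsc{sat}, given a boolean formula $\phi(a_1,\ldots,a_u)$ I would take the $(1,u,1)$-boolean form induced by the expression $f(x,a_1,\ldots,a_u)\defin x\land\phi(a_1,\ldots,a_u)$, whose size is linear in that of $\phi$. Since $f(0,\vec a)=0$ and $f(1,\vec a)=\phi(\vec a)$, one has $f(0,\vec a)\neq f(1,\vec a)$ iff $\phi(\vec a)=1$; combined with the previous paragraph this shows that the constructed mechanism admits a length-$\leq 1$ strategy with $I_\sigma(\St,p)>0$ iff $\phi$ is satisfiable. This is the required polynomial-time reduction, so \textsc{abps} is \textsc{np}-hard even in the claimed restricted form.

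There is no real obstacle here; the only step that requires any genuine argument is the case analysis on binary deterministic channels, which is essentially a one-liner once the mild assumption on $U(\cdot)$ is invoked. Without that assumption one can still push the argument through by combining Lemma~\ref{lemma:uno} with the observation that any non-constant concave $U(\cdot)$ must drop strictly when a uniform prior is replaced by a single-point distribution; this is presumably the \emph{slight modification} alluded to just before the theorem statement.
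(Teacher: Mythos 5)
Your proof is correct and is essentially the paper's own argument: the same reduction from \textsc{sat} via the $(1,u,1)$-boolean form $f(x,\tld a)=x\wedge\phi(\tld a)$ with the uniform prior, and the same dichotomy ($Y$ constant and independent of $X$ versus $Y$ logically equivalent to $X$) settled by Lemma~\ref{lemma:uno} and the stated property of $U(\cdot)$. No substantive differences to report.
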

\begin{proof}
We reduce from the satisfiability problem for boolean formulae. Let $\phi(z_1,\ldots,z_u)=\phi(\tld z)$
be an arbitrary boolean formula  with $u$ free boolean variables $z_1,\ldots,z_u$.
We show how to build in polynomial time out of $\phi(\tld z)$ a mechanism $\St$ in $(1,u,1)$-boolean form,
and a prior $p(\cdot)$, with the following property: there is a  length 1 strategy $\sigma$
s.t. $I_\sigma(\St,p)>0$ iff $\phi(\tld z)$ is satisfiable.
Take $\X=\Y=\{0,1\}$ and $Act=\{0,1\}^u$. Let  the mechanism $\St$ be defined by
the boolean function $f(x,z_1,\ldots,z_u)=x \wedge \phi(\tld z)$. Let $p(\cdot)$
be the uniform prior on $\X=\{0,1\}$.  Now, if there is an action
 $\tld b = (b_1,\ldots,b_u)\in Act$ such that $\phi(\tld b)=1$ ($\phi(\tld z)$ is satisfiable)
  then clearly we will have that $Y=X\wedge \phi(\tld b)$ is logically
  equivalent to $X$, hence $U(X|Y)=0$. Consequently,   setting $\sigma=[\varepsilon\mapsto \tld b]$, we will have
  that  $I_\sigma(\St,p)= U(X)-U(X|Y) >0$.
    On the other hand, if $\phi(\tld z)$ is not satisfiable, then for any $\tld b\in Act$
     we will have that $Y=X\wedge \phi(\tld b)$ is logically equivalent to $0$, hence $U(X|Y)= U(X)$.
     Consequently,
for any $\sigma=[\varepsilon\mapsto \tld b]$, we will have $I_\sigma(\St,p)= U(X)-U(X|Y) =0$.
\end{proof}

We should stress again that the above result concerns the difficulty of \emph{analyzing} succinct mechanisms under the simplest possible form of attacker; by no means it entails that the adaptive and non adaptive attackers are equally effective. The following example should clarify that between the two forms of attacker, there can be a huge difference in terms of effectiveness.

\begin{exa}[envelopes]\label{ex:envelopes}
A secret bit $s\in \{b_0,b_1\}$ and  a numbered envelope $e\in\{1,...,N\}$ are drawn according to some distribution $p$. A piece of paper with the value of $s$ written on  it is put into $e$. All other envelopes are filled with a piece of paper revealing the envelope containing the secret, that is $e$. The adversary can choose and open any  of the envelopes and examine its content: that is, actions of this mechanism are envelope numbers.

Assume the envelope is chosen uniformly at random.
Clearly, the obvious adaptive strategy leads the adversary to discover the secret after at most two actions. On the other hand, a non-adaptive, brute-force strategy will lead him to examine one envelope after the other, ignoring the suggestion given in the envelopes opened so far, leading to a strategy of length   $N-1$.
\end{exa}

\subsection{General Systems}
The following results  apply in general,  but they are   particularly significative for  systems with a  moderate
 number of actions. The next theorem essentially says that, up to an expansion factor bounded by $|Act|$, non-adaptive strategies are as efficient as adaptive ones. 
In fact, given any strategy $\sigma$, one can construct a non-adaptive $\sigma'$ that is only moderately larger than $\sigma$ and achieves at least the same leakage, as follows.
In any history induced by $\sigma$, each action can occur at most $l$ times, where $l$ is the length of $\sigma$, and the order in which different actions appear in the history is not relevant as to the final belief that is obtained. For any history of $\sigma$ to be simulated by an history of $\sigma'$, it is therefore enough that the latter offers all the distinct actions offered by $\sigma$, each repeated $l$ times. Note that, for a strategy $\sigma$, the
number of distinct actions that appear in $\sigma$ is $|\range(\sigma)|$.

\begin{thm}\label{th:cinque} For each finite strategy $\sigma$  of length $l$ it is possible to
\fra{Dobbiamo specificare da qualche parte che lavoriamo con sistemi memoryless?}
build a non-adaptive finite strategy $\sigma'$ of length
$|\range(\sigma)|\times  l $, such that $I_{\sigma'}(\St,p) \geq I_\sigma(\St,p)$.
\end{thm}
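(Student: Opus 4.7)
The plan is to exploit two facts: (i) the mechanism is memoryless, so each observation's distribution depends only on the current action and on $X$; and (ii) any single execution of $\sigma$ uses any given action at most $l$ times. Together these suggest letting $\sigma'$ play every action in $\range(\sigma)$ enough times to cover any history $\sigma$ might produce. Concretely, let $k=|\range(\sigma)|$ and enumerate $\range(\sigma)=\{a_1,\ldots,a_k\}$; I define $\sigma'$ as the non-adaptive strategy that plays $a_1$ exactly $l$ times, then $a_2$ exactly $l$ times, and so on up to $a_k$ exactly $l$ times, for a total length of $k\cdot l$, as required.

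I then build a deterministic map $g$ from observation sequences of $\sigma'$ to observation sequences reachable by $\sigma$, that ``simulates'' $\sigma$ using the output of $\sigma'$: walk down $\sigma$'s strategy tree and, whenever $\sigma$ prescribes an action $a$, feed to it the next as-yet-unused observation for $a$ from the $\sigma'$-sequence. Since every root-to-leaf path in $\sigma$ has length at most $l$, each action is requested at most $l$ times, so the reservoir of $l$ observations for $a$ never runs out. The key step is then to verify that $\tilde Y\defin g(Y_{\sigma'})$ has the same joint distribution with $X$ as $Y_\sigma$ does: this is where memorylessness enters, because conditional on $X=x$ the observations in $Y_{\sigma'}$ for any fixed action $a$ are i.i.d.\ samples from $p_a(\cdot|x)$ and are independent across different actions, so the observations picked up by $g$ are distributed exactly as in a genuine execution of $\sigma$ started from $X=x$. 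Consequently $(X,\tilde Y)\sim(X,Y_\sigma)$, and in particular the average uncertainty of $X$ given $\tilde Y$ equals $U_\sigma(X|Y)$.

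To conclude, I would apply a data processing inequality for the concave uncertainty $U$, in the same spirit as Lemma~\ref{lemma:uno}. Since $\tilde Y=g(Y_{\sigma'})$, the posterior $p_{\sigma'}(\cdot|\tilde y)$ is a convex combination of $\{p_{\sigma'}(\cdot|y'): g(y')=\tilde y\}$ with weights $p_{\sigma'}(y')/p_{\sigma'}(\tilde y)$; Jensen's inequality applied to $U$, followed by summation over $\tilde y$, yields $U_\sigma(X|Y)\ge U_{\sigma'}(X|Y)$, and hence $I_{\sigma'}(\St,p)\ge I_\sigma(\St,p)$. The main obstacle I anticipate is the distributional claim for $\tilde Y$: one must keep careful track of how $g$ enumerates observations along different branches of $\sigma$'s tree and verify, using memorylessness, that the coupling $(X,g(Y_{\sigma'}))\sim(X,Y_\sigma)$ indeed holds. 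Once this is in place, the concavity/Jensen step reducing to data processing is routine.
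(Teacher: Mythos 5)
Your proposal is correct and is essentially the paper's own proof in different packaging: the fibers $g^{-1}(y^j)$ of your simulation map are exactly the paper's ``appropriate interleavings'' of $y^j$ with the unused observations $y^{hl-j}$, and your distributional claim $(X,g(Y_{\sigma'}))\sim(X,Y_\sigma)$ is the factorization identity \eqref{eq:decomposition} that the paper establishes from memorylessness. Likewise, your data-processing step is precisely the paper's argument that $p_\sigma(\cdot|y^j)$ is a convex combination of the posteriors $p_{\sigma'}(\cdot|y^{hl-j},y^j)$ followed by Jensen's inequality for the concave $U$, so no gap remains.
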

\begin{proof}
Let $\range(\sigma)=\{a_1,\ldots,a_h\}$ and let $\sigma'$ be any non-adaptive strategy that plays each of
 $a_1,\ldots,a_h$
  for $l$ times, for example,
$\sigma' = [a_1,\ldots,a_h,\ldots,a_1,\ldots,a_h]$ ($l $ times); note
that the
length of $\sigma'$ is $h\times l $, as required.
For any $y^j$ ($j\leq l$), we shall
denote by $\sigma'-y^j$ the non-adaptive strategy of length $h\times l -j$ obtained
by removing from $\sigma'$, seen as a list,   $j$ actions  $b_1,\ldots,b_j$,
where
$b_1=\sigma(\varepsilon),\ldots,b_j=\sigma(y^{j-1})$.

Denote by $Y_\sigma$ and $Y_{\sigma'}$ the  r.v. on $\Y^*$ corresponding to $\sigma$ and $\sigma'$, respectively.
 We will
show
that $U(X|Y_\sigma)\geq U(X|Y_{\sigma'})$.  Take any $x\in \X$ s.t. $p(x)>0$
and $y^j\in \dom(p_\sigma)$. We note that, for any
sequence $y^{hl-j}$, and  for an appropriate interleaving of the two sequences $y^{hl-j}$ and $y^j$
that here we denote  by just  $y^{hl-j},y^j$,
we have that
 \begin{eqnarray}
 p_{\sigma'-y^j}(y^{hl-j}|x)p_{\sigma}(y^j|x)= p_{\sigma'}(y^{hl-j},y^j|x)\,.\label{eq:decomposition}
 \end{eqnarray}

\noindent From \eqref{eq:decomposition},
 it follows that
 \begin{eqnarray}
 p_{\sigma}(y^j|x) & = & \sum_{y^{hl-j}} p_{\sigma'-y^j}(y^{hl-j}|x)p_{\sigma}(y^j|x)  \ \ \ = \ \ \ \sum_{y^{hl-j}} p_{\sigma'}(y^{hl-j},y^j|x)\,.\label{eq:interleave}
 \end{eqnarray}

\noindent Now, for any $x$ and $y^j$ such that $p(x)>0$ and $p_\sigma(y^j)>0$, we have the following.
\begin{eqnarray}
 p_{\sigma}(x|y^j) & = & \frac{ p_{\sigma}(y^j|x)p(x)}{p_\sigma(y^j)}\nonumber\\
 & = & \sum_{y^{hl-j}} p_{\sigma'}(y^{hl-j},y^j|x) \frac{p(x)}{p_\sigma(y^j)}\label{eq:applyInterl}\\
 & = & \sum_{y^{hl-j}} \frac{p_{\sigma'}(x|y^{hl-j},y^j)p_{\sigma'}(y^{hl-j},y^j)}{p(x)}
                        \frac{p(x)}{p_\sigma(y^j)}\nonumber\\
 & = & \sum_{y^{hl-j}}  p_{\sigma'}(x|y^{hl-j},y^j)\frac{p_{\sigma'}(y^{hl-j},y^j)}{p_\sigma(y^j)} \label{eq:convex}
 \end{eqnarray}
\noindent where in the second equality of \eqref{eq:applyInterl} we have applied \eqref{eq:interleave}. It is an easy matter to show that $\sum_{y^{hl-j}}
\frac{p_{\sigma'}(y^{hl-j},y^j)}{p_\sigma(y^j)}=1$ (this is basically a consequence of \eqref{eq:decomposition};
we leave the details to the interested reader.)
Thus \eqref{eq:convex}
shows that $ p_{\sigma}(\cdot|y^j)$
can be expressed as a convex combination of the distributions $p_{\sigma'}(\cdot|y^{hl-j},y^j)$, for
 $y^{hl-j}\in \Y^{hl-j}$.
 Using this fact, the concavity of
$U(\cdot)$ and Jensen's inequality, we arrive at the following.
\begin{eqnarray}
U(p_{\sigma}(\cdot|y^j)) & \geq & \sum_{y^{hl-j}}  U(p_{\sigma'}(\cdot|y^{hl-j},y^j))
                    \frac{p_{\sigma'}(y^{hl-j},y^j)}{p_\sigma(y^j)}\,.\label{eq:convex2}
\end{eqnarray}

\noindent We finally can compute the following lower-bound for $U(X|Y_\sigma)$.
\begin{eqnarray}
U(X|Y_\sigma) & = & \sum_{y^j} p_\sigma(y^j) U(p_{\sigma}(\cdot| y^j))\nonumber\\
& \geq & \sum_{y^j} p_\sigma(y^j) \sum_{y^{hl-j}} \frac{p_{\sigma'}(y^{hl-j},y^j)}{p_\sigma(y^j)}U(p_{\sigma'}(\cdot|y^{hl-j},y^j))\label{eq:last}\\
              & = & \sum_{y^j}   \sum_{y^{hl-j}}  p_{\sigma'}(y^{hl-j},y^j)  U(p_{\sigma'}(\cdot|y^{hl-j},y^j))\nonumber\\
              & = &\sum_{y^{hl}}  p_{\sigma'}(y^{hl})  U(p_{\sigma'}(\cdot|y^{hl})) \ =\ U(X|Y_{\sigma'})\nonumber
\end{eqnarray}

\noindent where the inequality \eqref{eq:last} follows from \eqref{eq:convex2}.
\end{proof}


%
%
%
In deterministic systems, repetitions of the same action  are not relevant: this   leads to the following improved upper bound on the length of the non-adaptive $\sigma'$ that simulates $\sigma$.

\begin{prop}\label{prop:buonddet} If the mechanism $\St$ is deterministic, then the upper-bound
in the previous theorem can be simplified to  $|\range(\sigma)|$.
\end{prop}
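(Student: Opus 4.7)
The plan is to exploit determinism to define a very short non-adaptive strategy: set $\sigma'=[a_1,\ldots,a_h]$, where $\{a_1,\ldots,a_h\}=\range(\sigma)$, so that $\sigma'$ is non-adaptive of length exactly $h=|\range(\sigma)|$. I would then show $U(X|Y_{\sigma'})\leq U(X|Y_\sigma)$, which is equivalent to the desired inequality $I_{\sigma'}(\St,p)\geq I_\sigma(\St,p)$.

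The key observation is that, since $\St$ is deterministic, each action $a$ corresponds to a function $x\mapsto f(x,a)$, so replaying the same action yields no new information; hence each action in $\range(\sigma)$ need be probed just once. Formally, I would introduce the equivalence $\sim$ on $\X$ given by $x_1\sim x_2$ iff $f(x_1,a_k)=f(x_2,a_k)$ for all $k=1,\ldots,h$. The partition induced on (the support of $p$ in) $\X$ by the observation sequence under $\sigma'$ coincides with $\sim$. I would then show, by induction on the depth of the strategy tree of $\sigma$, that whenever $x_1\sim x_2$ the executions of $\sigma$ on $x_1$ and on $x_2$ yield the same terminal observation sequence: indeed, every action $\sigma$ plays along any branch lies in $\range(\sigma)=\{a_1,\ldots,a_h\}$, and the histories controlling $\sigma$'s choices coincide by induction. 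Consequently, the partition induced by $\sigma'$ refines that induced by $\sigma$.

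From the refinement, for each $y^j$ with $p_\sigma(y^j)>0$ there is a set $A_{y^j}\subseteq\Y^h$ such that $\{x: p_\sigma(y^j|x)>0\}$ is the disjoint union over $z^h\in A_{y^j}$ of the sets $\{x: p_{\sigma'}(z^h|x)>0\}$; consequently $p_\sigma(y^j)=\sum_{z^h\in A_{y^j}}p_{\sigma'}(z^h)$ and $p_\sigma(\cdot|y^j)$ is expressible as the convex combination $\sum_{z^h\in A_{y^j}}\frac{p_{\sigma'}(z^h)}{p_\sigma(y^j)}\,p_{\sigma'}(\cdot|z^h)$. At this point, applying concavity of $U(\cdot)$ and Jensen's inequality, then summing over $y^j$ weighted by $p_\sigma(y^j)$ and using that $\{A_{y^j}\}_{y^j}$ partitions the support of $p_{\sigma'}$, one obtains $U(X|Y_\sigma)\geq U(X|Y_{\sigma'})$ exactly by the final chain of displayed inequalities in the proof of Theorem~\ref{th:cinque}.

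I expect the main obstacle to be the careful bookkeeping around the refinement claim: making precise how $A_{y^j}$ is defined from the terminal histories of $\sigma$ and verifying that $\{A_{y^j}\}_{y^j}$ does indeed partition $\supp(p_{\sigma'})$. Once this structural observation is in place, the Jensen-style calculation is essentially a direct transcription of the one already developed for Theorem~\ref{th:cinque}, with the factor $l$ in the length bound disappearing precisely because determinism makes repetitions of each action in $\sigma'$ redundant.
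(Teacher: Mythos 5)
Your proof is correct, but it takes a genuinely different route from the paper's. The paper obtains the bound in two modular steps: it first invokes Theorem~\ref{th:cinque} verbatim to produce the non-adaptive strategy $\sigma'$ of length $|\range(\sigma)|\times l$ in which each action of $\range(\sigma)$ occurs $l$ times, and then shows that in a deterministic mechanism one may delete duplicate occurrences of an action one at a time \emph{without changing} the conditional uncertainty at all: if the two occurrences of $a$ sit at positions $i<j$, then any $y^n$ with $y_i\neq y_j$ has probability zero, and when $y_i=y_j$ the posterior is unchanged by dropping $y_i$, so $U(X|Y_{\sigma'})=U(X|Y_{\sigma''})$ exactly and no second application of Jensen's inequality is needed. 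You instead build the length-$|\range(\sigma)|$ strategy directly and prove a structural refinement claim --- that the partition of $\X$ induced by $\cap_{k}\equiv_{a_k}$ over $\range(\sigma)$ refines the partition into fibres of the (deterministic) map $x\mapsto$ terminal observation sequence of $\sigma$ on $x$ --- and then redo the convex-combination/Jensen computation of Theorem~\ref{th:cinque} against this refinement. Both arguments are sound; yours is more self-contained (it does not route through the length-$hl$ intermediate strategy) and makes the underlying partition-refinement picture explicit, which connects nicely to the discussion after Proposition~\ref{prop:determ}, while the paper's buys a shorter proof by reusing Theorem~\ref{th:cinque} as a black box and keeping the deterministic step an exact equality rather than an inequality. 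The only bookkeeping you flag --- that the sets $A_{y^j}$ partition the positive-probability part of $\Y^h$ --- does go through, since distinct terminal sequences of $\sigma$ have disjoint fibres and every $x$ in the support of $p$ yields exactly one terminal sequence under each strategy.
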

\begin{proof}
Let $\sigma$ be any finite non-adaptive strategy for $\St$. Suppose there is an action $a$ that occurs at least twice in $\sigma$,
seen as a tuple of actions, and let $\sigma_-$ be the non-adaptive strategy obtained by removing
the first occurrence of $a$ from $\sigma$, seen as  a list. Assume   the two   $a$'s occur  at position $i$ and $j$, $i<j$,
of $\sigma$.
 Since $\St$ is deterministic, it is easily seen that, for each $y^n=(y_1,\ldots,y_n)$, if $y_i\neq y_j $ then for each $x$
$p_\sigma(y^n|x)=0$ (as submitting twice the same action $a$ cannot give rise to two different answers $y_i$ and $y_j$), and as a consequence $p_\sigma(y^n)=0$. On the other hand, if $y_i=y_j$, then, denoting  by $y^{n-1}$   the sequence obtained by removing $y_i$
from $y^n$,  for each $x$ we have: $p_\sigma(y^n|x)= p_{\sigma_-}(y^{n-1}|x)$ (as $p(y_i|x)=p(y_j|x)$ is either 0 or 1), and as a consequence $p_\sigma(y^n)=p_{\sigma'}(y^{n-1})$ and
$p_\sigma(x|y^n)=p_{\sigma_-}(x|y^{n-1})$. This implies that $U(X|Y_\sigma)=U(X|Y_{\sigma_-})$. Repeating this elimination step, we can
 eventually  get rid of
 all the duplicates
in   $\sigma$, while preserving the value of $I_\sigma(\St,p)$. Applying this fact to the strategy
$\sigma'$ defined in the proof of Theorem \ref{th:cinque}, we can come up with a strategy $\sigma''$ of length
$|\range(\sigma)|$ such that $I_{\sigma''}(\St,p)= I_{\sigma_-}(\St,p)$.
\end{proof}

\begin{exa}\label{Ex-nonadapt-det}
{
We reconsider   Example  \ref{exDBdet}.
For the adaptive    strategy $\sigma$ defined in Figure  \ref{database3}, we have already shown that, for  $U(\cdot)=H(\cdot)$,   $I_{\sigma}(\St,p)\approx 2.45$.
Consider now the non-adaptive strategy   $\sigma'=[\mathrm{ZIP,Date,Age}]$, \fra{Io sarei per togliere questa constatazione "which is just one action longer.."  sulla differenza di lunghezza fra $\sigma$ e $\sigma'$. Solo in questo caso è così piccola e inoltre non riguarda il teorema applicato.
Forse poi dovremmo dire che qui applichiamo la Proposizione 1?}
which is just one action longer than $\sigma$.  The corresponding attack tree is reported in Figure  \ref{fig:DBprovatutto}: the final partition   obtained with $\sigma'$ is finer than the one obtained with $\sigma$.
In fact,  $I_{\sigma'}(\St,p)= \log{10}-\frac{2}{5}\approx 2.92 > I_{\sigma}(\St,p)$.}
\end{exa}

The  results discussed above are  important from the point of view of the analysis of randomization mechanisms. They entail that, for systems with a moderate number of actions, analyzing adaptive strategies is essentially equivalent to analyzing non-adaptive ones. The latter task can be much easier to accomplish. For example, results on asymptotic rate of convergence of non-adaptive strategies are available (e.g. \cite[Th. IV.3]{BP12}.) They permit  to analytically assess  the resistance of a mechanism  as the length of the considered strategies grows.

\ifmai
The following result, which covers the case of error entropy, is adapted from \cite[Th. IV.3]{BP12}, to which we refer for further details.   Assume $Act=\{a_1,\ldots,a_k\}$, and for each  $c_i\in \X/\equiv$, let $\pi_i\defin \max_{x\in c_i} p(x)$.

\begin{prop}\label{prop:iid}
For each $n\geq 1$, consider the non-adaptive strategy $\sigma^{(n)}=\defin [a_1,\ldots,a_k,\ldots,a_1,\ldots,a_k]$ ($n$ times) and let $(X,Y^{(n)})\sim p_{\sigma^{(n)}}(\cdot)$. Then, there are positive constants $\gamma$ and $\rho$, only depending on the   matrices
$p_a(\cdot|\cdot)$, ($a\in Act$), such that $(1-\sum_i \pi_i)+\gamma 2^{-n\rho}\geq E(X|Y^n)\geq  1-\sum_i \pi_i$.
\end{prop}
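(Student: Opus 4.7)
\textbf{Proof plan for Proposition~\ref{prop:iid}.} The plan is to reduce the statement to an estimate on the Bayes error for classifying the indistinguishability class of $X$, and then to control that error by a Bhattacharyya-type bound that exploits the block-i.i.d.\ structure of $\sigma^{(n)}$. First I would recall that, by definition of $\equiv$, for any $x,x'$ in the same class $c_i$ and any action sequence, $p_{\sigma^{(n)}}(y^{kn}|x)=p_{\sigma^{(n)}}(y^{kn}|x')$; call this common value $p(y^{kn}|c_i)$. A short Bayes-rule calculation then gives, for each $y^{kn}$ with positive probability and each $x\in c_i$, $p_{\sigma^{(n)}}(x|y^{kn}) = p(x)\cdot p(y^{kn}|c_i)/p(y^{kn})$, so that
\[
\max_{x\in\X} p_{\sigma^{(n)}}(x|y^{kn}) \;=\; \frac{\max_i \pi_i\, p(y^{kn}|c_i)}{p(y^{kn})}\,,
\]
and summing over $y^{kn}$ yields the key identity $1 - E(X\,|\,Y^{(n)}) = \sum_{y^{kn}} \max_i \pi_i\, p(y^{kn}|c_i)$.

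The lower bound is immediate from this identity: bounding the pointwise max by the sum gives $1-E(X|Y^{(n)})\le \sum_i \pi_i\sum_{y^{kn}} p(y^{kn}|c_i)=\sum_i \pi_i$, i.e. $E(X|Y^{(n)})\ge 1-\sum_i \pi_i$.

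For the upper bound I would write the gap as
\[
\sum_i \pi_i - \bigl(1 - E(X|Y^{(n)})\bigr)
\;=\; \sum_{y^{kn}} \Bigl(\sum_i \pi_i p(y^{kn}|c_i) - \max_i \pi_i p(y^{kn}|c_i)\Bigr)
\;\le\; \sum_{i<j}\sum_{y^{kn}} \min\bigl(\pi_i p(y^{kn}|c_i),\pi_j p(y^{kn}|c_j)\bigr)\,,
\]
using the standard trick $\pi_i p(y^{kn}|c_i)\le\min(\pi_i p(y^{kn}|c_i),\pi_{i^\star}p(y^{kn}|c_{i^\star}))$ whenever $i\neq i^\star(y^{kn})$. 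Applying $\min(a,b)\le\sqrt{ab}$ and pulling $\sqrt{\pi_i\pi_j}$ out, each inner sum becomes the Bhattacharyya coefficient $BC_n(i,j)\defeq\sum_{y^{kn}}\sqrt{p(y^{kn}|c_i)p(y^{kn}|c_j)}$.

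The decisive step, and the one where the block-i.i.d.\ structure of $\sigma^{(n)}$ is used, is to factor $BC_n(i,j)$. Since $\sigma^{(n)}$ plays each action $a_\ell$ exactly $n$ times and the mechanism is stateless, $p(y^{kn}|c_i)$ factors as a product of $k$ blocks, one per action, each block being i.i.d.\ of length $n$; this gives $BC_n(i,j)=\prod_{\ell=1}^k \bigl(\sum_y\sqrt{p_{a_\ell}(y|c_i)p_{a_\ell}(y|c_j)}\bigr)^{n} = \beta_{i,j}^{\,n}$ where $\beta_{i,j}\defeq\prod_\ell \sum_y\sqrt{p_{a_\ell}(y|c_i)p_{a_\ell}(y|c_j)}$. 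Because $c_i\neq c_j$, there exists some $a_\ell$ with $p_{a_\ell}(\cdot|c_i)\neq p_{a_\ell}(\cdot|c_j)$, and for that $\ell$ the Cauchy--Schwarz inequality is strict, so $\beta_{i,j}<1$. Setting $\beta^\star\defeq\max_{i\ne j}\beta_{i,j}<1$, $\gamma\defeq\sum_{i<j}\sqrt{\pi_i\pi_j}$ and $\rho\defeq-\log_2\beta^\star>0$ (both depending only on the matrices $p_a(\cdot|\cdot)$ and the prior), the chain of inequalities yields $\sum_i\pi_i-(1-E(X|Y^{(n)}))\le\gamma\,2^{-n\rho}$, which rearranges to the claimed upper bound.

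The only delicate point I anticipate is the pairwise reduction $\sum_{i\ne i^\star}\pi_i p(y^{kn}|c_i)\le\sum_{i<j}\min(\pi_i p(y^{kn}|c_i),\pi_j p(y^{kn}|c_j))$: one must be careful that each non-winner term is charged to the pair $(i,i^\star(y^{kn}))$ without double-counting, but since the right-hand side sums over \emph{all} unordered pairs it dominates termwise. Everything else is a routine combination of Bayes' rule, the $\min\le\sqrt{\cdot}$ bound, and the factorization of block-i.i.d.\ likelihoods.
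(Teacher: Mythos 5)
Your proof is correct, but note that the paper contains no proof of this proposition at all: the statement is presented as an adaptation of \cite[Th.~IV.3]{BP12}, to which the reader is referred for details, and the closest machinery actually developed in the paper is the method-of-types argument used for Theorem~\ref{th:due} in Appendix~\ref{app:proof} (typicality balls, KL divergence, Pinsker's inequality, following \cite{CT06}). Your route is genuinely different and, for the error entropy specifically, arguably cleaner: you reduce $1-E(X\,|\,Y^{(n)})$ to the Bayes success probability $\sum_{y^{kn}}\max_i \pi_i\,p(y^{kn}|c_i)$ (the collapse onto indistinguishability classes via Lemma~\ref{lemma:due} is exactly right), obtain the lower bound for free, and control the gap by pairwise Bhattacharyya coefficients, which factor into $\beta_{i,j}^{\,n}$ thanks to the block-i.i.d.\ structure of $\sigma^{(n)}$ and are strictly below $1$ because distinct classes are separated by at least one action, making Cauchy--Schwarz strict there. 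The type-theoretic approach buys generality --- it works for any continuous concave $U$ and is what the paper needs for Theorem~\ref{th:due} --- at the cost of polynomial prefactors of the form $(n/k+1)^{k|\Y|}$; your Bhattacharyya bound is tailored to $E(\cdot)$ but yields a clean geometric rate with a constant prefactor, and the steps you flag as delicate (charging each non-winner $i$ to the pair $\{i,i^\star\}$ without double counting, the $\min\leq\sqrt{\cdot}$ bound, the factorization of the likelihood) are all sound. The only blemish is that your $\gamma=\sum_{i<j}\sqrt{\pi_i\pi_j}$ depends on the prior through the $\pi_i$, whereas the statement asks for constants depending only on the matrices; since $\sqrt{\pi_i\pi_j}\leq 1$, replacing $\gamma$ by $K(K-1)/2$ repairs this at no cost.
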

\fi
\section{Maximum Leakage}\label{sec:rif}
In this section we show that the class of adaptive and non adaptive strategies induce the same maximum leakage, where the maximum is taken over all strategies. For truly probabilistic mechanisms, strategies achieving maximum leakage are in general infinite. A key notion is that of indistinguishability:  an equivalence relation over $\X$ s.t. \fra{Avrei aggiunto questa scritta in rosso e tolto "equivalence" dalla Def per guadagnare un rigo}
$x$ and $x'$ are indistinguishable if,  no matter what strategy the adversary will play, he cannot tell them apart.

\begin{defi}[Indistinguishability]\label{def:indist} We define the following  equivalence relation
over $\X$:
$$x \equiv x' \quad \text{ iff } \  \text{ for each finite } \sigma:\, p_\sigma(\cdot|x) =  p_\sigma(\cdot|x')\,.$$
\end{defi}

Despite being based on a universal quantification over all finite strategies, indistinguishability is in fact quite easy to characterize, also computationally. For each $a\in Act$, consider the equivalence relation  defined by $x \equiv_{a} x'$ iff $p_a(\cdot|x)=p_a(\cdot|x')$. We have the following result (see the Appendix for a proof.)

\begin{lem}\label{lemma:due}
$x \equiv x'$ iff for each $a\in Act$, $p_a(\cdot|x) = p_a(\cdot|x')$. In other words, $\equiv$ is $\cap_{a\in Act}\equiv_a$.
\end{lem}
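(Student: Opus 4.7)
\medskip

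\noindent\textbf{Proof plan for Lemma~\ref{lemma:due}.}
The plan is a straightforward, essentially two-line argument that exploits the product form of $p_\sigma(y^n\mid x)$ given in \eqref{eq:probspace}, together with the observation that the actions played along a history depend only on the observations, not on the secret.

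For the forward implication, I would just instantiate Definition~\ref{def:indist} at the single-action strategies. For each fixed $a\in Act$, let $\sigma=[a]$, so that $\dom(\sigma)=\{\varepsilon\}$ and, by the convention introduced right after \eqref{eq:probspace}, $p_{[a]}(y\mid x)=p_a(y\mid x)$ for every $x\in\X$, $y\in\Y$. Hence $x\equiv x'$ immediately yields $p_a(\cdot\mid x)=p_a(\cdot\mid x')$ for every action $a$.

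For the converse, assume that $p_a(\cdot\mid x)=p_a(\cdot\mid x')$ for every $a\in Act$, and fix any finite strategy $\sigma$ and any $y^n$ in the support of $p_\sigma(\cdot\mid x)\cup p_\sigma(\cdot\mid x')$. The key observation is that the sequence of actions $a_1,\ldots,a_n$ played along $y^n$ is given by $a_i=\sigma(y^{i-1})$, which depends only on $\sigma$ and on the observation prefix $y^{i-1}$, \emph{not} on the secret. Using the product formula \eqref{eq:probspace}, we therefore have
\[
p_\sigma(y^n\mid x)\;=\;\prod_{i=1}^{n}p_{a_i}(y_i\mid x)\;=\;\prod_{i=1}^{n}p_{a_i}(y_i\mid x')\;=\;p_\sigma(y^n\mid x'),
\]
where the middle equality is the hypothesis applied factor-by-factor. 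Since this holds for every $y^n$, we conclude $p_\sigma(\cdot\mid x)=p_\sigma(\cdot\mid x')$, and therefore $x\equiv x'$.

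I do not foresee a real obstacle: the content of the lemma is precisely that adaptivity does not enlarge the distinguishing power of strategies, which is transparent from \eqref{eq:probspace} once one notices that the choice of actions is a function of the observations only. The only small care needed is to handle the out-of-domain case in \eqref{eq:probspace}, but both $p_\sigma(y^n\mid x)$ and $p_\sigma(y^n\mid x')$ fall into the same case since domain membership depends only on $y^n$ and $\sigma$.
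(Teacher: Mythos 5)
Your proposal is correct and follows essentially the same route as the paper's proof: the forward direction by instantiating the definition at length-one strategies $[a]$, and the converse by expanding $p_\sigma(y^n\mid x)$ via the product formula \eqref{eq:probspace} and noting that the actions $a_i=\sigma(y^{i-1})$ depend only on the observation prefix, so the hypothesis can be applied factor by factor. Your remark that the out-of-domain case of \eqref{eq:probspace} is determined by $y^n$ and $\sigma$ alone (hence both sides vanish together) is a small point the paper leaves implicit, but it does not change the argument.
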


Now, consider $\X/\equiv$, the set of equivalence classes of $\equiv$, and let  $c$  ranges over this set.
Let $[X]$ be the r.v. whose outcome is the equivalence class of $X$ according
to $\equiv$. Note that $p(c)\defin \Pr([X]=c)=\sum_{x\in c} p(x)$. We consistently extend   our
$I$-notation by defining
$$ U(X\,|\,[X]) \defin  \sum_c p(c) U(p(\cdot|\,[X]=c)) \ \ \mbox{ and }\ \  I(X\, ;\,[X]) \defin  U(X)-U(X\,|\,[X])\,.$$
\noindent More explicitly, $p(\cdot|[X]=c)$ denotes the
distribution over $\X$ that yields $p(x)/p(c)$ for $x\in c$ and 0 elsewhere; we will often abbreviate $p(\cdot|[X]=c)$ just
as $p(\cdot|c)$. Note that $I(X\, ;\,[X])$ expresses the information gain about $X$ when the
attacker gets to know the indistinguishability class of the secret. As expected, this is an upper-bound to the information that can be gained by playing any strategy.


\begin{thm}\label{th:uno}
$I_\star(\St,p) \leq I(X\,;\,[X])$.
\end{thm}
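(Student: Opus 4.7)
The plan is to reduce the statement to an application of Jensen's inequality, exploiting the fact that, by the characterization in Lemma \ref{lemma:due}, the entire likelihood $p_\sigma(y^n|x)$ depends on $x$ only through its equivalence class $[x]$. First I would handle finite strategies $\sigma$; the extension to infinite ones will follow by taking $\sigma\backslash l$ and invoking the definition of $I_\sigma$ as a limit.

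Fix a finite strategy $\sigma$. The key observation is that for each $x \in c$, where $c \in \X/\equiv$, the conditional $p_\sigma(y^n|x)$ is the same for all $x\in c$, because by Lemma \ref{lemma:due} the single-action matrix rows $p_a(\cdot|x)$ coincide on each class, and $p_\sigma(y^n|x)$ is a product of such row entries (see the definition in equation \eqref{eq:probspace}). Writing this common value as $p_\sigma(y^n|c)$ and applying Bayes' rule, for any $y^n$ with $p_\sigma(y^n)>0$ and any $x\in c$ one gets
\[
p_\sigma(x|y^n) \;=\; \frac{p(x)\,p_\sigma(y^n|c)}{p_\sigma(y^n)} \;=\; \frac{p(x)}{p(c)}\cdot\frac{p(c)\,p_\sigma(y^n|c)}{p_\sigma(y^n)} \;=\; p(x|c)\cdot \lambda_c(y^n),
\]
where $\lambda_c(y^n)\defin p_\sigma(c|y^n)$. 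Since $p(\cdot|c)$ is supported on $c$ and the classes partition $\X$, this exhibits the posterior as a convex combination $p_\sigma(\cdot|y^n) = \sum_c \lambda_c(y^n)\, p(\cdot|c)$, with $\sum_c \lambda_c(y^n)=1$.

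Now concavity of $U(\cdot)$ and Jensen's inequality yield $U(p_\sigma(\cdot|y^n)) \geq \sum_c \lambda_c(y^n)\, U(p(\cdot|c))$. Multiplying by $p_\sigma(y^n)$ and summing over $y^n$, and using that $p_\sigma(y^n)\lambda_c(y^n)=p_\sigma(c,y^n)$ with $\sum_{y^n}p_\sigma(c,y^n)=p(c)$, one obtains
\[
U_\sigma(X|Y) \;\geq\; \sum_c U(p(\cdot|c)) \sum_{y^n} p_\sigma(c,y^n) \;=\; \sum_c p(c)\, U(p(\cdot|c)) \;=\; U(X|[X]).
\]
Subtracting both sides from $U(X)$ gives $I_\sigma(\St,p)\leq I(X;[X])$ for every finite $\sigma$. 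For an infinite strategy $\sigma$, each truncation $\sigma\backslash l$ satisfies this bound, so passing to the limit in Definition \ref{def:qifAdaptive}(2) yields the same inequality. Taking the supremum over all strategies then gives $I_\star(\St,p)\leq I(X;[X])$.

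The only delicate step is the first one: verifying that $p_\sigma(y^n|x)$ really does factor through $[x]$ even when $\sigma$ is adaptive, since then the action $a_i=\sigma(y^{i-1})$ depends on the prefix of observations but not on $x$, so the product structure of $p_\sigma(y^n|x)$ involves only single-action probabilities $p_{a_i}(y_i|x)$, which agree on each class by Lemma \ref{lemma:due}. Once this is in hand, the remainder is a routine Jensen argument, entirely analogous in spirit to the one used in Theorem \ref{th:cinque}.
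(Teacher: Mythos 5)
Your proof is correct and follows essentially the same route as the paper's: both decompose the posterior $p_\sigma(\cdot|y^n)$ as a convex combination of the class-conditional distributions $p(\cdot|c)$ with weights $p_\sigma(c|y^n)$, and then apply concavity of $U$ via Jensen's inequality before summing over $y^n$. Your version is slightly more explicit about why the likelihood factors through the indistinguishability class even for adaptive strategies, and about the passage to infinite strategies, but these are details the paper treats implicitly.
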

\begin{proof}
 Fix any finite strategy $\sigma$ and prior $p(\cdot)$. It is sufficient to prove that $U(X|Y)\geq U(X\,|\,[X])$. The proof  exploits the concavity of $U$. First, we note that, for each $x$ and $y^j$ of nonzero probability we have ($c$ below ranges over $\X/\equiv$):
\begin{eqnarray}
p_\sigma(x|y^j) & = & \sum_c \frac{p_\sigma(x,y^j,c)}{p_\sigma(y^j)}\ = \ \sum_c p_\sigma(c|y^j) p_\sigma(x|y^j,c)\,.\label{eq:decomp}
 \end{eqnarray}
\noindent
By \eqref{eq:decomp}, concavity of $U(\cdot)$  and Jensen's inequality
\begin{eqnarray}
U(p(\cdot|y^j)) & \geq & \sum_c p_\sigma(c|y^j) U(p_\sigma(\cdot|y^j,c))\,.
\label{eq:J}
\end{eqnarray}
\noindent
Now, we can compute as follows (as usual, $y^j$ below runs over sequences of nonzero probability):
\begin{eqnarray}
U(X|Y) & = & \sum_{y^j} p_\sigma(y^j)U(p_\sigma(\cdot|y^j)) \ \ \geq \ \ \sum_{y^j,c}p_\sigma(y^j)p_\sigma(c|y^j)U(p_\sigma(\cdot|y^j,c))\label{eq:J2}\\
& = & \sum_{y^j,c}p_\sigma(y^j)p_\sigma(c|y^j)U(p(\cdot|c)) \ \ = \ \ \sum_c \left(\sum_{y^j}  p_\sigma(y^j,c)\right) U(p(\cdot|c))\label{eq:3}\\  
& = & \sum_c p(c) U(p(\cdot|c)) \ \ = \ \ U(X\,|\,[X])\nonumber
\end{eqnarray}
\noindent
where: \eqref{eq:J2} is justified by \eqref{eq:J}; and
the first equality in \eqref{eq:3} follows from the fact that, for each $x$, $p_\sigma(x|y^j,c)= p(x|c)$ (once the equivalence class of the secret is known, the observation $y^j$ provides no further information about the secret.)
\end{proof}

As to the maximal achievable information, we start our discussion from deterministic mechanism.
\begin{prop}\label{prop:determ} Let $\St$ be deterministic. Let $\sigma=[a_1,\ldots,a_k]$ be a   non-adaptive strategy that  plays all actions in $Act$ once. Then $I_\star(\St,p)=I_\sigma(\St,p)$.
\end{prop}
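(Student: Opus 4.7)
The plan is to use Theorem~\ref{th:uno} as an upper bound and show that the particular non-adaptive strategy $\sigma$ actually attains the bound $I(X;[X])$. By definition of the supremum we trivially have $I_\sigma(\St,p)\leq I_\star(\St,p)$, and Theorem~\ref{th:uno} gives $I_\star(\St,p)\leq I(X;[X])$, so it suffices to establish $I_\sigma(\St,p)\geq I(X;[X])$, which will force equality throughout.

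The key observation is that, since $\St$ is deterministic, each matrix $M_{a_i}$ corresponds to a function $x\mapsto f(x,a_i)$, so under the prior $p(\cdot)$ the observation produced by $\sigma$ on input $x$ is, almost surely, the deterministic tuple $Y=(f(x,a_1),\ldots,f(x,a_k))$. Hence two secrets $x$ and $x'$ yield the same $Y$ (with nonzero probability) iff $f(x,a)=f(x',a)$ for all $a\in Act$, that is, iff $p_a(\cdot|x)=p_a(\cdot|x')$ for every $a$. By Lemma~\ref{lemma:due} this is exactly the condition $x\equiv x'$. Thus the random variable $Y$ under $\sigma$ is in bijection with $[X]$: knowing $Y$ is equivalent to knowing the equivalence class of $X$.

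From this bijection it follows that, for any $y^k$ of nonzero probability corresponding to some class $c\in\X/\equiv$, the posterior $p_\sigma(\cdot\,|\,y^k)$ coincides with $p(\cdot\,|\,c)$, and the marginal $p_\sigma(y^k)$ equals $p(c)$. Substituting into the definition \eqref{eq:conditionaluncert} of conditional uncertainty gives
\[
U_\sigma(X|Y)\;=\;\sum_{c}p(c)\,U\bigl(p(\cdot\,|\,c)\bigr)\;=\;U(X\,|\,[X]),
\]
and hence $I_\sigma(\St,p)=U(X)-U_\sigma(X|Y)=I(X;[X])$. Chaining with Theorem~\ref{th:uno} one obtains $I(X;[X])=I_\sigma(\St,p)\leq I_\star(\St,p)\leq I(X;[X])$, so all the inequalities are equalities and in particular $I_\star(\St,p)=I_\sigma(\St,p)$.

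The only mildly delicate point is the bijection between the observation $Y$ and $[X]$: one must check both directions, using Lemma~\ref{lemma:due} to pass from pointwise agreement of the deterministic responses to the full indistinguishability relation $\equiv$. Everything else is a direct calculation from the definitions.
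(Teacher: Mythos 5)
Your proof is correct and follows essentially the same route as the paper's: both arguments establish that, by determinism and Lemma~\ref{lemma:due}, the observation sequences of nonzero probability are in bijection with the classes of $\equiv$, deduce $p_\sigma(\cdot|y^k_c)=p(\cdot|c)$ and $p_\sigma(y^k_c)=p(c)$, conclude $U_\sigma(X|Y)=U(X\,|\,[X])$, and then invoke Theorem~\ref{th:uno} to pin down the supremum. The only difference is presentational: you make the chaining with Theorem~\ref{th:uno} explicit, whereas the paper leaves it implicit.
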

\proof
Let $(X,Y)\sim p_\sigma(\cdot)$. We prove that $U(X\,|\,Y) = U(X\,|\,[X])$. We first note that for each $c\in \X/\equiv$ there is exactly one sequence $y^k_c$   s.t. $p_\sigma(y^k_c|c)=1$: this follows from $\St$ being deterministic. Moreover, if $c\neq c'$ then $y^k_c\neq y^{k}_{c'}$: otherwise, it would follow  that $p_{a_i}(y|c)=p_{a_i}(y|c')$ for each $a_i\in Act$ and $y\in \Y$, contrary to Lemma \ref{lemma:due} (note that $p(\cdot|c)$ is the same as $p(\cdot|x)$, for any $x\in c$.) These facts can be used to show, through easy manipulations,    that $p(x|y^k_c)=p(x|c)$ for each $x$.
As a consequence, one can   compute as follows.
\begin{eqnarray*}
U(X|Y) & = & \sum_{y^k} p_\sigma(y^k)U(p_\sigma(\cdot|y^k))\\
       & = & \sum_c p(c) \sum_{y^k} p_\sigma(y^k|c) U(p_\sigma(\cdot|y^k))\\
       & = & \sum_c p(c)  U(p_\sigma(\cdot|y^k_c))\\
       & = & \sum_c p(c)  U(p_\sigma(\cdot|c))\\
       & = & U(X\,|\,[X])\,.\rlap{\hbox to 201 pt{\hfill\qEd}}
\end{eqnarray*}

\noindent Hence, in the deterministic case, the maximal gain in information is obtained by a trivial brute-force strategy where all actions are played  in any fixed order. It is instructive to observe such a strategy at work, under the form of an attack tree. The supports of the distributions that are at the same level constitute a partition of $\X$: more precisely,  the partition    at level $i$ ($1\leq i\leq k$) is given by the equivalence classes of  the relation $\cap_{j=1}^i \equiv_{a_j}$.
An example of this fact   is illustrated by the attack tree in Figure  \ref{fig:DBprovatutto}, relative to the non-adaptive strategy $[\mathrm{ZIP, Date, Age}]$ for the mechanism in Example  \ref{exDBdet}.
This fact had been already observed in \cite{KB07} for the restricted model considered there. Indeed,  one would obtain the model of \cite{KB07} by stripping the probabilities off the tree in Figure  \ref{fig:DBprovatutto}.

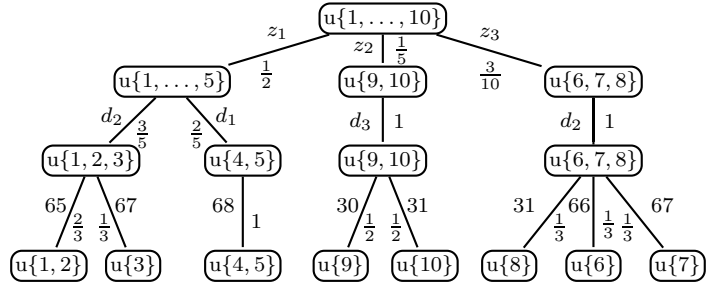
\begin{figure}[t]
\centering
\begin{tikzpicture}[-,>=stealth',shorten >=1pt,auto,node distance=2cm, thick, edge from parent path={(\tikzparentnode) -- (\tikzchildnode)}, scale=0.7]
  \tikzstyle{every state}=[rectangle,rounded corners, inner sep=1.5pt,minimum size=.4cm]
  \tikzstyle{level 1}=[level distance=1.2cm, sibling distance=4cm]
\tikzstyle{level 2}=[level distance=1.5cm, sibling distance=1.2cm]
\tikzstyle{level 3}=[level distance=2cm, sibling distance=1.6cm]
\node[state]{{\scriptsize $\unif \{1,\ldots,10\}$}}
 child
  {
  node [state]{{\scriptsize $\unif \{1,\ldots,5\}$ }}
     child {
     node [state, left=0.1pt] {{\scriptsize $\unif \{1,2,3\}$ }}
     child{ node [state] {{\scriptsize $\unif \{1,2\}$ }}
     edge from parent
          node [above=3pt, left=-3pt] {\scriptsize $65$}
          node[right=3pt, below=-4pt]  {\scriptsize  $\frac{2}{3}$} }
     child{ node [state] {{\scriptsize $\unif \{3\}$ }}
     edge from parent
          node [above=3pt, right=-3pt] {\scriptsize $67$}
          node[left=3pt, below=-4pt]  {\scriptsize  $\frac{1}{3}$} }
             edge from parent
                node[above=2pt, left=-1pt] {\scriptsize $d_2$}
                 node[below=5pt, right=-3pt]  {\scriptsize  $\frac{3}{5}$} }
      child { node [state, right=0.1pt] {{\scriptsize $\unif \{4,5\}$ }}
      child { node [state] {{\scriptsize $\unif \{4,5\}$ }}
             edge from parent
                node[above=3pt, left=-1pt] {\scriptsize $68$}
                 node[below=3pt,  right=-1pt]  {\scriptsize  $1$} }
             edge from parent
                node[above=2pt, right=-1pt] {\scriptsize $d_1$}
                 node[below=5pt, left=-3pt]  {\scriptsize  $\frac{2}{5}$} }
     edge from parent
                node[above] {\scriptsize $z_1$}
                node[below right]  {\scriptsize $\!\!\!\!\!\!\!\!\frac{1}{2}$}
   }
 child
  { node [state]    {{\scriptsize $\unif \{9,10\}$}}
     child {
     node [state] {{\scriptsize $\unif \{9,10\}$ }}
     child{ node [state] {{\scriptsize $\unif \{9\}$ }}
     edge from parent
          node [above=3pt, left=-3pt] {\scriptsize $30$}
          node[right=3pt, below=-4pt]  {\scriptsize  $\frac{1}{2}$} }
     child{ node [state] {{\scriptsize $\unif \{10\}$ }}
     edge from parent
          node [above=3pt, right=-3pt] {\scriptsize $31$}
          node[left=3pt, below=-4pt]  {\scriptsize  $\frac{1}{2}$} }
             edge from parent
                node[left] {\scriptsize $d_3$}
                 node[right]  {\scriptsize  $1$} }
    edge from parent
                node[above= 1pt, left=-1pt] {\scriptsize $z_2$}
                node[right]  {\scriptsize  $\frac{1}{5}$}
     {} }
  child
  { node [state]    {{\scriptsize $\unif \{6,7,8\}$}}
    child {
    node[state] {{\scriptsize $\unif \{6,7,8\}$ }}
    child{ node [state] {{\scriptsize $\unif \{8\}$ }}
     edge from parent
          node [above=3pt, left=-1pt] {\scriptsize $31\ $}
          node[right=3pt, below=-4pt]  {\scriptsize  $\frac{1}{3}$} }
     child{ node [state] {{\scriptsize $\unif \{6\}$ }}
     edge from parent
          node [above=3pt, left=-3pt] {\scriptsize $66$}
          node[below=-5pt]  {\scriptsize  $\quad\ \frac{1}{3}$} }
     child{ node [state] {{\scriptsize $\unif \{7\}$ }}
     edge from parent
          node [above=3pt, right=-1pt] {\scriptsize $\ 67$}
          node[left=3pt, below=-4pt]  {\scriptsize  $\frac{1}{3}$} }
             edge from parent
                edge from parent
                node[left] {\scriptsize $d_2$}
                 node[right]  {\scriptsize  $1$} }
  edge from parent
                node[above] {\scriptsize $z_3$}
                node[below]  {\scriptsize $\frac{3}{10}$}
    {}  };
\end{tikzpicture}
\caption{The attack tree corresponding to the the non-adaptive strategy $[\mathrm{ZIP, Date, Age}]$ for Example  \ref{exDBdet}.}
\label{fig:DBprovatutto}
\end{figure}

The general  probabilistic case
  is    more complicated. Essentially, any non-adaptive strategy where each action is played infinitely often achieves the maximum information gain. The next theorem considers one such strategy; the proof of this result is reported in Appendix \ref{app:proof}. 

\begin{thm}\label{th:due} There is a total, non-adaptive strategy $\sigma$ s.t. $I_\sigma(\St,p)=
 I(X;[X])$. Consequently, $I_\star(\St,p)=  I(X\,;\,[X])$.
\end{thm}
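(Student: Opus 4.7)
By Theorem~\ref{th:uno}, $I_\sigma(\St,p)\leq I(X;[X])$ for every finite $\sigma$, and hence $I_\sigma(\St,p)\leq I(X;[X])$ for every (possibly infinite) $\sigma$ by passing to the limit. So it suffices to exhibit a total non-adaptive strategy $\sigma$ achieving the upper bound. Let $Act=\{a_1,\dots,a_k\}$ and define $\sigma=[a_1,\dots,a_k,a_1,\dots,a_k,\dots]$, i.e.\ the round-robin strategy that plays each action infinitely often. Let $\sigma_n\defin\sigma\backslash n$ and let $(X,Y_n)\sim p_{\sigma_n}(\cdot)$. I will show $I_{\sigma_n}(\St,p)\to I(X;[X])$, which gives the claim.

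The core step is a Bayesian consistency argument: conditional on the true secret being $x$, the posterior $p_{\sigma_n}(\cdot\mid Y_n)$ converges almost surely, as $n\to\infty$, to $p(\cdot\mid [X]=[x])$. For any $x'\equiv x$, Lemma~\ref{lemma:due} gives $p_a(\cdot\mid x)=p_a(\cdot\mid x')$ for every $a\in Act$, so the likelihood ratio $p_{\sigma_n}(Y_n\mid x')/p_{\sigma_n}(Y_n\mid x)$ equals $1$ for every $n$ and every realization of $Y_n$. For any $x'\not\equiv x$, Lemma~\ref{lemma:due} yields an action $a\in Act$ with $p_a(\cdot\mid x)\neq p_a(\cdot\mid x')$; because $\sigma$ plays $a$ infinitely often and the single-action observations are conditionally independent given $X$, the log-likelihood ratio $\log\bigl(p_{\sigma_n}(Y_n\mid x')/p_{\sigma_n}(Y_n\mid x)\bigr)$ decomposes into $k$ sums of i.i.d.\ terms, one per action; the contribution of $a$ has expectation $-D\bigl(p_a(\cdot\mid x)\,\|\,p_a(\cdot\mid x')\bigr)<0$, so by the strong law of large numbers (or, when the support of $p_a(\cdot\mid x)$ is not contained in that of $p_a(\cdot\mid x')$, because the ratio is eventually $0$ a.s.), the log-likelihood ratio tends to $-\infty$ a.s.\ under $p_{\sigma}(\cdot\mid x)$. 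Applying Bayes' rule, $p_{\sigma_n}(x'\mid Y_n)\to 0$ a.s.\ for $x'\not\equiv x$, while for $x'\equiv x$ the constant likelihood ratio combined with mass escaping from $[x]^c$ yields $p_{\sigma_n}(x'\mid Y_n)\to p(x')/p([x])$ a.s. Thus $p_{\sigma_n}(\cdot\mid Y_n)\to p(\cdot\mid [X])$ almost surely.

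To pass from almost sure convergence of posteriors to convergence of uncertainties, I use that $U$ is continuous on the compact probability simplex $\P$, hence bounded and uniformly continuous; this is precisely where the continuity hypothesis in Definition~\ref{def:uncert} comes in. By continuity, $U\bigl(p_{\sigma_n}(\cdot\mid Y_n)\bigr)\to U\bigl(p(\cdot\mid [X])\bigr)$ a.s., and by the bounded convergence theorem,
\[
U_{\sigma_n}(X\mid Y_n)=E\bigl[U(p_{\sigma_n}(\cdot\mid Y_n))\bigr]\;\longrightarrow\;E\bigl[U(p(\cdot\mid [X]))\bigr]=U(X\mid [X]).
\]
Subtracting from $U(X)$ gives $I_{\sigma_n}(\St,p)\to I(X;[X])$, and by Definition~\ref{def:qifAdaptive}(2) the left-hand side equals $I_\sigma(\St,p)$ in the limit; combined with Theorem~\ref{th:uno}, $I_\sigma(\St,p)=I_\star(\St,p)=I(X;[X])$.

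The hard part is the posterior consistency step: one has to handle both the ``easy'' case of strictly positive likelihood ratios (LLN applied to log-likelihoods with negative KL mean) and the degenerate case where some observation has probability zero under the wrong hypothesis but positive under the true one (where the ratio collapses to $0$ in finitely many steps almost surely). Once this is settled, the continuity of $U$ and bounded convergence make the rest routine.
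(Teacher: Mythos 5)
Your proposal is correct, and it reaches the conclusion by a genuinely different route from the paper. The paper also plays the round-robin strategy $\sigma=[a_1,\dots,a_k,a_1,\dots]$ and also closes the argument with Theorem~\ref{th:uno} plus continuity of $U$, but its core convergence step is carried out with the information-theoretic method of types: it introduces the sets $\UU^{(n)}(x,\varepsilon)$ of sequences whose per-action empirical types lie within KL-distance $\varepsilon$ of the true conditionals, proves an explicit exponential concentration bound $p_{\sigma_n}(\UU^{(n)}(x,\varepsilon)\,|\,x)\geq 1-2^{-\frac nk\varepsilon}(\frac nk+1)^{k|\Y|}C$, separates the balls of non-equivalent secrets via Pinsker's inequality, derives from this a quantitative lower bound on the averaged posterior $q^n_i(x)$, and finally extracts a convergent subsequence by Bolzano--Weierstrass, pinning down the limit by the constraint that it must be a probability distribution. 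It also only ever works with finite-$n$ distributions, using monotonicity of $I_{\sigma\backslash l}$ in $l$ to guarantee the limit exists, and it upper-bounds $U(X|Y^n)$ by $\sum_i p(c_i)U(q^n_i)$ via one extra application of Jensen rather than proving pointwise posterior convergence. Your argument replaces all of this with classical Bayesian posterior consistency: the likelihood ratio is identically $1$ within an indistinguishability class, and for $x'\not\equiv x$ the log-likelihood ratio is a sum of per-action i.i.d.\ blocks with strictly negative total KL drift, so it tends to $-\infty$ almost surely (with the zero-likelihood degenerate case handled separately, as you note); bounded convergence then finishes the job since $U$ is continuous on the compact simplex. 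Your route is shorter and arguably more transparent, and it yields genuine a.s.\ convergence of the posterior rather than subsequential convergence of an averaged posterior; what it costs is (i) the need to set up the infinite product measure on $\Y^{\mathbb N}$ given $X=x$ so that ``almost surely'' makes sense (the paper deliberately avoids this by staying at finite $n$), and (ii) the loss of the explicit exponential rates that the type-counting bounds provide, which could otherwise be used to quantify how fast $I_{\sigma_n}(\St,p)$ approaches $I(X;[X])$.
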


Of course, as shown in the preceding section,  finite adaptive strategies can be more efficient in terms of length by a factor of $|Act|$ when compared with non-adaptive ones.
Concerning capacity, we do not have a general formula for the maximizing distribution. In what follows, we limit our discussion to  two important cases for $U(\cdot)$, Shannon entropy and error entropy. In both cases, capacity only depends on the number $K$ of indistinguishability classes.
\mik{Controllare congettura G.E.}
For guessing entropy, we conjecture that $C(\St)=\frac{K-1}2$, but at the moment a proof of this fact escapes us. 

\begin{thm}\label{th:tre} The following formulae holds, where $K=|\X/\equiv|$.
\begin{itemize}
\item For $U=H$ (Shannon entropy), $C(\St)=\log K$.
\item For $U=E$ (Error entropy), $C(\St)=1-\frac{1}{K}$.
\end{itemize}
\end{thm}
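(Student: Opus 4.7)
The plan is to use Theorem~\ref{th:due} to rewrite $I_\star(\St,p) = I(X;[X]) = U(X) - U(X \mid [X])$, so that computing $C(\St)$ reduces to maximizing $I(X;[X])$ over all priors $p$ on $\X$. Since the adversary can freely choose $p$, we can in particular place the mass inside each class $c \in \X/\equiv$ in any way we like, and can freely choose the induced distribution $p(c) = \sum_{x\in c} p(x)$ on $\X/\equiv$. Both cases then become elementary optimizations.

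For Shannon entropy, I would invoke the chain rule: because $[X]$ is a deterministic function of $X$, $H(X) = H(X,[X]) = H([X]) + H(X\mid [X])$, so $I(X;[X]) = H([X])$. As $[X]$ can be given any distribution on $\X/\equiv$ by suitably choosing $p$ (for instance, by concentrating all mass of each class on a single representative), the supremum of $H([X])$ over priors equals $\log K$, attained when $[X]$ is uniform on $\X/\equiv$.

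For error entropy, set $m_c \defin \max_{x \in c} p(x)$ and expand the definitions. Routine computation yields $p(c)\,E(p(\cdot|c)) = p(c) - m_c$, whence $E(X \mid [X]) = 1 - \sum_c m_c$; also $E(X) = 1 - \max_x p(x) = 1 - \max_c m_c$. Therefore
\begin{equation*}
I(X;[X]) \;=\; \sum_c m_c \;-\; \max_c m_c.
\end{equation*}
Two elementary bounds do the job: first, $\sum_c m_c \le \sum_c p(c) = 1$; second, $\max_c m_c \ge \tfrac{1}{K}\sum_c m_c$. Combining them,
\begin{equation*}
I(X;[X]) \;\le\; \Bigl(1-\tfrac{1}{K}\Bigr)\sum_c m_c \;\le\; 1 - \tfrac{1}{K}.
\end{equation*}
Both inequalities are simultaneously tight when each class concentrates its mass on a single representative ($m_c = p(c)$) and the classes carry equal mass $p(c) = 1/K$; this prior is admissible, so the bound is attained and $C(\St) = 1 - 1/K$.

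I do not foresee a serious obstacle: the only subtle point is checking that every distribution on $\X/\equiv$ is realizable as the push-forward of some prior $p$ on $\X$, which is immediate since the classes are nonempty and disjoint, and that the maximizing prior for error entropy is admissible (again immediate, since we just need one representative per class).
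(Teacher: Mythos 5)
Your proposal is correct and follows essentially the same route as the paper: reduce to maximizing $I(X;[X])$ via Theorem~\ref{th:due}, use $I(X;[X])=H([X])$ for Shannon entropy, and derive $I(X;[X])=\sum_c m_c-\max_c m_c$ for error entropy. The only cosmetic difference is in the last step of the error-entropy bound, where you use $\max_c m_c\geq\frac{1}{K}\sum_c m_c$ together with $\sum_c m_c\leq 1$ instead of the paper's case split on whether $\max_x p(x)$ exceeds $\frac{1}{K}$; both are equally elementary and yield the same tight bound.
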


\proof Let $x_i$ any representative of class $c_i$, for $i=1,\ldots,K$.
\begin{itemize}
\item $U=H$.
    By the symmetry  of mutual information in the case of Shannon entropy, we have
    \begin{eqnarray*}
    I(X; [X])& = & H([X])-\underbrace{H([X]\,|\,X)}_{=0}  \ \ \ =\ \ \ H([X])\\
             & = & -\sum_{c_i} p(c_i)\log p(c_i) \ \ \ \leq\ \ \ \log K
    \end{eqnarray*}

\noindent where the last inequality follows from the property of Shannon entropy that $H(q)\leq \log |\supp(q)|$, for any distribution $q$.
On the other hand, if we take the distribution $p(\cdot)$  defined as $p(x_i)=\frac 1 K$, and $p(x)=0$ elsewhere, we can easily compute that $I(X;[X])=\log K$.

\item $U=E$. Let $p(\cdot)$ be any prior and
assume without loss of generality that
$p(x_i)=\max_{x\in c_i} p(x)$ for each $i$, and furthermore that $p(x_1)=\max_x p(x)$.
By easy manipulations, we have:
    \begin{eqnarray*}
    I(X; [X])& = & E(X)- E(X\,|\,[X]) \\
             & = & (1-p(x_1)) -\left(1-\sum_{c_i} p(c_i)\frac{p(x_i)}{p(c_i)}\right)\\
             & = & \left(\sum_{i=1}^K p(x_i)\right)-p(x_1)\ \ \  = \ \ \ \sum_{i=2}^K p(x_i)\,.
    \end{eqnarray*}

\noindent Now it is easily checked that the last term in this chain is $\leq 1-\frac 1 K$: this is done by separately considering the two cases $\max_x p(x)=p(x_1)\leq \frac 1 K$ and $\max_x p(x)=p(x_1)> \frac 1 K$.  On the other hand, if we take, as above, the distribution $p(\cdot)$  defined as $p(x_i)=\frac 1 K$, and $p(x)=0$ elsewhere, we can easily compute that $I(X;[X])= 1-\frac 1 K$.
\ifmai
\item $U=G$.
Fix any prior $p(\cdot)$.
List the elements of $\supp(p)$, \emph{except} the one of least probability,
 in descending order of probability as $x_1,\ldots,x_N$ (hence $N =|\supp(p)|-1$; we ignore the trivial case where
 $|supp(p)=1$|.) Let $c_1,\ldots,c_H$ ($H\leq K$) be the classes whose intersection with the above set of $N$
 elements is nonempty.
For each such class  $c_i$,
 let ${i_j}$ denote the index of the
  the $j$-th element  of $c_i$ in the above ordering; that is,   $c_i $
  contains  $x_{i_1},\ldots,x_{i_{m_i}}\}$, for some
   $m_i\leq |c_i|$. We have the following equalities, which follow
    from easy manipulations of the indices in the summations.
    \begin{eqnarray}
    I(X; [X])& = & G(X)- G([X]|X) \\
             & = & \sum_{j=1}^{N} j p(x_j) -\sum_{i=1}^H p(c_i)\sum_{l=1}^{m_i-1} l\frac{p(x_{i_l})}{p(c_i)}\\
             & = & \sum_{i=1}^H \sum_{l=1}^{m_i-1} p(x_{i_l})(i_l-l)\,+\,\sum_{i=1}^H p(x_{i_{m_i}})
             i_{m_i}\,.\label{eq:guessing}
    \end{eqnarray}
%
\ifmai
Now, consider the class containing $x_N$; for notational simplicity, assume this class is $c_1$: $x_N=x_{1_{m_1}}$.
Consider the new distribution obtained from $p(\cdot)$ by    moving  the probability
 mass of   $X_N$ into $x_{1_1}$, thus:
  $p'(x_{1_1})\defin p(x_{1_1})+ p(x_{N})$, $p'(x_{N})\defin 0$ ($p'(\cdot)$ coincides with $p(\cdot)$
   elsewhere.)
Now, assume that there is at least one class $c_i$ that such that $m_i>1$ (if not, skip to
the next paragraph.) Among
those classes,
take the one containing the element of minimal index. Assuming, for notational simplicity, that
this is class $c_1$, we have therefore: in terms of indices, $1_1> i_1$ and
$p(x_{1_1})\geq p(x_{i_1})$, for all  $i\neq 1$ such that $m_i>1$.
Consider now a new distribution obtained from $p(\cdot)$ by    moving  the probability
 mass of the element with greatest index in $c_1$ to $x_{1_1}$, thus:
  $p'(x_{1_1})\defin p(x_{1_1})+ p(x_{1_{m_i}})$, $p'(x_{1_{m_i}})\defin 0$ ($p'(\cdot)$ coincides with $p(\cdot)$
   elsewhere.)
  Consider the elements of in the support of $p'(\cdot)$, ordered as before in terms of decreasing
  probability: $x'_1,\ldots,x'_{N-1}$; denote by $x_{i'_j}$ the the $j$-th element of class $c_i$ in this order.
    We note the following.
  \begin{enumerate}
  \item  each class $c_i$ different from $c_1$ has still the same set of elements, $x'_{i'_1},\ldots,x'_{i'_{m_i}}$,
  but in terms of indices, $i'_j\leq i_j$, since due to the elimination of an element in class $c_1$,
   some elements in the new order  may have shifted a   position up.
  \item and the resulting $I(X';[X'])$, as given by formula \eqref{eq:guessing}.
There are now two cases. If $\max_x p(x)\leq \frac 1 K$,
from \eqref{eq:guessing} it follows
\[
I(X;[X]) \leq \frac{N\cdot (N+1)} {2K} - \sum_{i=1}^K \frac{l_i(l_i+1)}{2K}\,.
\]
\fi
\ifmai
\noindent Now, $I(X;[X])$ as a function of $p(\cdot)$ is continuous, hence it achieves a maximum in the compact set $\P$, on some element $p^\star(\cdot)$. Let $V$ be the convex subset   generated by $\supp(p^\star)$. Function  $I(X;[X])$ restricted to $V$ is still continuous and achieves a maximum at $p^\star(\cdot)$, which is an  internal point of $V$ (we  exclude the trivial case where $|\supp(p^\star)|=1$.) Since $I(X;[X])$ is   differentiable, this implies that its partial derivatives at each $x_i\in \supp(p^\star)$, evaluated at $x_i=p(x_i)$, must vanish. We take the partial derivatives of the term in \eqref{eq:guessing} for each $x_i$ ,   equate them to 0, and obtain:
\[
i_l = l \quad\text{ for each } i=1,\ldots,K, \quad l=1,\ldots,l_i\,.
\]
Now, this can only be the case if for each class $c_i$ there is precisely one element $x_i\in \supp(p)$.
\fi
\fi\qed
\end{itemize}

\begin{exa}{
Consider the mechanism defined in Example  \ref{exDBdet}. One has the following capacities:
for $U(\cdot)=H(\cdot)$, $C(\St)=\log{8}=3$, while for $U(\cdot)=E(\cdot)$, $C(\St)=\frac{7}{8}=0.875$.}
\end{exa}


\section{Computing Optimal Finite Strategies}\label{sec:finitestrat}
We show that, for finite strategies,   $I_\sigma(\St,p)$
can be expressed recursively as
a Bellman  equation. This allows for calculation of  {optimal}
finite strategies  based on standard algorithms, such as   backward induction.

\subsection{A Bellman Equation}
Let us introduce some terminology.
For each  $y$, the $y$-\emph{derivative} of $\sigma$, denoted $\sigma_{y}$, is the function  defined thus, for each $y^j\in\Y^*$: $\sigma_{y}(y^j)\defin \sigma(yy^j)$. Note that if $\sigma$ has length $l>1$, then $\sigma_y$ is a strategy  of height $\leq l-1$. For $l=1$, $\sigma_y$ is the empty function.
Recall that according to \eqref{eq:update}, for $h=ay$, we have\footnote{In terms of a given prior $p(\cdot)$ and of the matrices of $\St$, this can be also expressed as: $p^{ay}(x)=\frac{p_a(y|x)p(x)}{\sum_{x'}p_a(y|x')p(x')}$.}
\begin{eqnarray}
\nonumber
p^{ay}(x) & = & p_a(x|y)\,.
\end{eqnarray}

\noindent By convention, we let $I_\sigma(\cdots)$ denote 0 when $\sigma$ is empty. Moreover, we write  $I_{[a]}(\cdots)$ as  $I_{a}(\cdots)$.

\begin{lem}\label{lemma:decompo}
Let $p(\cdot)$ be any prior on $\X$. Let $\sigma$ be  a  strategy  with $\sigma(\varepsilon)=a$. Then
 $I_\sigma(\St;p) = I_{a}(\St;p)+\sum_y p_a(y) I_{\sigma_y}(\St;p^{ay})$.
\end{lem}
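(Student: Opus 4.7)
The plan is to expand both sides in terms of the definitions of $I_\sigma$, $U_\sigma(X|Y)$, and $p_\sigma$, and reduce everything to a single factorization identity that expresses the joint distribution $p_\sigma$ as a product of the one-step distribution $p_a$ and the distribution induced by the ``derivative strategy'' $\sigma_y$ starting from the updated prior $p^{ay}$.

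First, I would establish the key factorization. Since the mechanism is memoryless (answers to individual queries are independent given $X$), for any $y\in\Y$ and any continuation $y^{n-1}$, unfolding the definition (1) of $p_\sigma$ and using $\sigma(\varepsilon)=a$ and $\sigma_y(y^{j})=\sigma(yy^j)$ yields
\[
p_\sigma(x,\,y\,y^{n-1}) \;=\; p(x)\,p_a(y|x)\cdot \bigl[p_{\sigma_y}\text{-factors, starting from prior concentrated on }x\bigr].
\]
By Bayes' rule $p(x)p_a(y|x)=p_a(y)\,p^{ay}(x)$, so this equals $p_a(y)\cdot p^{ay}_{\sigma_y}(x,y^{n-1})$, where the subscript/superscript indicates that the joint distribution is built from strategy $\sigma_y$ with prior $p^{ay}(\cdot)$. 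Marginalising and conditioning then gives the two identities I need:
\begin{align*}
p_\sigma(y\,y^{n-1}) &= p_a(y)\cdot p^{ay}_{\sigma_y}(y^{n-1})\,, \\
p_\sigma(x\mid y\,y^{n-1}) &= p^{ay}_{\sigma_y}(x\mid y^{n-1})\,.
\end{align*}
(For $y$ with $p_a(y)=0$ both sides are trivially $0$ and $p^{ay}$ need not be defined; for $\sigma$ of length $1$, $\sigma_y$ is empty and the inner sum degenerates to a single $\varepsilon$-term.)

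Using these two identities, I would rewrite the conditional uncertainty as
\[
U_\sigma(X\mid Y) \;=\; \sum_{y\,y^{n-1}} p_\sigma(y\,y^{n-1})\,U(p_\sigma(\cdot\mid y\,y^{n-1}))
\;=\; \sum_{y} p_a(y)\sum_{y^{n-1}} p^{ay}_{\sigma_y}(y^{n-1})\,U(p^{ay}_{\sigma_y}(\cdot\mid y^{n-1})),
\]
where the inner sum is exactly $U_{\sigma_y}(X\mid Y)$ computed under prior $p^{ay}$. Call this inner quantity $U^{ay}_{\sigma_y}(X\mid Y)$.

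Finally, combining this with the definitions $I_\sigma(\St,p)=U(p)-U_\sigma(X\mid Y)$, $I_a(\St,p)=U(p)-\sum_y p_a(y)\,U(p^{ay})$, and $I_{\sigma_y}(\St,p^{ay})=U(p^{ay})-U^{ay}_{\sigma_y}(X\mid Y)$, a one-line rearrangement gives
\[
I_\sigma(\St,p) = U(p) - \sum_y p_a(y)\,U^{ay}_{\sigma_y}(X\mid Y)
= I_a(\St,p) + \sum_y p_a(y)\bigl(U(p^{ay})-U^{ay}_{\sigma_y}(X\mid Y)\bigr),
\]
which is the claimed identity. The only delicate step is the factorisation of $p_\sigma$: one must check that the prefix-closure condition ``$y^j\in\dom(\sigma),\ y^jy_{j+1}\notin\dom(\sigma)$'' in (1) translates correctly to the analogous condition for $\sigma_y$, and that the degenerate cases (length-$1$ strategies, and observations $y$ with $p_a(y)=0$) are handled by the $I_\sigma(\cdots)=0$ convention for the empty strategy. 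Everything else is routine bookkeeping.
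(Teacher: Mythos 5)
Your proposal is correct and follows essentially the same route as the paper: you establish the factorization $p_\sigma(x,yy^{n-1})=p_a(y)\,p^{ay}_{\sigma_y}(x,y^{n-1})$ (equivalently the paper's identities $p_\sigma(y)=p_a(y)$, $p_\sigma(y^j|y)=p^{ay}_{\sigma_y}(y^j)$, $p_\sigma(x|yy^j)=p^{ay}_{\sigma_y}(x|y^j)$) and then rearrange the definitions of $I_\sigma$, $I_a$ and $I_{\sigma_y}$; the paper merely organizes the same rearrangement as a telescoping sum through the intermediate term $U(X|Y_1)$. The degenerate cases you flag (empty $\sigma_y$, observations with $p_a(y)=0$) are handled exactly as you describe.
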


We introduce some additional notation to be used in the proof of this lemma. Let $l$ denote the   length of a strategy $\sigma$,  and let  $(X,Y)$ be distributed according to $p_\sigma(\cdot)$. We can decompose $Y$ as the concatenation of the 1st observation and whatever sequence of observations is left, thus: $Y= Y_1\cdot Y_s$. Here, $Y_1$ takes values on $\Y$, while $Y_s$ takes values on a   subset of $\cup_{1\leq j\leq l}\Y^j$ - in particular, if $l=1$, $Y_s$ takes on the value $\varepsilon$ with probability 1.  In what follows, we denote  the marginal distribution of $Y_1$ under $\sigma$ just as   $p_\sigma(y)$, and that of $Y_s$ as $p_\sigma(y^j)$, for generic $y$ and $y^j$.

\proof{(of Lemma \ref{lemma:decompo})} 
It is an easy matter to prove the following equations, for each prior $p(\cdot)$, finite strategy $\sigma$ with $\sigma(\varepsilon)=a$, sequence $y^j$, observation $y$, one has (below, $y$ and $y^j$ run over elements of nonzero probability; moreover, for any prior $p(\cdot)$, history $h$ and strategy $\sigma$, the term $p^h_\sigma$ is to be parsed as $(p^h)_\sigma$):
\begin{eqnarray}
p_\sigma(y) & = & p_a(y)\label{eq:uno}\\
p_\sigma(x|y) & = & p_a(x|y) \; = \; p^{ay}(x)\label{eq:due}\\
p_\sigma(x|yy^j) & = &    p^{ay}_{\sigma_y}(x|y^j)\label{eq:tre}\\
p_\sigma(y^j|y) & = & p^{ay}_{\sigma_y}(y^j)\,.  \label{eq:upddue}
\end{eqnarray}

\noindent By applying   equalities \eqref{eq:uno}, \eqref{eq:due}, \eqref{eq:tre} and  \eqref{eq:upddue} above as appropriate, we have:
\begin{align*}
I_\sigma(\St,p)\ &= \  I(X;Y)\\
                & =  \left[U(X)-U(X|Y_1)\right] \;+\; \left[U(X|Y_1) - U(X|Y)\right]\\
                & =  \left[U(p)-\sum_{y} p_\sigma(y)U(p_\sigma(\cdot|y))\right]\;+\;
                \left[ \sum_{y} p_\sigma(y)   U(p_\sigma(\cdot|y)) - p_\sigma(y,y^j)U(p_\sigma(\cdot|yy^j))\right]\\
                & =  \left[U(p)-\sum_{y} p_\sigma(y)U(p_\sigma(\cdot|y))\right]\;+\;
                \left[ \sum_{y} p_\sigma(y)   U(p_\sigma(\cdot|y)) - p_\sigma(y)p_\sigma(y^j|y)U(p_\sigma(\cdot|yy^j))\right]\\
                & =  \left[U(p)-\sum_{y} p_\sigma(y)U(p_\sigma(\cdot|y))\right]\;+\;
                \sum_{y} p_\sigma(y) \left[ U(p_\sigma(\cdot|y)) - \sum_{y^j}p_\sigma(y^j|y)U(p_\sigma(\cdot|yy^j))\right]\\
                & =  \left[U(p)-\sum_{y} p_{a}(y)U(p_{a}(\cdot|y))\right]\;+\;
                     \sum_{y} p_a(y) \left[ U(p^{ay}) - \sum_{y^j} p^{ay}_{\sigma_{y}}(y^j)U(p^{ay}_{\sigma_{y}}(\cdot|y^j))\right]\\
                & =  I_{a}(\St;p)\;+\;
                      \sum_{y} p_a(y)
                  I_{\sigma_y}(\St;p^{ay}.)\rlap{\hbox to 228 pt{\hfill\qEd}}
\end{align*}

\noindent Let us say that a   strategy $\sigma$ of length $l$ is \emph{optimal} for $\St$, $p(\cdot)$ and $l$ if it maximizes $I_\sigma(\St,p)$ among all strategies of length $l$.

\begin{cor}[Bellman-type equation for optimal strategies]\label{cor:Bellman}
There is an optimal  strategy $\sigma^\star$ of   length $l$ for $\St$ and $p(\cdot)$ that  satisfies the following equation
\begin{equation}\label{eq:bellman}
I_{\sigma^\star}(\St;p) = \max_a \; \left\{I_{a}(\St;p)\;+\;
                      \sum_{y:\,p_a(y)>0} p_a(y) I_{\sigma^\star_{a,y}}(\St;p^{ay}) \,\right\}\,
\end{equation}

\noindent where $\sigma^\star_{a,y}$ is an optimal   strategy of length $l-1$ for $\St$ and $p^{ay}(\cdot)$.
\end{cor}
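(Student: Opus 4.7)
The plan is to prove the corollary by induction on the length $l$ of the strategy, using the decomposition Lemma~\ref{lemma:decompo} as the key tool. Existence of an optimal length-$l$ strategy in the first place follows from the fact that there are only finitely many strategies of length $l$ over finite $\X$, $Act$, $\Y$ (up to extensions on branches of zero probability, which are irrelevant), so the supremum defining the optimum is attained.

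For the base case $l=1$, any strategy has the form $\sigma=[a]$ for some $a\in Act$, so $\sigma_y$ is empty for every $y$ and Lemma~\ref{lemma:decompo} reduces to $I_\sigma(\St;p)=I_a(\St;p)$. Choosing $a^\star\in\argmax_a I_a(\St;p)$ yields an optimal strategy, and the Bellman equation~\eqref{eq:bellman} holds trivially since the inner sum is empty.

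For the inductive step, assume that for every prior $q(\cdot)$ there is an optimal strategy of length $l-1$, and take any strategy $\sigma$ of length $l$ with $\sigma(\varepsilon)=a$. By Lemma~\ref{lemma:decompo},
\[
I_\sigma(\St;p) \;=\; I_a(\St;p) \;+\; \sum_{y:\,p_a(y)>0} p_a(y)\, I_{\sigma_y}(\St;p^{ay})\,.
\]
Each derivative $\sigma_y$ is a strategy of length at most $l-1$ on the updated prior $p^{ay}(\cdot)$. Replacing each $\sigma_y$ by an optimal strategy $\sigma^\star_{a,y}$ of length $l-1$ for $p^{ay}(\cdot)$ can only increase the summand at each $y$, hence can only increase $I_\sigma(\St;p)$. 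Taking the maximum over the choice of first action $a$ shows that there is an optimal length-$l$ strategy $\sigma^\star$ defined by $\sigma^\star(\varepsilon)=a^\star$ with $a^\star$ achieving the outer $\max$, and $\sigma^\star_y=\sigma^\star_{a^\star,y}$ for each $y$ with $p_{a^\star}(y)>0$. Applying Lemma~\ref{lemma:decompo} to this $\sigma^\star$ gives exactly equation~\eqref{eq:bellman}.

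The main subtle point is the replacement argument: we need that $\sigma_y$ of length strictly less than $l-1$ can be dominated by a strategy of length $l-1$ for the same prior. This is immediate because extending a finite strategy by any further actions (equivalently, comparing against the supremum over all length-$(l-1)$ strategies) cannot decrease information gain, a fact which follows from the monotonicity of $I_{\sigma\backslash l}(\St,p)$ in $l$ noted after Definition~\ref{def:qifAdaptive} and which is itself a consequence of Jensen's inequality (Lemma~\ref{lemma:uno}). Everything else is bookkeeping on the decomposition identity.
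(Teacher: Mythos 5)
Your proof is correct and follows exactly the route the paper intends: the corollary is presented as an immediate consequence of Lemma~\ref{lemma:decompo}, and your induction on $l$ with the replacement/domination argument is just the routine filling-in of that step (the only nitpick is that in the base case the inner sum is not empty but rather has all summands equal to $0$ by the convention for empty strategies). No gap.
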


Corollary \ref{cor:Bellman}  allows us to employ dynamic programming  or backward induction to compute  {optimal} finite strategies. We discuss this briefly in the next subsection.


\subsection{Markov Decision Processes and Backward Induction}\label{sec:MDP}
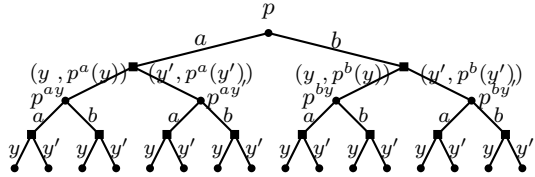
\begin{figure}
\vspace{-15pt}
\tikzstyle{bag} = [text centered,circle, minimum width=3pt,fill, inner sep=0pt]
\tikzstyle{Node}=[draw, minimum width=1.5pt,fill, inner sep=1.5pt]
\centering
\begin{tikzpicture}[scale=0.9]
\draw (0,0) node [bag, label=above:{\footnotesize$p$}] {};
\draw [thick] (0,0) to (-2,-0.5);
\draw [thick] (0,0) to (2,-0.5);
\draw (-2,-0.5) node [Node] {};
\draw (2,-0.5) node [Node] {};
\draw (-1,-0.1) node {\footnotesize $a$};
\draw (1,-0.1) node {\footnotesize $b$};
\draw [thick] (-2,-0.5) to (-3,-1);
\draw [thick] (-2,-0.5) to (-1,-1);
\draw (-3,-1) node [bag] {};
\draw (-1,-1) node [bag] {};
\draw (-2.8,-0.6) node {\scriptsize $(y\phantom{'},p^a(y))$};
\draw (-1,-0.6) node {\scriptsize $(y',p^a(y'))$};
\draw (-3.2,-0.9) node {\footnotesize $p^{ay\phantom{'}}$};
\draw (-0.6,-0.9) node {\footnotesize $p^{ay'}$};

\draw [thick] (2,-0.5) to (3,-1);
\draw [thick] (2,-0.5) to (1,-1);
\draw (3,-1) node [bag] {};
\draw (1,-1) node [bag] {};
\draw (3,-0.6) node {\scriptsize $(y',p^b(y'))$};
\draw (1.1,-0.6) node {\scriptsize $(y\phantom{'},p^b(y))$};
\draw (0.8,-0.9) node {\footnotesize $p^{by\phantom{'}}$};
\draw (3.4,-0.9) node {\footnotesize $p^{by'}$};

\draw [thick] (-3,-1) to (-3.5,-1.5);
\draw [thick] (-3,-1) to (-2.5,-1.5);
\draw [thick] (-1,-1) to (-1.5,-1.5);
\draw [thick] (-1,-1) to (-0.5,-1.5);
\draw (-3.5,-1.5) node [Node] {};
\draw (-2.5,-1.5) node [Node] {};
\draw (-0.5,-1.5) node [Node] {};
\draw (-1.5,-1.5) node [Node] {};
\draw (-3.4,-1.25) node {\scriptsize $a$};
\draw (-2.6,-1.2) node {\scriptsize $b$};
\draw (-1.4,-1.25) node {\scriptsize $a$};
\draw (-0.6,-1.2) node {\scriptsize $b$};

\draw [thick] (3,-1) to (3.5,-1.5);
\draw [thick] (3,-1) to (2.5,-1.5);
\draw [thick] (1,-1) to (1.5,-1.5);
\draw [thick] (1,-1) to (0.5,-1.5);
\draw (3.5,-1.5) node [Node] {};
\draw (2.5,-1.5) node [Node] {};
\draw (0.5,-1.5) node [Node] {};
\draw (1.5,-1.5) node [Node] {};
\draw (3.4,-1.2) node {\scriptsize $b$};
\draw (2.6,-1.25) node {\scriptsize $a$};
\draw (1.4,-1.2) node {\scriptsize $b$};
\draw (0.6,-1.25) node {\scriptsize $a$};

\draw [thick] (-3.5,-1.5) to (-3.75,-2);
\draw [thick] (-3.5,-1.5) to (-3.25,-2);
\draw [thick] (-2.5,-1.5) to (-2.75,-2);
\draw [thick] (-2.5,-1.5) to (-2.25,-2);
\draw [thick] (-1.5,-1.5) to (-1.75,-2);
\draw [thick] (-1.5,-1.5) to (-1.25,-2);
\draw [thick] (-0.5,-1.5) to (-0.75,-2);
\draw [thick] (-0.5,-1.5) to (-0.25,-2);
\draw (-3.75,-2) node [bag] {};
\draw (-3.25,-2) node [bag] {};
\draw (-2.75,-2) node [bag] {};
\draw (-2.25,-2) node [bag] {};
\draw (-1.75,-2) node [bag] {};
\draw (-1.25,-2) node [bag] {};
\draw (-0.75,-2) node [bag] {};
\draw (-0.25,-2) node [bag] {};
\draw (-3.7,-1.7) node {\scriptsize $y\phantom{'}$};
\draw (-3.2,-1.7) node {\scriptsize $y'$};
\draw (-2.7,-1.7) node {\scriptsize $y\phantom{'}$};
\draw (-2.2,-1.7) node {\scriptsize $y'$};
\draw (-1.7,-1.7) node {\scriptsize $y\phantom{'}$};
\draw (-1.2,-1.7) node {\scriptsize $y'$};
\draw (-0.7,-1.7) node {\scriptsize $y\phantom{'}$};
\draw (-0.2,-1.7) node {\scriptsize $y'$};

\draw [thick] (3.5,-1.5) to (3.75,-2);
\draw [thick] (3.5,-1.5) to (3.25,-2);
\draw [thick] (2.5,-1.5) to (2.75,-2);
\draw [thick] (2.5,-1.5) to (2.25,-2);
\draw [thick] (1.5,-1.5) to (1.75,-2);
\draw [thick] (1.5,-1.5) to (1.25,-2);
\draw [thick] (0.5,-1.5) to (0.75,-2);
\draw [thick] (0.5,-1.5) to (0.25,-2);
\draw (3.75,-2) node [bag] {};
\draw (3.25,-2) node [bag] {};
\draw (2.75,-2) node [bag] {};
\draw (2.25,-2) node [bag] {};
\draw (1.75,-2) node [bag] {};
\draw (1.25,-2) node [bag] {};
\draw (0.75,-2) node [bag] {};
\draw (0.25,-2) node [bag] {};
\draw (3.8,-1.7) node {\scriptsize $y'$};
\draw (3.3,-1.7) node {\scriptsize $y\phantom{'}$};
\draw (2.8,-1.7) node {\scriptsize $y'$};
\draw (2.3,-1.7) node {\scriptsize $y\phantom{'}$};
\draw (1.8,-1.7) node {\scriptsize $y'$};
\draw (1.3,-1.7) node {\scriptsize $y\phantom{'}$};
\draw (0.8,-1.7) node {\scriptsize $y'$};
\draw (0.3,-1.7) node {\scriptsize $y\phantom{'}$};
\end{tikzpicture}
\caption{{\small The first few levels of a \mdp\ induced by a prior $p(\cdot)$ and a mechanism with $Act=\{a,b\}$ and $\Y=\{y,y'\}$.
Round nodes are   decision nodes and squares nodes are probabilistic nodes. For the sake of space, labels of the last level of arcs and nodes are only partially  shown.
}}
\label{fig:MDP}
\end{figure}

A mechanism $\St$ and a prior $p(\cdot)$ induce a     \emph{Markov Decision Process} (\mdp), where all possible attack trees are represented at once. Backward induction amounts to recursively computing the   most efficient attack tree out of this \mdp, limited to a given length.
More precisely, the \mdp\  $\M$ induced by $\St$ and a prior $p(\cdot)$  is an in general infinite tree consisting of   \emph{decision} nodes and  \emph{probabilistic} nodes. Levels  of decision nodes alternate with levels of probabilistic nodes,  starting from the root which is a decision node. Decision nodes are labelled with probability distributions over $\X$, edges outgoing decision nodes with actions, and edges outgoing probabilistic nodes with pairs $(y,\lambda)$ of an observation and a real, in such a way that (again, we identify nodes with the corresponding history):
\begin{itemize}
\item a decision node corresponding to history $h$ is labelled with $p^h(\cdot)$, if this is defined, otherwise the node and its descendants are removed, as well as the incoming edge;
\item for any pair of consecutive edges leading from a decision node $h$   to another decision node $hay$, for any $a\in Act$ and $y\in \Y$, the edge outgoing the probabilistic node is labelled with $(y,p^h_a(y))$.

Figure  \ref{fig:MDP} shows the first   few levels of such a  \mdp.
\end{itemize}

%
%
%
\noindent In order to compute an optimal strategy of length $l\geq 1$ by backward induction, one   initially prunes the tree at $l$-th decision level (the root is at level 0) and then assigns  \emph{rewards}  to all leaves of the resulting  tree. Moreover, each probabilistic node is assigned an \emph{immediate gain}.  Rewards are then gradually propagated from the leaves up to the root,   as follows:
\begin{itemize}
\item each probabilistic node is assigned as a reward the sum of its immediate gain and the \emph{average} reward  of its children, average computed using the   probabilities on the outgoing arcs;
\item each decision node is assigned the \emph{maximal} reward of its children; the arc leading to the maximizing child  is marked or otherwise recorded.
\end{itemize}
Eventually, the root will be assigned   the maximal achievable reward. Moreover,   the paths of marked arcs starting from the root will define an optimal strategy  of length $l$. We can apply this strategy to our problem, starting   with  assigning rewards  $0$ to each leaf node $h$, and immediate gain $I_a(\St,p^h)$  to each $a$-child of any  decision node $h$. The correctness of the  resulting procedure is obvious in the light of Corollary \ref{cor:Bellman}.\bigskip

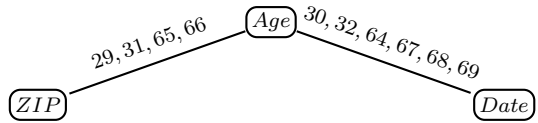
\begin{figure}[h]
\centering
\vspace{-12pt}
\begin{tikzpicture}[-,>=stealth',shorten >=1pt,auto,node distance=2cm, thick, edge from parent path={(\tikzparentnode) -- (\tikzchildnode)}]
  \tikzstyle{level 1}=[level distance=1.1cm, sibling distance=6.2cm]
  \tikzstyle{every state}=[rectangle,rounded corners, inner sep=2pt,minimum size=.4cm]
\node[state] {{\scriptsize $Age$}}
 child
  { node [state]    {{\scriptsize $ZIP$}}
  edge from parent
  node [above,sloped] {\scriptsize $29,31,65,66$}}
  child
{ node [state]    {{\scriptsize $Date$}}
  edge from parent
  node [above,sloped] {\scriptsize $30,32,64,67,68,69$}}
;
\end{tikzpicture}
\caption{A Shannon entropy optimal strategy for Example  \ref{exDB}.
Leaves with the same label and their incoming arcs have been coalesced.}\label{optimalstrategy}
\end{figure}
In a crude implementation of the above outlined procedure, the number of decision nodes in the \mdp\ will be bounded by $(|\Y| \times |Act|)^{l+1}-1$ (probabilistic nodes can be dispensed with, at the cost of moving incoming action labels to   outgoing arcs.) Assuming that each distribution  is stored in space $O(|\X|)$, the \mdp\ can be built and stored in time and space $\mathcal{O}(|\X| \times (|\Y| \times |Act|)^{l+1})$. This is also the running time of the backward induction outlined above, assuming $U(\cdot)$ can be computed in time
$O(|\X|)$ (some straightforward optimizations are possible here, but we will not dwell on this.) By comparison, the running time of the exhaustive procedure for deterministic systems outlined in \cite[Th.1]{KB07} is
$O(l \times |Act|^{r^l}\times |\X|\times\log |\X|)$, where $r$ is the maximal number of   classes in any relation $\equiv_a$; since $r$ can be as large as $|\Y|$, this gives  a worst-case running time of $O(l\times |Act|^{|\Y|^l} \times|\X|\times\log |\X|)$.

\begin{exa}\label{ex:optimal}
Applying backward induction to the mechanism of  Example  \ref{exDB} with $U(\cdot)=H(\cdot)$ and  $l=2$, one gets the optimal strategy  $\sigma$ shown in Figure  \ref{optimalstrategy}, 
\fra{ho sostituito "that yields" con "with" per risparmiare un rigo}
with  $I_\sigma(\St,p)\approx 2.4$ bits. Some details of the derivation of this strategy are reported in Appendix \ref{app:backward}.
\end{exa}


\section{The role of concavity}\label{sec:roleconcave}
We elucidate a connection between our definition of uncertainty   (convexity + continuity) and a  concept of scoring rule in Bayesian  decision  theory. 
A scoring rule encodes a system of (dis)incentives:  a   wrong forecast about an event causes the forecaster a  loss, whose magnitude   depends on  both the forecast that has been put forward, and on the event that has actually occurred. The  average loss under the best forecast is named entropy in this context. In essence,  we will show that: (a) every proper scoring rule induces an entropy function that is concave, hence necessarily continuous at least in the interior of the probability simplex; (b)   every concave function arises, under an additional mild assumption, as the entropy induced by  a certain scoring rule. This almost complete correspondence, and its simple definition, give a strong support to our choice of the class of uncertainty functions.

The connection between concavity and uncertainty has been explored in Statistics at least starting from the 1950's, and it comes into many different flavours. The following discussion is our personal take of this issue, for which we claim no technical novelty. Our presentation  is partly  inspired by \cite[Sections 9-10]{Dawid}.  For notational simplicity, in what follows we fix any ordering of the elements of $\X$, say $x_1,...,x_n$, so that we can identify  any distribution $p$ with a vector $(p_1,...,p_n)\in \mathbb{R}^n$.

Formally, in the context of  Bayesian decision theory, a \emph{scoring rule} is a function
\[
S:\X\times \mathring \P \rightarrow \mathbb{R}
\]
where  $\mathring\P$ denotes here the interior part of $\P$, that is, the set of  those distributions $q$ s.t. $q(x)>0$ for each $x\in \X$. This function  is given the following interpretation. A forecaster (in our case, the adversary) is asked to put forward a forecast  about the outcome of an event (in our case, the secret.) A  forecast takes the form of a probability distribution $q$, which represents the forecaster's estimation of the probability of each possible outcome  $x\in \X$. Then    $S(x,q)$ represents the  \emph{loss} incurred by the forecaster when the outcome is actually $x$ and he has put forward $q$. If the outcome is  distributed according to $p$, the average loss incurred if putting forward $q\in \mathring\P$ is given by
\[
S(p,q)\defin  \sum_x p(x) S(x,q)\,.
\]
This definition  is extended to each $q$ in the frontier,   $q\in \P\setminus\mathring\P$, by $\liminf_{q'\rightarrow q} S(p,q')$, be this finite or infinite.    The scoring rule $S$ is called \emph{proper} if the choice   $q=p$ always minimizes $S(p,q)$. In other words, a proper scoring rule (PSR) encodes  a penalty system  that forces the forecaster to be honest and propose the distribution he really thinks is the true one. For a  PSR, the   minimal loss corresponding to $p$ is - perhaps not surprisingly - also called the induced \emph{entropy}
\[
H(p)\defin S(p,p)\,.
\]
Seen as a function of  $p$, $H(p)$  gives a measure of the intrinsic risk associated with each $p$, under the   loss model encoded by the PSR $S$.

\begin{prop}\label{prop:conc1} Any entropy function  induced  by a PSR is a concave function over  $\mathcal{P}$. As a consequence, it is continuous over $\mathring \P$.
\end{prop}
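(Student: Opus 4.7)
The plan is to verify concavity directly from the propriety inequality, and then invoke the standard fact that every concave real-valued function on a convex subset of $\mathbb{R}^n$ is continuous on the interior.

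For concavity, fix $p,q \in \mathring{\P}$ and $\lambda \in [0,1]$, and set $r \defin \lambda p + (1-\lambda) q$, which also lies in $\mathring{\P}$ since the interior is convex. First I would observe that $S(\cdot,r)$ is affine in its first argument: expanding the definition,
\[
 S(r,r) \;=\; \sum_x r(x) S(x,r) \;=\; \lambda \sum_x p(x) S(x,r) + (1-\lambda) \sum_x q(x) S(x,r) \;=\; \lambda S(p,r) + (1-\lambda) S(q,r).
\]
Now the propriety hypothesis says precisely that $q \mapsto S(p,q)$ is minimized at $q=p$, so $S(p,r) \geq S(p,p) = H(p)$, and similarly $S(q,r) \geq H(q)$. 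Combining,
\[
 H(r) \;=\; S(r,r) \;\geq\; \lambda H(p) + (1-\lambda) H(q),
\]
which is exactly concavity of $H$ on $\mathring{\P}$.

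To extend concavity to all of $\P$, I would use the $\liminf$ extension. For boundary points $p,q \in \P$, pick sequences $p_n,q_n \in \mathring{\P}$ with $p_n \to p$, $q_n \to q$, apply the inequality just proved to obtain $H(\lambda p_n + (1-\lambda)q_n) \geq \lambda H(p_n) + (1-\lambda) H(q_n)$, and take $\liminf$ on both sides using the definition of the extension of $S$ (and hence of $H$) to the frontier. Since $\lambda p_n + (1-\lambda) q_n \to \lambda p + (1-\lambda) q$, this yields $H(\lambda p + (1-\lambda) q) \geq \lambda H(p) + (1-\lambda) H(q)$.

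Finally, continuity on $\mathring{\P}$ is a classical consequence of concavity on a finite-dimensional convex set: any concave (equivalently, any convex) real-valued function on a convex subset of $\mathbb{R}^n$ is automatically continuous at every interior point. This completes the claim. The only mildly delicate step is the boundary extension of concavity, but since the statement only asserts continuity in $\mathring{\P}$, one could even skip that and work exclusively in the interior; the affine decomposition of $S(r,r)$ combined with propriety is really the whole content of the proof.
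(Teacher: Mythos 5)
Your interior argument coincides with the paper's: the affine decomposition $S(r,r)=\lambda S(p,r)+(1-\lambda)S(q,r)$ followed by propriety is exactly the core step. The gap is in the boundary extension, which you cannot actually skip: the proposition asserts concavity over all of $\P$, not merely over $\mathring\P$ (only the \emph{continuity} conclusion is restricted to the interior). As written, your limit-passing does not go through. For a frontier point $p$ the entropy is $H(p)=S(p,p)=\liminf_{p'\to p}S(p,p')$, a $\liminf$ taken in the \emph{second} argument of $S$ with the first argument held fixed at $p$; this is not the same as $\liminf_n H(p_n)=\liminf_n S(p_n,p_n)$ along an interior sequence $p_n\to p$. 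Since $S(p_n,p_n)-S(p,p_n)=\sum_x\bigl(p_n(x)-p(x)\bigr)S(x,p_n)$ and $S(x,p_n)$ may diverge as $p_n$ approaches the frontier (as it does for the logarithmic score), nothing guarantees either $\liminf_n H(p_n)\ge H(p)$ or $H(r)\ge\liminf_n H(r_n)$ --- and both of these are implicitly needed to pass from $H(r_n)\ge\lambda H(p_n)+(1-\lambda)H(q_n)$ to the desired inequality at the boundary. Concave functions can genuinely jump down at the frontier, so this is not a removable technicality.

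The repair, which is the route the paper takes, is to keep $p$, $q$ and $r=\lambda p+(1-\lambda)q$ fixed (possibly on the frontier) and approximate only in the second argument: for every $r'\in\mathring\P$ one still has $S(r,r')=\lambda S(p,r')+(1-\lambda)S(q,r')$, because $S(\cdot,r')$ is linear in its first argument for \emph{any} first argument in $\P$. Taking $\liminf_{r'\to r}$, using superadditivity of $\liminf$, and then applying propriety to each term yields $H(r)\ge\lambda S(p,r)+(1-\lambda)S(q,r)\ge\lambda S(p,p)+(1-\lambda)S(q,q)=\lambda H(p)+(1-\lambda)H(q)$. Your final appeal to the standard fact that concavity implies continuity on the (relative) interior is correct and matches the paper.
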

\begin{proof}
Let $H(p)=S(p,p)$, with $S$ a PSR. Consider any $\lambda\in [0,1]$ and $p,q\in \P$, and let $r=\lambda p +(1-\lambda)q$. Assume first that $r$ is in the interior of $\P$. Then
\begin{eqnarray*}
H(r) & = & S(r,r)\\
    & =  & S(\lambda p +(1-\lambda)q, r) \\
    & = & \lambda S(p,r)+(1-\lambda) S(q,r)\\
    & \geq &  \lambda S(p,p)+(1-\lambda) S(q,q)\\
    & = &  \lambda H(p)+(1-\lambda) H(q)
\end{eqnarray*}
where the second equality follows from the linearity of $S$ w.r.t. the first argument and the inequality follows from the definition of PSR. If $r$ is in the frontier of $\P$,
by the properties of  $\liminf$, one has
\begin{eqnarray*}
H(r) & = & \liminf_{r'\rightarrow r} S(r,r')\\
    & = &  \liminf_{r'\rightarrow r}  S(\lambda p +(1-\lambda),r')\\
    & = &   \liminf_{r'\rightarrow r}  \lambda S(p,r')+(1-\lambda) S(q,r')\\
    & \geq &   \lambda \liminf_{r'\rightarrow r}  S(p,r')+ (1-\lambda)    \liminf_{r'\rightarrow r}   S(q,r')\\
    & = & \lambda S(p,r)+ (1-\lambda)  S(q,r)
\end{eqnarray*}
and then the reasoning proceeds as above. This concludes the proof that $H$ is concave.

Finally, it is a standard result that concavity over $\P$ implies continuity -- in fact, local lipschitzianity - over $\mathring\P$ (see e.g. \cite{ConvexOpt}.)
\end{proof}

On the other hand, under a mild additional assumption, any concave function   is induced by  a PSR, as we will check shortly. Let $H$ be concave on $\mathcal{P}$. It is well known that concave functions enjoy the following \emph{supporting hyperplane property} (see e.g. \cite{ConvexOpt}.) For each $q\in  \mathring\P$, there exists a vector
 $c_q=(c_1,...,c_n)\in \mathbb{R}^{n}$ such that, for each $p\in   \P$ (here $\langle\cdot,\cdot\rangle$ denotes the usual scalar product between two vectors)
\begin{eqnarray}\label{eq:supporting}
H(p) & \leq & H(q)+  \langle c_q,  p-q \rangle  \;=\;H(q)+\sum_{i=1}^n c_i (p_i-q_i)\,.
\end{eqnarray}
The above relation merely means that the graph of $H$ is all below an  hyperplane that is tangent to the point $(q,H(q))$. A vector $c_q$ satisfying \eqref{eq:supporting} is called a \emph{subgradient} of $H$ at $q$.  In particular, if $H$ is differentiable at $q$, there is exactly one choice for $c_q$, namely the gradient of $H$ at $q$
\begin{eqnarray*}
 c_q & = & \nabla H (q) \,=\,\left(\frac{\partial H}{\partial x_1}(q),...,\frac{\partial H}{\partial x_n}(q)\right)\,.
 \end{eqnarray*}
Now, for $q\in\mathring \P$, we set
\begin{eqnarray}\label{eq:psr}
S(x_i,q) & \defin &  H(q) + c_i -\langle c_q,  q \rangle\,.
\end{eqnarray}
Clearly, for each $p$, we have  $S(p,q)= E_p[S(X,q)]= H(q)+ \langle c_q, p-q\rangle $. We extend this to $q$ in the frontier by $S(p,q)\defin \liminf_{q'\rightarrow q} S(p,q')$. Assume now that the vectors $c_q$ can be chosen in such a way that the following property holds true for any $p$ in the frontier
\begin{eqnarray}\label{eq:lim}
\liminf_{p'\rightarrow p} \langle c_{p'}, p-p'\rangle & = & 0\,.
\end{eqnarray}
This condition means that, as $p'$ approaches a point $p$ in the frontier, the tangent hyperplane at $p'$ does not approach  a vertical hyperplane "too fast".  We now check that $S(x,q)$ is a PSR.
First, note that  for any $p\in\P$,
 we have $S(p,p)= H(p)$. For $p$ in the interior, this follows by definition; while for $p$ in the frontier, we note that $S(p,p)=\liminf_{p'} S(p,p')= H(p)+\liminf_{q'} \langle c_{p'}, p-p'\rangle$ and then exploit \eqref{eq:lim}. Now, applying   the supporting hyperplane property \eqref{eq:supporting} for $q$ in the interior we have
\begin{eqnarray*}
 S(p,q) & = &  E_p[S(X,q)] \\
        & = &  H(q)+ \langle c_q, p-q\rangle \\
        & \geq & H(p)\\
        & = & S(p,p)\,.
\end{eqnarray*}
For $q$ in the frontier, the same property  follows
from \[S(p,q)=\liminf_{q'}S(p,q')\geq \liminf_{q'}S(p,p)=S(p,p)\,,\] where the inequality follows from above.
This shows that $S(x,q)$ is a PSR and that the induced entropy is precisely the given concave function $H$. In other words, we have just shown  the following proposition.

\begin{prop}\label{prop:concave2} Every concave function over $\P$ that respects \eqref{eq:lim} is the entropy induced by some PSR.
\end{prop}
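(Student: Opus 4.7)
The plan is to construct an explicit proper scoring rule (PSR) whose induced entropy coincides with the given concave function $H$, following the subgradient-based recipe that is natural once one recalls the supporting hyperplane characterization of concavity. First I would invoke, at each point $q\in\mathring{\P}$, the supporting hyperplane property \eqref{eq:supporting} to pick a subgradient $c_q=(c_1,\ldots,c_n)\in \mathbb{R}^n$ such that $H(p)\leq H(q)+\langle c_q, p-q\rangle$ for every $p\in\P$. The existence of such subgradients is standard for concave functions defined on an open convex set (see the gradient/subdifferential theory of \cite{ConvexOpt}); moreover the hypothesis \eqref{eq:lim} is explicitly about the possibility of choosing these subgradients compatibly near the boundary, so I would just fix one such coherent choice.

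Next I would define, for $q\in\mathring{\P}$, the candidate scoring rule by $S(x_i,q) \defin H(q) + c_i - \langle c_q, q\rangle$ as in \eqref{eq:psr}, extending it to the frontier via $S(\cdot,q)\defin \liminf_{q'\to q} S(\cdot,q')$ exactly as in the definition of $S(p,q)$ preceding the statement. A one-line calculation using linearity of expectation then gives, for any $p\in\P$ and $q\in\mathring{\P}$, the clean identity $S(p,q)=H(q)+\langle c_q, p-q\rangle$. The first non-trivial check is that $S(p,p)=H(p)$ for every $p$: for interior $p$ this is immediate from the identity (the scalar product vanishes); for boundary $p$ it follows from the definition as a $\liminf$ combined with \eqref{eq:lim}, since $S(p,p)=\liminf_{p'\to p}\bigl(H(p')+\langle c_{p'},p-p'\rangle\bigr)$ and continuity of $H$ at $p$ (which, strictly speaking, one must invoke or argue along the approaching sequence) together with \eqref{eq:lim} collapse this to $H(p)$.

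The central step is then to verify that $S$ is proper, i.e.\ that $S(p,q)\geq S(p,p)=H(p)$ for all $p,q\in\P$. For interior $q$ this is precisely the supporting hyperplane inequality rearranged: $S(p,q)=H(q)+\langle c_q, p-q\rangle \geq H(p) = S(p,p)$. For boundary $q$ one passes to the $\liminf$: $S(p,q)=\liminf_{q'\to q} S(p,q')\geq \liminf_{q'\to q} H(p) = H(p)=S(p,p)$. Since $S(p,p)=H(p)$ and $S$ is proper, the entropy induced by $S$ is exactly $H$, which is what we wanted.

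I expect the main obstacle to lie in the boundary analysis. Interior points are essentially handled by the supporting hyperplane machinery, and the algebra is routine. The delicate matter is showing that the $\liminf$ extension behaves as desired at points of $\P\setminus\mathring\P$: one must ensure that $\liminf_{q'\to q} S(p,q')$ is well-defined (possibly $+\infty$), that it still dominates $H(p)$, and that $S(p,p)$ collapses to $H(p)$. All three hinge on the compatibility condition \eqref{eq:lim}, which is precisely why it appears as a hypothesis; without it one could construct concave $H$ whose natural subgradients blow up so quickly at the boundary that no $\liminf$-extension of the associated scoring rule recovers $H$ on the frontier.
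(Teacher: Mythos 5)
Your proposal is correct and follows essentially the same route as the paper: you define $S(x_i,q)=H(q)+c_i-\langle c_q,q\rangle$ from a subgradient via the supporting hyperplane property \eqref{eq:supporting}, extend to the frontier by $\liminf$, verify $S(p,p)=H(p)$ using \eqref{eq:lim}, and obtain properness from the hyperplane inequality on the interior and from monotonicity of $\liminf$ on the frontier. Your explicit remark that continuity of $H$ at frontier points must be invoked to collapse $\liminf_{p'\to p}\bigl(H(p')+\langle c_{p'},p-p'\rangle\bigr)$ to $H(p)$ is in fact slightly more careful than the paper's own treatment of that step.
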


\begin{exa}
Let us consider the Shannon entropy function $H(p)=-\sum_i p_i\log p_i$. By direct calculation, it is immediate to check that $S(x_i,q)=-\log q_i$ is a PSR for Shannon entropy. However, it is instructive to apply the recipe in the proof of Proposition \ref{prop:concave2} to reconstruct $S(x,q)$.

$H(p)$ is differentiable in the interior of the probability simplex.  
The gradient of $H$ at $q$ is
\[
\nabla H(q)= \left(-(1+\log q_1),...,-(1+\log q_n)\right) \defin c_q\,.
\]
Now easy calculations show that $\langle c_{p'},p-p'\rangle=H(p)-H(p')+D(p||p')\rightarrow 0$ as $p'\rightarrow p$, thus making the condition \eqref{eq:lim} true\footnote{Here $D(p||p')\defin \sum_i p_i\log({p_i}/q_i)$ is the familiar Kullback-Leibler divergence. Note in particular that $D(p||p')\rightarrow 0$ as $p'\rightarrow p$. In passing, it is not true in general that $D(p||p')\rightarrow 0$ as $p\rightarrow p'$.}.    Noting that $\langle c_q,q\rangle = H(q)-1$, we apply \eqref{eq:psr}  and define
\[
S(x_i,q)\defin H(q)-(1+\log q_i)-(H(q)-1)\,=\, -\log q_i\,.
\]
A similar calculation for $H(p)=\var(p)$ yields
\[
S(x_i,q)=\var(q)+x^2_i-2x_i-E_q[X^2]+2E_q[X] = (x_i-1)^2-(E_q[X]-1)^2
\]
which leads to $S(p,q)=-E_q[X]^2+2E_q[X]+E_p[X^2]-2E_p[X]$ and, as expected, $S(p,p)=\var(p)$.

We finally consider the case of error entropy, $H(p)=1-\max_i p_i$. For each $q$ let us denote by $j_q$  an index in $\{1,...,n\}$ such that $q_{j_q}= \max_i q_i$ (if there is more than one such index, choose one arbitrarily.) At each $q$  with a  single maximal element, $H$ is differentiable and we can  set (here $\delta_{xy}$ is the Kronecker's delta symbol)
\[
c_q \defin \nabla H(q) = (-\delta_{1j_q},...,-\delta_{nj_q})\,.
\]
Hence, applying \eqref{eq:psr}  and noting that $\langle c_q,q\rangle= -q_{j_q}$, for each such $q$ we let
\[
S(x_i,q) \defin (1-q_{j_q})-\delta_{ij_q}+q_{j_q} = 1-\delta_{ij_q}\,.
\]
At points $q$ with more than one  maximal element,  $H$ is not differentiable, and the choice of $c_q$ is not unique. However,    $c_q=(-\delta_{1j_q},...,-\delta_{nj_q})$  is still a subgradient, so the same definition  of $S(x_i,q) $  as above applies. Note that $S(p,q)=1-p_{j_q}$ and, in particular, $S(p,p)=1-p_{j_p}=1-\max_i p_i$ as expected.
\end{exa}

\section{Conclusion, Related and Further Work}\label{sec:concl}
We have proposed a general information-theoretic model for the analysis of confidentiality under adaptive attackers. Within this model, we have proven several results on the limits of such attackers,   on the relations between adaptive and non-adaptive strategies, and on the problem of searching for optimal  finite strategies. We have also elucidated a connection between our notion of uncertainty function  and Bayesian decision theory.

\subsection{Related Work} In \cite{KB07}, K\"{o}pf and Basin
  introduced  an information-theoretic model of adaptive attackers for deterministic mechanisms. Their analysis is conducted essentially on the case of uniform prior distributions. Our model generalizes  \cite{KB07}  
  in several re\-spe\-cts: we consider probabilistic mechanisms, generic priors and generic uncertainty functions. More important than that, we contrast quantitatively  the efficiency of adaptive and non-adaptive strategies, we characterize   maximum leakage of infinite strategies,  and we show how to express   information leakage as a Bellman equation. The latter leads to search algorithms for optimal strategies that, when specialized to the deterministic case, are more time-efficient than the exhaustive  search outlined  in \cite{KB07} (see Section \ref{sec:finitestrat}.)

Our previous paper \cite{BP12} tackles 
multirun, non-adaptive adversaries, in the case of   min-entropy leakage. In this simpler setting, 
a special case of the present framework, one manages to obtain  simple analytical results, such as the exact convergence rate of the adversary's success probability as the number of observations goes to infinity.

Alvim et al. \cite{AAP11}
study information flow in interactive mechanisms, described
as probabilistic automata where  secrets and observables are seen as actions that
alternate during execution.  Information-theoretically, they
characterize these mechanisms as channels with feedback,
giving a Shannon-entropy based definition of leakage. Secret actions at each step   depend on  previous history, but it is not clear that this gives the adversary any ability to   adaptively influence the next observation, in our sense.

In \cite{ACPS12}, Alvim et al.  study $g$-\emph{leakage}, a generalization of min-entropy leakage, where the adversary's benefit deriving from a guess about a secret is specified using a \emph{gain function} $g$: intuitively, the closer the guess to the secret, the higher the gain. Alvim et al. derive general results about $g$-leakage, including bounds between min-capacity, $g$-capacity and Shannon capacity. Gain functions are conceptually very close to the proper scoring rules (PSRs) we considered in Section \ref{sec:roleconcave}. Abstracting  from the unimportant difference of encoding gains rather then  losses, a gain function can be seen in fact as a special case of a PSR where the forecast  put forward by the forecaster is always a Dirac's delta. One important technical  difference between \cite{ACPS12} on one side and the framework of PSRs and our paper on the other side, is that   entropy functions, as defined in \cite{ACPS12}, are all generalizations of the familiar min-entropy. As such, they are in general neither concave nor convex. A thorough investigation of the connections between $g$-leakage and PRSs is left for future work.
More or less contemporary to the short version of the present paper is Mardziel et al.'s  \cite{SSP14}, which extends the analysis via $g$-leakage functions to systems with memory.
  This work is similar in spirit to ours, but now successive responses to queries may not be independent, as the secret evolves over time. They too utilize backward induction to calculate leakage. Like in the static case, this dynamic $g$-leakage does not lend itself to be recast in the present framework. A dynamic approach is also at the core of a  model  in \cite{BPP11} based on Hidden Markov Models, where the observed system     evolves over time, although the secret is fixed.

After the publication of the short version of the present paper \cite{BP14}, we learned from a  statistician colleague about the existence of a large body of work in Bayesian forecasting  and PSRs. A good synthesis of this research can be found in   the works of DeGroot, see e.g. \cite{DeGroot62,DeGroot70} and references therein, although the terminology used there is slightly different (utility functions are considered rather than scoring rules.)  Remarkably, \cite{DeGroot62} contains considerations on sequential   observations and decisions which, despite the different terminology and emphases,  come  very close to our adaptive model of \qif. Dawid  \cite{Dawid} and Gneiting and Raftery \cite{GR07} give     modern accounts of these themes.
The role of concavity in \qif\ is also central to some recent works by McIver et al. \cite{LICS12,POST14}. An important result is the presentation of  a prior vulnerability    as a ``disorder test'' that is, interestingly, defined in terms of continuous and concave functions. It would be interesting to see how much these approaches share  with the one based on PSRs.

The use of Bellman equation and backward induction, applied to multi-threaded programs, in a context where strategies are schedulers,  is also found in \cite{Chen}.
\cite{advise1,advise2} propose    models  to assess system security against classes of adversaries characterized by   user-specified `profiles'. While these models share some similarities with ours - in particular,  they too employ \mdp's to keep track of possible adversary strategies -  their intent is quite different from ours: they are used to build and assess    analysis   tools, rather than to obtain analytical results. Also, the strategies they consider are tailored to worst-case adversary's utility, which, differently from our   average-case measures, is not apt to express information leakage.

\subsection{Further work}
There are several directions   worth being   pursued, starting from the present work. First, one would like to implement and experiment with the search algorithm described in Section \ref{sec:finitestrat}. Adaptive querying of datasets  might represent an ideal ground for evaluation of such algorithms.
Second, one would like to investigate worst-case variations of the present framework: an interesting possibility is to devise an adaptive version  of Differential Privacy \cite{D06,DMNS06} 
or   one of its variants \cite{BPa12}.
Finally, connections between (adaptive) \qif\ and PSRs in Statistics deserve to be further explored.

\section*{Acknowledgement}
\noindent  We should like to thank Fabio Corradi at { DiSIA}  for stimulating discussions on the relationship between information leakage and inference in Statistics, and in particular for pointing out to us the work on scoring rules in Bayesian decision theory.

We thank an anonymous reviewer of \cite{BP14} for suggesting us the envelopes example (Example \ref{ex:envelopes}).

\bibliographystyle{alpha}


\appendix
\ifmai \setcounter{lem}{0}
\renewcommand{\thelemma}{\Alph{section}\arabic{lem}}

\setcounter{thm}{0}
\renewcommand{\thetheorem}{\Alph{section}\arabic{thm}}

\setcounter{prop}{0}
\renewcommand{\theproposition}{\Alph{section}\arabic{prop}}

\setcounter{cor}{0}
\renewcommand{\thecorollary}{\Alph{section}\arabic{cor}}

\setcounter{rem}{0}
\renewcommand{\theremark}{\Alph{section}\arabic{rem}}
\fi

\section{Proofs of Lemma \ref{lemma:uno} and of Lemma \ref{lemma:due}}\label{app:lemmas}

\begin{lem}[Lemma \ref{lemma:uno}]  $I_\sigma(X;Y)\geq 0$. Moreover $I_\sigma(X;Y)= 0$ if $X$ and $Y$ are independent.
\end{lem}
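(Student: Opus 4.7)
The plan is to derive both claims directly from the concavity of $U(\cdot)$ together with an elementary convex-decomposition of the prior $p(\cdot)$ in terms of the posteriors $p_\sigma(\cdot\mid y^n)$.

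First I would observe that, since $\sigma$ is finite, the posteriors $p_\sigma(\cdot\mid y^n)$ (taken over the $y^n$ with $p_\sigma(y^n)>0$) together with the weights $p_\sigma(y^n)$ express $p(\cdot)$ as a convex combination: by the law of total probability one has $p(x)=\sum_{y^n} p_\sigma(y^n)\,p_\sigma(x\mid y^n)$ for every $x\in\X$, and the weights $p_\sigma(y^n)$ sum to $1$. Viewing each distribution as an element of the simplex $\P\subseteq\mathbb{R}^{|\X|}$, this is the equality $p=\sum_{y^n} p_\sigma(y^n)\,p_\sigma(\cdot\mid y^n)$ in $\P$.

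Next, I would apply Jensen's inequality in the form that is available for concave functions on a convex set. Since $U$ is concave on $\P$ (Definition~\ref{def:uncert}), the convex decomposition above yields
\[
U(p)\;=\;U\!\left(\sum_{y^n} p_\sigma(y^n)\,p_\sigma(\cdot\mid y^n)\right)\;\geq\;\sum_{y^n} p_\sigma(y^n)\,U\!\bigl(p_\sigma(\cdot\mid y^n)\bigr)\;=\;U_\sigma(X\mid Y),
\]
which is exactly $I_\sigma(X;Y)=U(X)-U_\sigma(X\mid Y)\geq 0$. This is the main (and essentially only) content of the first claim; no step looks to be a serious obstacle, as it reduces to naming the convex combination correctly and invoking concavity.

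For the independence claim, if $X$ and $Y$ are independent under $p_\sigma(\cdot)$ then for every $y^n$ with $p_\sigma(y^n)>0$ one has $p_\sigma(\cdot\mid y^n)=p(\cdot)$, so $U\bigl(p_\sigma(\cdot\mid y^n)\bigr)=U(p)$. Substituting into the definition gives $U_\sigma(X\mid Y)=U(p)\sum_{y^n}p_\sigma(y^n)=U(p)$, and hence $I_\sigma(X;Y)=0$. The only mildly subtle point, which I would flag explicitly, is that independence must be read with respect to the joint distribution $p_\sigma(\cdot)$ induced by the prior and the strategy, so that the equality $p_\sigma(x\mid y^n)=p(x)$ indeed holds on the support of $p_\sigma(y^n)$.
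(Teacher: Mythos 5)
Your proposal is correct and follows essentially the same route as the paper's own proof in the appendix: decompose the prior as the convex combination $p(\cdot)=\sum_{y^n}p_\sigma(y^n)\,p_\sigma(\cdot\mid y^n)$ via total probability, then apply concavity of $U$ and Jensen's inequality, with the independence case handled by noting that all posteriors coincide with the prior. No substantive differences.
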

\begin{proof}
First note that, by a simple manipulation relying on Bayes theorem, we can express the prior distribution $p(\cdot)$ as follows, for each $x\in \X$ (summation ranges over $y$ of positive probability):
\begin{eqnarray*}
p(x) & = & \sum_y p(y)p (x|y)\,.
\end{eqnarray*}
Hence by concavity of $U$ and Jensens's inequality:
\begin{eqnarray*}
U(X) & = & U(p)\\
     & = & U\left(\sum_y p(y)p (\cdot | y) \right)\\
     & \geq & \sum_y p(y) U(p(\cdot |y))\\
     & = & U(X|Y)\,.
\end{eqnarray*}
Moreover, if $X$ and $Y$ are independent, then for each $y$ of positive probability, $p(x)=p(x|y)$, hence $\sum_y p(y)p (\cdot|y)=p(\cdot)$, so that $U(X)=U(p)=U(X|Y)$.
\end{proof}

\begin{lem}[Lemma \ref{lemma:due}]
$x \equiv x'$ iff for each $a\in Act$, $p_a(\cdot|x) = p_a(\cdot|x')$. In other words, $\equiv$ is $\cap_{a\in Act}\equiv_a$.
\end{lem}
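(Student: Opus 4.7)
The proof splits into the two directions of the biconditional, and the right-to-left direction is the one carrying the content.

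For the forward implication, I would simply instantiate the universal quantifier in Definition \ref{def:indist} with the single-action strategies. Fix any $a \in Act$ and consider $\sigma = [\varepsilon \mapsto a]$. By the definition of $p_\sigma$ given in equation \eqref{eq:probspace}, for any $y \in \Y$ we have $p_{[a]}(y \mid x) = p_a(y \mid x)$. Hence $x \equiv x'$ forces $p_a(\cdot \mid x) = p_a(\cdot \mid x')$ for every $a$.

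For the reverse implication, I would proceed by induction on the length $l$ of a finite strategy $\sigma$, showing that the hypothesis ``$p_a(\cdot \mid x) = p_a(\cdot \mid x')$ for every $a$'' implies $p_\sigma(y^n \mid x) = p_\sigma(y^n \mid x')$ for every $y^n$. Note first that the domain $\dom(\sigma)$ (and therefore the support of $p_\sigma(\cdot \mid x)$ viewed as a set of sequences) does not depend on $x$ at all: it is determined purely by the strategy tree, which is fixed. For $l = 1$, $\sigma = [\varepsilon \mapsto a]$ for some $a$, and the claim is just the hypothesis at that $a$. For the inductive step, let $a = \sigma(\varepsilon)$ and let $\sigma_y$ denote the $y$-derivative as in Section \ref{sec:finitestrat}; each $\sigma_y$ is a finite strategy of length at most $l-1$. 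Using \eqref{eq:probspace} one obtains the factorisation
\begin{equation*}
p_\sigma(y \cdot y^{n-1} \mid x) \;=\; p_a(y \mid x) \cdot p_{\sigma_y}(y^{n-1} \mid x)
\end{equation*}
whenever $y \cdot y^{n-1}$ lies in the support, and the analogous identity for $x'$. The first factor is equal for $x$ and $x'$ by the base hypothesis, and the second factor is equal by the induction hypothesis applied to the shorter strategy $\sigma_y$. Hence $p_\sigma(\cdot \mid x) = p_\sigma(\cdot \mid x')$ for every finite $\sigma$, which gives $x \equiv x'$.

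No serious obstacle arises; the only point requiring a little care is the book-keeping that $\dom(\sigma)$ and the factorisation of $p_\sigma$ are independent of $x$, so that the equality of conditional probabilities at each single action does propagate through the recursive definition of $p_\sigma$. The inductive step is then essentially a one-line multiplicative argument.
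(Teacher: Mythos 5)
Your proof is correct and takes essentially the same route as the paper: the forward direction is the same instantiation with the length-1 strategy $[a]$, and your induction on strategy length merely formalizes what the paper reads off directly from \eqref{eq:probspace}, namely that $p_\sigma(y^n\mid x)$ is the product $\prod_i p_{a_i}(y_i\mid x)$ with actions $a_i$ determined by $\sigma$ and the observed prefix alone, so termwise equality of the factors gives equality of the products. The only bookkeeping wrinkle in your version (a derivative $\sigma_y$ being empty when $y\notin\dom(\sigma)$) is harmless, since in that case $p_\sigma(y\mid x)=p_a(y\mid x)$ agrees for $x$ and $x'$ directly by hypothesis.
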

\begin{proof}
Let $x,x'\in \X$. Assume first $x\equiv_a x'$ for each action $a$. Let $\sigma$ be any finite strategy   and $y^ny_{n+1}$ a sequence such that $y^n\in\dom(\sigma)$ and $y^ny_{n+1}\notin\dom(\sigma)$. From \eqref{eq:probspace}, we know that
\begin{eqnarray*}
p_\sigma(y^ny_{n+1}|x) & = & p_{a_1}(y_1|x)\cdots p_{a_n}(y_n|x)p_{a_{n+1}}(y_{n+1}|x)\\
p_\sigma(y^ny_{n+1}|x') & = & p_{a_1}(y_1|x')\cdots p_{a_n}(y_n|x')p_{a_{n+1}}(y_{n+1}|x')
\end{eqnarray*}
where $a_1=\sigma(\epsilon)$, $a_2=\sigma(y_1)$,..., $a_n=\sigma(y_1\cdots y_{n-1})$ and $a_{n+1}=\sigma(y^n)$. From the above equations, it is immediate to conclude that,  as $p_a(\cdot|x)=p_a(\cdot|x')$ for each action $a$, then $p_\sigma(y^ny_{n+1}|x)=p_\sigma(y^ny_{n+1}|x')$. Since this holds for any finite $\sigma$, we conclude $x\equiv x'$. The other direction is obvious, as $p_a(\cdot|x)  =p_{[a]}(\cdot|x)$ (where $[a]$ is a length 1 strategy) for any $a$ and $x$.
\end{proof}

\section{Proof of Theorem \ref{th:due}}\label{app:proof}
In order to prove Theorem \ref{th:due}, we introduce some terminology and concepts from the information-theoretic \emph{method of types} \cite{CT06}.
For distributions $p(\cdot)$ and $q(\cdot)$ on $\Y$, we let their \emph{Kullback-Leibler (KL) divergence}
be defined as: $D(p\|q)\defin \sum_y p(y)\log\frac{p(y)}{q(y)}$, with the proviso that
$0\cdot \log \frac 0 {p(y)}= 0$ and $p(y)\cdot \log \frac {p(y)} 0= +\infty$ for $p(y)>0$.
Given $n\geq 1$, and a sequence $y^n\in \Y^n$,  the \emph{type} (or empirical distribution) of $y^n$, denoted $t_{y^n}$, is the probability distribution over $\Y$ defined thus: $t_{y^n}(y)\defin \frac{\mathrm{n}(y|y^n)}n$, where $\mathrm{n}(y|y^n)$ denotes the number of occurrences of $y$ in $y^n$.
In this section, $H$ will always stand for Shannon entropy, $H(p)=-\sum_x p(x)\log p(x)$.
We will often abbreviate $H(t_{y^n})$ as $H(y^n)$, and $D(t_{y^n}||q)$ as $D({y^n}||q)$, thus denoting the type
by a corresponding sequence, when no confusion arises.   Given $\varepsilon>0$ and a probability distribution $q(\cdot)$ on $\Y$, the ``ball'' of $n$-sequences whose type is   within distance $\varepsilon$ of $q(\cdot)$ is defined thus:
\begin{eqnarray*}
B^{(n)}(q,\varepsilon) & \defin & \{ y^n\,:\,D(y^n||q)\leq \varepsilon\}\,.
\end{eqnarray*}

\noindent
We shall also make use of the following new terminology about sequences. Assume $|Act|=k$. Given a sequence $y^n=(y_1,y_2,\ldots,y_n)$ and an integer $j=1,\ldots,k$, we shall denote by $y^n(j)$ the subsequence $(y_j,y_{k+j},y_{2k+j},\ldots)$, obtained by taking the symbols of $y^n$ at position  $j, k+j, 2k+j,\ldots$.
In the rest of the section, unless otherwise stated, we let $\sigma$ be the infinite non-adaptive strategy that plays actions $a_1,\ldots,a_k, a_1, a_2,\ldots$, in a lock-step fashion:  $\sigma(y^j)\defin a_{(j\bmod k)+1}$. For any $n\geq 1$, we let $\sigma_n$ be the truncation at level $n$ of $\sigma$: $\sigma_n\defin \sigma\backslash n$.
For a prior $p(\cdot)$, let  $p_{\sigma_n}$ be the resulting joint probability distribution on $\X\times \Y^n$: note that, for each $x$, the support of $p_{\sigma_n}(\cdot|x)$ is included in $\Y^n$. Let $(X,Y^n)$ be jointly distributed according to $p_{\sigma_n}$: here we have introduced the superscript ${}^n$ to record explicitly the dependence of $Y$ from $n$.
Let us define the set
of  sequences $y^n$   where the  type of each sub-sequence $y^n(i)$ is within $\varepsilon$ distance
of the distribution   $p_{a_i}(\cdot|x)$, thus:
\begin{eqnarray}\label{defUU}
\UU^{(n)}(x,\varepsilon) & \defin & \{y^n\,:\, D \left(\,y^n(i)||p_{a_i}(\cdot|x)\,\right)\leq \varepsilon\,\text{ for }i=1,\ldots,k\}\,.
\end{eqnarray}
\noindent
Furthermore, we define the following quantities depending on a given sequence $y^n$ and $x\in \X$:
\begin{eqnarray}
\HH(y^n) & \defin & \sum_{i=1}^k H(y^n(i))\label{defHH}
\end{eqnarray}
\begin{eqnarray}
\DD(y^n||p_{\sigma_n}(\cdot|x)) & \defin & \sum_{i=1}^k D(y^n(i)||p_{a_i}(\cdot|x))\label{defDD}\,.
\end{eqnarray}
\noindent
Finally, for each sequence $y^m\in\Y^m$ and for each action $a\in Act$, we let
\begin{eqnarray}\label{product}
  p^m_a(y^m|x) = \prod_{i=1}^{m}{p_{a}(y_i|x)}
\end{eqnarray}
\noindent
(this is the probability of generating $y^m$ with i.i.d. extractions obeying distribution $p_a(\cdot|x)$.)

\begin{lem}\label{lemma:KL}
Let $n$ be a multiple of $k$ and $x\in \X$. Then
\[
p_{\sigma_n}(y^n|x)=2^{-\frac n k [ \HH(y^n)+\DD(y^n||p_{\sigma_n}(\cdot|x))]}\,.
\]
\end{lem}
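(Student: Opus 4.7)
The plan is to reduce the statement to a well-known identity from the method of types, applied separately to each of the $k$ cyclic subsequences of $y^n$.

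First I would fix $m \defin n/k$ (an integer by hypothesis) and observe that, since $\sigma$ cycles through $a_1,\ldots,a_k$ and the mechanism is stateless, responses at different time steps are conditionally independent given $x$. Moreover, by construction, the $m$ symbols $y^n(i) = (y_i, y_{k+i}, y_{2k+i}, \ldots)$ are precisely the responses generated by the $m$ submissions of action $a_i$, and each such symbol is drawn i.i.d. from $p_{a_i}(\cdot|x)$. Therefore, using the notation in \eqref{product},
\[
p_{\sigma_n}(y^n|x) \;=\; \prod_{i=1}^k p^m_{a_i}\bigl(y^n(i)\,\big|\,x\bigr).
\]

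Next I would apply the standard method-of-types identity
\[
p^m_{a}(y^m|x) \;=\; 2^{-m\,[\,H(y^m) + D(y^m\|p_a(\cdot|x))\,]},
\]
which follows by writing $p^m_a(y^m|x) = \prod_y p_a(y|x)^{m\,t_{y^m}(y)}$, taking logarithms, and using the elementary identity $\sum_y t(y)\log p_a(y|x) = -H(t) - D(t\|p_a(\cdot|x))$. Applying this to each factor $p^m_{a_i}(y^n(i)|x)$ gives
\[
p_{\sigma_n}(y^n|x) \;=\; \prod_{i=1}^k 2^{-m\,[\,H(y^n(i)) + D(y^n(i)\|p_{a_i}(\cdot|x))\,]} \;=\; 2^{-m\,\sum_{i=1}^k [\,H(y^n(i)) + D(y^n(i)\|p_{a_i}(\cdot|x))\,]}.
\]
Finally, recognizing the exponent via definitions \eqref{defHH} and \eqref{defDD} as $-\tfrac{n}{k}\bigl[\HH(y^n) + \DD(y^n\|p_{\sigma_n}(\cdot|x))\bigr]$ yields the claim.

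There is no real obstacle here: the only subtlety is the bookkeeping that ensures the cyclic strategy $\sigma_n$ indeed plays each $a_i$ exactly $m$ times (which is where the hypothesis $k \mid n$ is used) and that the factorization into $k$ i.i.d. blocks is correct. Once that is set up, the result is an immediate reinterpretation of the classical per-block identity.
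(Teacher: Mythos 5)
Your proof is correct and follows exactly the paper's own argument: factorize $p_{\sigma_n}(y^n|x)$ into the $k$ i.i.d.\ blocks $p^{n/k}_{a_i}(y^n(i)|x)$, apply the per-block type identity (which the paper cites as \cite[Theorem~11.1.2]{CT06} rather than re-deriving), and collect the exponents via \eqref{defHH} and \eqref{defDD}. No differences worth noting.
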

\begin{proof}
\begin{align}
p_{\sigma_n}(y^n|x) 
&= \prod_{j=1}^k{p^{\frac n k}_{a_j}(y^n(j)|x)}\label{eq_A}\\
&= \prod_{j=1}^k 2^{-\frac{n}{k}\left(H(y^n(j))+D\left(y^n(j)\|p_{a_j}(\cdot|x)\right)\right)}\label{eq_B}\\
&= 2^{-\frac{n}{k}\sum_{j=1}^k{\left(H(y^n(j))+D\left(y^n(j)\|p_{a_j}(\cdot|x)\right)\right)}}\nonumber\\
&= 2^{-\frac{n}{k}\left(\HH(y^n)+\DD\left(y^n\|p_{\sigma_n}(\cdot|x)\right)\right)}\label{eq_C}
\end{align}
\noindent
where:  (\ref{eq_A}) follows from re-arranging   factors and the definition of $p^{\frac n k}_{a_j}(\cdot)$;
(\ref{eq_B}) follows from   \cite[Theorem~11.1.2]{CT06};
in (\ref{eq_C}) we have applied definitions (\ref{defHH}) and (\ref{defDD}).
\end{proof}

Below, for a set $A$ and a distribution $q(\cdot)$, we let $q(A)$ denote $\sum_{a\in A} q(a)$.

\begin{lem}\label{lemma:ball}
Let $n$ be a multiple of $k$, $x\in \X$ and $\varepsilon>0$.
Then
\[
p_{\sigma_n}(\UU^{(n)}(x,\varepsilon)|x)\geq 1-2^{-\frac n k \varepsilon}{\left(\frac n k +1\right)}^{k|\Y|}C
\]
\noindent
for some constant $C$, not depending on $n$.
\end{lem}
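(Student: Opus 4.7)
The plan is to bound the probability of the complement of $\UU^{(n)}(x,\varepsilon)$ via a union bound over the $k$ subsequences, and then apply a standard method-of-types estimate to each subsequence. Observe that, under $p_{\sigma_n}(\cdot|x)$ and by \eqref{product} used componentwise, the $k$ subsequences $y^n(1),\ldots,y^n(k)$ are mutually independent, and each $y^n(i)$ is an i.i.d. sample of length $m=n/k$ drawn from $p_{a_i}(\cdot|x)$. The complement of $\UU^{(n)}(x,\varepsilon)$ is, by definition, the union over $i=1,\ldots,k$ of the events $\{y^n : D(y^n(i)\|p_{a_i}(\cdot|x))>\varepsilon\}$, hence by a union bound it suffices to show that, for each $i$,
\[
\Pr\bigl(\,D(y^n(i)\,\|\,p_{a_i}(\cdot|x))>\varepsilon\,\bigr)\;\leq\;(m+1)^{|\Y|}\,2^{-m\varepsilon}\,.
\]

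For this, I would invoke the classical method-of-types estimate (see e.g.\ \cite[Ch.~11]{CT06}): the number of distinct types of $m$-sequences over $\Y$ is at most $(m+1)^{|\Y|}$, and for any type $t$ one has $q^{\otimes m}(\{z^m : t_{z^m}=t\})\leq 2^{-mD(t\|q)}$. Summing $2^{-mD(t\|q)}\leq 2^{-m\varepsilon}$ over the (at most $(m+1)^{|\Y|}$) types $t$ with $D(t\|q)>\varepsilon$ yields the displayed inequality with $q=p_{a_i}(\cdot|x)$.

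Combining, the probability of the complement is at most
\[
k\cdot (m+1)^{|\Y|}\,2^{-m\varepsilon}\;=\;k\,\Bigl(\tfrac{n}{k}+1\Bigr)^{\!|\Y|} 2^{-\frac{n}{k}\varepsilon}\;\leq\;k\,\Bigl(\tfrac{n}{k}+1\Bigr)^{\!k|\Y|} 2^{-\frac{n}{k}\varepsilon},
\]
using $(n/k+1)^{|\Y|}\leq (n/k+1)^{k|\Y|}$ since $k\geq 1$. Setting $C=k$ and taking complements gives the claimed bound.

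There is no real obstacle here: the lemma is essentially a repackaging of Sanov's weak converse for i.i.d.\ sources, lifted to the product structure induced by the round-robin non-adaptive strategy $\sigma$. The only point requiring a touch of care is making explicit the independence of the subsequences $y^n(i)$ under $p_{\sigma_n}(\cdot|x)$, which is immediate from the definition of $\sigma$ and the memoryless nature of the mechanism, as already recorded in \eqref{product} and Lemma~\ref{lemma:KL}. The looseness absorbed into the exponent $k|\Y|$ (rather than $|\Y|$) is deliberate, and presumably convenient for the subsequent use of this lemma in the proof of Theorem~\ref{th:due}.
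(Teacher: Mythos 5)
Your proof is correct, and it reaches the stated bound by a slightly different combinatorial route than the paper. Both arguments rest on the same two ingredients: the factorization $p_{\sigma_n}(y^n|x)=\prod_{i=1}^k p^{m}_{a_i}(y^n(i)|x)$ with $m=n/k$, and the per-subsequence method-of-types estimate $\Pr\bigl(D(y^n(i)\|p_{a_i}(\cdot|x))>\varepsilon\bigr)\leq (m+1)^{|\Y|}2^{-m\varepsilon}$ (which the paper cites directly as \cite[Eq.~11.67]{CT06} and which you re-derive from the type-counting and type-class probability bounds). Where you diverge is in how the $k$ bounds are combined: you pass to the complement of $\UU^{(n)}(x,\varepsilon)$ and apply a union bound, obtaining $1-k(m+1)^{|\Y|}2^{-m\varepsilon}$ and hence $C=k$; the paper instead uses the independence of the subsequences to write $p_{\sigma_n}(\UU^{(n)}(x,\varepsilon)|x)$ exactly as the product $\prod_{i=1}^k p^m_{a_i}\bigl(B^{(m)}(p_{a_i}(\cdot|x),\varepsilon)\,|\,x\bigr)\geq\bigl(1-2^{-m\varepsilon}(m+1)^{|\Y|}\bigr)^k$ and then expands this binomially to extract $C=k\cdot\max_i\binom{k}{i}$. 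Your route is marginally more elementary --- the union bound needs only the marginal law of each $y^n(i)$, not their joint independence, and it produces a smaller constant and the sharper exponent $|\Y|$ before you deliberately relax it to $k|\Y|$ to match the stated form. Since the lemma only asserts the existence of some $n$-independent constant $C$, either choice is acceptable, and both bounds are equally adequate for the way the lemma is consumed in the proof of Theorem~\ref{th:due}, where only the decay $2^{-\frac{n}{k}\varepsilon}$ up to a polynomial factor matters.
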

\begin{proof}
Let $m=n/k$. We give a lower bound on the probability of 
$\UU^{(n)}(x,\varepsilon)$ as follows.
\begin{eqnarray}
p_{\sigma_n}(\UU^{(n)}(x,\varepsilon)|x) & = & \sum_{y^n\in \UU^{(n)}(x,\varepsilon) } p_{\sigma_n}(y^n|x)\ = \  \sum_{y^n\in \UU^{(n)}(x,\varepsilon) } \prod_{i=1}^k p^m_{a_i}(y^n(i)|x)\nonumber\\
& = &  \prod_{i=1}^k \sum_{y^m\in B^{(m)}(p^m_{a_i}(\cdot|x),\varepsilon)} p_{a_i}(y^m|x) \ = \ \prod_{i=1}^k   p^m_{a_i}\left(B^{(m)}(p_{a_i}(\cdot|x),\varepsilon)\,|\,x\right)\label{eq:tricky}\\
    & \geq  &  \prod_{i=1}^k   \left(1- 2^{-m\varepsilon}(m+1)^{|\Y|}\right) \quad\  = \ \left(1- 2^{-m\varepsilon}(m+1)^{|\Y|}\right)^k\label{eq:tricky2}\\
       & = & 1+ \sum_{i=1}^k {k \choose i}(-1)^i 2^{-mi\varepsilon}(m+1)^{|\Y|i}\ \geq \ 1 - 2^{-m\varepsilon}(m+1)^{k|\Y|}C\label{eq:tricky3}
\end{eqnarray}
\noindent
where: the first equality in \eqref{eq:tricky} follows from the definition of $\UU^{(n)}(x,\varepsilon)$; the inequality \eqref{eq:tricky2} follows
 from   \cite[Eq. 11.67]{CT06};  in \eqref{eq:tricky3}, $C=k\cdot \max_i{k \choose i}$
  (note that ${k \choose i}$ is maximum when $i=\lceil k/2 \rceil $.)
\end{proof}

\begin{lem}\label{lemma:2epsilon}
Let $x,x'\in \X$, with $x\not\equiv x'$. Let $n\geq 1$. Then there is $\varepsilon >0$ such that
$\UU^{(n)}(x,2\varepsilon)\cap \UU^{(n)}(x',2\varepsilon)=\es$.
\end{lem}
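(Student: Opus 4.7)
The plan is to reduce the claim to the following statement about a single coordinate: if $p \neq p'$ are two distributions on $\Y$, then for all sufficiently small $\varepsilon > 0$ there is no distribution $q$ on $\Y$ simultaneously satisfying $D(q\|p) \leq 2\varepsilon$ and $D(q\|p') \leq 2\varepsilon$. Granted this, to get disjointness of the two full balls at level $n$, it suffices to exhibit a coordinate $i^\star$ where $p_{a_{i^\star}}(\cdot|x)$ and $p_{a_{i^\star}}(\cdot|x')$ differ; any $y^n$ in the intersection would, by definition \eqref{defUU}, put its empirical distribution $t_{y^n(i^\star)}$ in both KL-balls at once.

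To find such an $i^\star$, I would invoke Lemma~\ref{lemma:due}: since $x \not\equiv x'$, there is some $a \in Act$ with $p_a(\cdot|x) \neq p_a(\cdot|x')$, and since $\sigma$ cycles deterministically through $a_1,\ldots,a_k$, for $n \geq k$ this $a$ occurs as some $a_{i^\star}$ with $1 \leq i^\star \leq k$ (the case $n < k$ is either vacuous or can be handled by observing that the enclosing Theorem~\ref{th:due} only uses large $n$). Set $p \defin p_{a_{i^\star}}(\cdot|x)$ and $p'\defin p_{a_{i^\star}}(\cdot|x')$.

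For the main implication, I would argue by contradiction. If no such $\varepsilon$ works, then for every $\varepsilon > 0$ there is a type $q_\varepsilon$ of a length-$m$ sequence on $\Y$ (where $m$ is the length of $y^n(i^\star)$) with $D(q_\varepsilon\|p) \leq 2\varepsilon$ and $D(q_\varepsilon\|p') \leq 2\varepsilon$. Because the set of types of length-$m$ sequences on the finite alphabet $\Y$ is itself finite, the pigeonhole principle supplies a single $q$ witnessing the inequalities for arbitrarily small $\varepsilon$, hence $D(q\|p)=D(q\|p')=0$. Since $D(\cdot\|\cdot)$ vanishes exactly on equality, this forces $p=q=p'$, contradicting $p \neq p'$.

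The only subtle point, and the main thing to be careful about, is that KL-balls are not metric balls, so one cannot appeal to the triangle inequality directly; this is precisely why the argument goes through a finiteness-plus-contradiction step rather than a quantitative estimate. Everything else is bookkeeping: unfolding the definition \eqref{defUU} and noting that membership in $\UU^{(n)}(x,2\varepsilon) \cap \UU^{(n)}(x',2\varepsilon)$ forces simultaneous closeness of $t_{y^n(i^\star)}$ to both $p$ and $p'$, which the chosen $\varepsilon$ rules out.
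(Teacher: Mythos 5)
Your proof is correct for the statement as literally written, and it shares the paper's first move: invoking Lemma~\ref{lemma:due} to extract an action $a_{i^\star}$ with $p_{a_{i^\star}}(\cdot|x)\neq p_{a_{i^\star}}(\cdot|x')$, and reducing disjointness of the sets $\UU^{(n)}$ to disjointness of two KL-balls around these two distributions. Where you diverge is the analytic core. The paper gets single-coordinate disjointness quantitatively from Pinsker's inequality $D(t\|p)\geq \frac{1}{2\ln 2}\|t-p\|_1^2$ (\cite[Lemma 11.6.1]{CT06}): any $t$ with $D(t\|p)\leq 2\varepsilon$ and $D(t\|p')\leq 2\varepsilon$ forces $\|p-p'\|_1\leq 4\sqrt{\varepsilon\ln 2}$ via the genuine triangle inequality for $\|\cdot\|_1$, so any $\varepsilon<\|p-p'\|_1^2/(16\ln 2)$ works. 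Your concern that ``KL-balls are not metric balls'' is thus sidestepped rather than confronted: one converts to total variation first and applies the triangle inequality there. Your alternative --- pigeonhole over the finite set of types of length-$m$ sequences together with $D(q\|p)=0$ iff $q=p$ --- is sound and more elementary, but it buys strictly less: the $\varepsilon$ you produce depends on $n$ (through $m$), whereas Pinsker yields an $\varepsilon$ depending only on the pair of distributions, hence uniform in $n$. That uniformity is what the lemma is actually consumed for: the proof of Theorem~\ref{th:due} sends $n\to\infty$ with a fixed $\varepsilon$ inside the bound $1-2^{-\frac{n}{k}\varepsilon}C(\frac{n}{k}+1)^{k|\Y|}$, and an $\varepsilon_n$ shrinking with $n$ could void that limit. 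If you want uniformity without Pinsker, run your compactness idea over the whole simplex rather than over types of a fixed length, using that $t\mapsto\max\{D(t\|p),D(t\|p')\}$ is lower semicontinuous and vanishes nowhere when $p\neq p'$.
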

\begin{proof}
It is well-known that given any two distinct distributions $p(\cdot)$ and $q(\cdot)$,   there is $\varepsilon>0$ such that $B^{(n)}(p,2\varepsilon)\cap B^{(n)}(q,2\varepsilon)=\es$ (this is a consequence of Pinsker's inequality,   \cite[Lemma 11.6.1]{CT06}.)  Thus, choose $\varepsilon >0$ such that,  for some $j$, $B^{(n)}\left(p_{a_j}(\cdot|x),2\varepsilon\right)\cap B^{(n)}\left(p_{a_j}(\cdot|x'),2\varepsilon\right)=\es$: the wanted statement follows from the definition of $\UU^{(n)}(x,2\varepsilon)$ and $\UU^{(n)}(x',2\varepsilon)$.
\end{proof}

\ifmai
By contradiction, suppose that such $\varepsilon$ does not exist. It means that for any $\varepsilon\geq 0$ there is at least a value $y^{n,\varepsilon}$ belonging to the intersection $\UU^{(n)}(x,2\varepsilon)\cap \UU^{(n)}(x',2\varepsilon.)$
Let $\delta\triangleq\min_{y^n\in\Y^n}{\|p_\sigma(\cdot|x)-p_\sigma(\cdot|x')\|}_1.$  Then:
\begin{align}
\delta &\ \leq\  {\|p_\sigma(\cdot|x)-p_\sigma(\cdot|x')\|}_1 \ = \ \sum_{y\in\Y^n}{\left|p_\sigma(y^n|x)-p_\sigma(y^n|x')\right|}\nonumber\\
&\ =\  \prod_{i=1}^k{{\|p_{a_i}(\cdot|x)-p_{a_i}(\cdot|x')\|}_1}\label{algebra}\\
&\ \leq\ \prod_{i=1}^k{\left({\|y^{n,\varepsilon}(i)-p_{a_i}(\cdot|x)\|}_1+{\|y^{n,\varepsilon}(i)-p_{a_i}(\cdot|x')\|}_1\right)}\nonumber\\
&\ \leq\ \prod_{i=1}^k{\left(\sqrt{2\ln{2}D(y^{n,\varepsilon}(i)\|p_{a_i}(\cdot|x))}+
\sqrt{2\ln{2}D(y^{n,\varepsilon}(i)\|p_{a_i}(\cdot|x'))}\right)}\label{Pinker}\\
&\ \leq\ \prod_{i=1}^k{(2\sqrt{4\varepsilon\ln{2}})}=4^k{(\varepsilon\ln{2})}^{\frac{k}{2}}\nonumber
\end{align}
where: $\ln$ denotes the natural logarithm; (\ref{algebra}) is due to some algebra; for the last inequality we have exploited the hypothesis
that $y^{n,\varepsilon}\in\UU^{(n)}(x,2\varepsilon)\cap \UU^{(n)}(x',2\varepsilon)$ and definition (\ref{defUU});
(\ref{Pinker}) follows from
Lemma $11.6.1$ in \cite{CT06}, for which for any two distributions $p$ and $q$ on
the same set, it holds true that $D(p\|q)\geq\frac{1}{2\ln{2}}{\|p-q\|}_1^2.$
If now we choose $\varepsilon<\frac{1}{16}{(\delta\ln{2})}^{\frac{2}{k}}$ we obtain a contradiction and so we have the thesis.
\mik{Francesca, dovrebbe essere sul modello di quella già usata in altri paperi, basata sulla Pinsker inequality,  ma qui tenendo conto della differente definizione di $\UU$. Puoi vedere i dettagli?}
\end{proof}
\fi

We are now set to prove Theorem \ref{th:due}.

\begin{thm}[Theorem \ref{th:due}] There is a total, non-adaptive strategy $\sigma$ s.t. $I_\sigma(\St,p)=
 I(X;[X])$. Consequently, $I_\star(\St,p)=  I(X\,;\,[X])$.
\end{thm}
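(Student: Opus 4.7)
\begin{proof_of}{Theorem \ref{th:due} (plan)}
The plan is to take $\sigma$ to be the infinite round-robin non-adaptive strategy $\sigma(y^j)\defin a_{(j\bmod k)+1}$ already fixed in the preamble to the lemmas, and to show that $I_{\sigma_n}(\St,p)\to I(X;[X])$ as $n\to\infty$ along multiples of $k$; monotonicity of $I_{\sigma\backslash l}$ in $l$ then gives the result for the full limit, and Theorem~\ref{th:uno} supplies the matching upper bound.

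First I would pick, for each indistinguishability class $c\in\X/\!\equiv$, a representative $x_c\in c$; by Lemma~\ref{lemma:due}, $p_a(\cdot\,|\,x)=p_a(\cdot\,|\,x_c)$ for every $x\in c$ and $a\in Act$. By Lemma~\ref{lemma:2epsilon} (applied to every pair of distinct classes) there exists $\varepsilon>0$ such that the typical sets $\UU^{(n)}(x_c,2\varepsilon)$ for $c\in\X/\!\equiv$ are pairwise disjoint, hence so are the smaller $\UU^{(n)}(x_c,\varepsilon)$. Let $T^{(n)}\defin\bigcup_c \UU^{(n)}(x_c,\varepsilon)$. Lemma~\ref{lemma:ball} gives $p_{\sigma_n}(\UU^{(n)}(x_c,\varepsilon)\,|\,x)\to 1$ uniformly in $x\in c$; since the classes cover $\X$, $\Pr(Y^n\in T^{(n)})\to 1$ and the probability of each $\UU^{(n)}(x_c,\varepsilon)$ converges to $p(c)$.

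The key step is to show that, uniformly in $y^n\in\UU^{(n)}(x_c,\varepsilon)$, the posterior $p_{\sigma_n}(\cdot\,|\,y^n)$ converges to $p(\cdot\,|\,c)$ as $n\to\infty$. For $x'\in c$ one has $p_{\sigma_n}(y^n|x')=p_{\sigma_n}(y^n|x_c)$ by choice of representatives; for $x'\notin c$, by Lemma~\ref{lemma:2epsilon} there is an index $i$ with $D(y^n(i)\|p_{a_i}(\cdot|x'))>2\varepsilon$, so by Lemma~\ref{lemma:KL}
\[
\frac{p_{\sigma_n}(y^n|x')}{p_{\sigma_n}(y^n|x_c)}
 \;=\; 2^{-\frac{n}{k}\bigl(\DD(y^n\|p_{\sigma_n}(\cdot|x'))-\DD(y^n\|p_{\sigma_n}(\cdot|x_c))\bigr)}
 \;\leq\; 2^{-\frac{n}{k}(2\varepsilon-k\varepsilon')}
\]
for any $\varepsilon'<\varepsilon$ refined at the start, so this ratio tends to $0$ uniformly in $y^n\in\UU^{(n)}(x_c,\varepsilon')$. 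Writing $p_{\sigma_n}(y^n)=\sum_{x'}p(x')p_{\sigma_n}(y^n|x')$ and normalising yields $p_{\sigma_n}(x'|y^n)\to p(x'|c)$ uniformly in such $y^n$. Since $U(\cdot)$ is continuous on the compact simplex $\P$, this delivers $U(p_{\sigma_n}(\cdot|y^n))\to U(p(\cdot|c))$ uniformly in $y^n\in\UU^{(n)}(x_c,\varepsilon')$; $U$ is also bounded on $\P$, so the contribution to $U_{\sigma_n}(X\,|\,Y^n)$ coming from the atypical complement $\Y^n\setminus T^{(n)}$ vanishes. Splitting the sum \eqref{eq:conditionaluncert} along the classes and combining these facts yields
\[
U_{\sigma_n}(X\,|\,Y^n)\;\longrightarrow\;\sum_c p(c)U(p(\cdot|c))\;=\;U(X\,|\,[X]),
\]
so $I_{\sigma_n}(\St,p)=U(X)-U_{\sigma_n}(X\,|\,Y^n)\to I(X\,;\,[X])$. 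By Definition \ref{def:qifAdaptive}(2) and the monotonicity of $I_{\sigma\backslash l}$ in $l$, $I_\sigma(\St,p)=\lim_n I_{\sigma_n}(\St,p)=I(X\,;\,[X])$. Together with Theorem~\ref{th:uno}, this also gives $I_\star(\St,p)=I(X\,;\,[X])$.

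The main obstacle will be the uniform convergence argument: one has to carefully choose the slack $\varepsilon'<\varepsilon$ so that Lemma~\ref{lemma:ball} still gives overwhelming probability to $\UU^{(n)}(x_c,\varepsilon')$ while the likelihood ratio argument for $x'\notin c$ keeps a genuine exponent $2\varepsilon-k\varepsilon'>0$. Once that is in place, continuity of $U$ and boundedness of $U$ on the compact simplex $\P$ handle the passage to the limit in the sum defining $U_{\sigma_n}(X|Y^n)$ without further surprises.
\end{proof_of}
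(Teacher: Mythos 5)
Your proposal is correct, and it rests on the same machinery as the paper's proof (the round-robin strategy, Lemmas \ref{lemma:KL}, \ref{lemma:ball} and \ref{lemma:2epsilon}, and the matching upper bound from Theorem \ref{th:uno}), but the central limit argument is organized differently. The paper never establishes convergence of the individual posteriors $p_{\sigma_n}(\cdot|y^n)$: it first applies Jensen's inequality to bound $U(X|Y^n)$ from above by $\sum_i p(c_i)U(q_i^n)$, where $q_i^n(\cdot)=\sum_{y^n}p_{\sigma_n}(y^n|x_i)p_{\sigma_n}(\cdot|y^n)$ is an \emph{averaged} posterior, then derives a pointwise lower bound $q_i^n(x)\geq \bigl(1-2^{-\frac{n}{k}\varepsilon}C(\frac{n}{k}+1)^{k|\Y|}\bigr)/\bigl(\frac{p(c_i)}{p(x)}+2^{-n\varepsilon}C'\bigr)$, extracts a convergent subsequence by Bolzano--Weierstrass, and concludes $q_i^{n_j}\to p(\cdot|c_i)$ because the limit points dominate a probability vector coordinatewise and must therefore equal it. You instead prove the stronger statement that the posteriors themselves converge to $p(\cdot|c)$ \emph{uniformly} over the typical set $\UU^{(n)}(x_c,\varepsilon')$, via the likelihood-ratio bound, and then compute $\lim_n U(X|Y^n)$ directly by splitting the sum into typical and atypical parts and using boundedness of $U$ on the compact simplex. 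Both routes are sound. Yours yields the exact limit in one pass (the paper's argument per se only gives $L\leq U(X|[X])$ and needs Theorem \ref{th:uno} to close the gap even for the first claim), and it avoids the subsequence extraction; the price is the uniformity bookkeeping you flag yourself, in particular the constraint $\varepsilon'<2\varepsilon/k$ needed so that the exponent $\frac{n}{k}(2\varepsilon-k\varepsilon')$ stays positive --- which is harmless since Lemma \ref{lemma:ball} holds for every positive radius. The paper's averaging trick buys freedom from any uniform-convergence concerns at the cost of an indirect compactness argument. One small point to fold in when you write this up: as in the paper, representatives should be chosen with nonzero probability (classes of zero prior mass are simply discarded), since otherwise $p(\cdot|c)$ is undefined and the posterior-convergence claim on the corresponding typical set is vacuous.
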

\begin{proof}
Using the notation previously introduced, we shall prove that, as $n\rightarrow \infty$
\begin{eqnarray}\label{eq:limit}
U(X|Y^n)\longrightarrow L \quad \quad \text{ for some }L\leq U(X|[X])\,.
 \end{eqnarray}
This will imply the thesis, as then $I_\sigma(\St,p)\geq I(X;[X])$, which, by virtue of Theorem \ref{th:uno}, implies $I_\sigma(\St,p)= I(X;[X])$.

Let the equivalence classes of $\equiv$ be $c_1,\ldots,c_K$. For each $i=1,\ldots,K$,
choose a representative $x_i\in c_i$ of nonzero probability (if it exists; otherwise
class $c_i$ is just discarded.)  We can compute as follows.
\begin{eqnarray}
U(X|Y^n) & = & \sum_{y^n,x} p(x)p_{\sigma_n}(y^n|x)U(p_{\sigma_n}(\cdot|y^n))\nonumber\\
        & = & \sum_x p(x) \sum_{y^n} p_{\sigma_n}(y^n|x)U(p_{\sigma_n}(\cdot|y^n))\nonumber\\
        & \leq & \sum_x p(x)U\left(\sum_{y^n} p_{\sigma_n}(y^n|x)p_{\sigma_n}(\cdot|y^n)\right)\label{eq:JJ}\\
        &= & \sum_{c_i}\sum_{x\in c_i}p(x)U\left(\sum_{y^n} p_{\sigma_n}(y^n|x_i)p_{\sigma_n}(\cdot|y^n)\right)\nonumber\\
        &= & \sum_{c_i} p(c_i)U\underbrace{\left(\sum_{y^n} p_{\sigma_n}(y^n|x_i)p_{\sigma_n}(\cdot|y^n)\right)}_{\defin q^n_i(\cdot)}\nonumber\\[-8pt]
        & = & \sum_{c_i}p(c_i)U(q^n_i)\label{eq:qi}
\end{eqnarray}
where the inequality in \eqref{eq:JJ} stems from $U$'s concavity and Jensen's  inequality.
We will now show that there is a sub-sequence of indices $\{n_j\}$ such that for each $i=1,\ldots,K$,
\begin{eqnarray}\label{eq:limit2}
q^{n_j}_i(\cdot) & \longrightarrow & p(\cdot|c_i)
\end{eqnarray}
(according to any chosen metrics  in $\P$.) This will imply \eqref{eq:limit}: in fact,
  by virtue of $U$'s continuity, we will have, on the chosen sub-sequence,
  $\sum_{c_i}p(c_i)U(q^{n_j}_i)\rightarrow \sum_{c_i} p(c_i)U(p(\cdot|c_i)) = U(X|[X])$.
 Hence, by virtue of \eqref{eq:qi},   on the chosen sub-sequence  and hence on every sequence,
  we will have $U(X|Y^n)\rightarrow L \leq U(X|[X])$, which is \eqref{eq:limit}.

In order to prove \eqref{eq:limit2}, take any $n\geq 1$ that is a multiple of $k$,
  and choose any $\varepsilon>0$ such that $\UU^{(n)}(x,2\varepsilon)\cap \UU^{(n)}(x',2\varepsilon)=\es$
   whenever $x\not\equiv x'$ (the existence of such an  $\varepsilon$ is guaranteed by Lemma
    \ref{lemma:2epsilon}.) Consider a generic $x\in c_i$ such that $p(x)>0$.   We have the following
    lower bound for $q^n_i(x)$.
\begin{eqnarray}
q^n_i(x) & = & \sum_{y^n}\frac{p_{\sigma_n}(y^n|x_i)p_{\sigma_n}(y^n|x_i)p(x)}
{\sum_{x'}p_{\sigma_n}(y^n|x')p(x')}\  = \ \sum_{y^n} \frac{p_{\sigma_n}(y^n|x_i)}{\frac{p(c_i)}{p(x)} + \sum_{x'\not\equiv x_i}\frac {p_{\sigma_n}(y^n|x')p(x')}{p_{\sigma_n}(y^n|x_i)p(x_i)} }\label{eq:bayes}\\
& \geq & \sum_{y^n\in \UU^{(n)}(x,\varepsilon)} \frac{p_{\sigma_n}(y^n|x_i)}{\frac{p(c_i)}{p(x)} + \sum_{x'\not\equiv x_i} 2^{-\frac n k [\DD(y^n||p_{\sigma_n}(\cdot|x'))-\DD(y^n||p_{\sigma_n}(\cdot|x))  ]}\frac {p(x')}{p(x)} }\label{eq:DK1}\\
& \geq & \sum_{y^n\in \UU^{(n)}(x,\varepsilon)} \frac{p_{\sigma_n}(y^n|x_i)}{\frac{p(c_i)}{p(x)} + \sum_{x'\not\equiv x_i} 2^{-  n  \varepsilon}\frac {p(x')}{p(x)} }\label{eq:DK2}\\
& = &  p_{\sigma_n}\left( \,\UU^{(n)}(x_i,\varepsilon)  \,|\,x_i\right) \frac{1}{ \frac{p(c_i)}{p(x)} + 2^{-n\varepsilon}C'}\label{eq:DK3}\\
& \geq  &  \frac{ 1-2^{- \frac n k\varepsilon}C(\frac n k +1)^{k|\Y|}   }{ \frac{p(c_i)}{p(x)} + 2^{-n\varepsilon}C'}\label{eq:DK4}
\end{eqnarray}

\noindent
where: \eqref{eq:bayes} follows from the definition of $q^n_i(x)$ and an
application of Bayes rule, and from the fact that $p_{\sigma_n}(y^n|x)=p_{\sigma_n}(y^n|x_i)$; \eqref{eq:DK1} follows from a simple union bound and from Lemma \ref{lemma:KL};
 \eqref{eq:DK2} follows from the fact that, by assumption,
 $\UU^{(n)}(x,2\varepsilon)\cap \UU^{(n)}(x',2\varepsilon)=\es$
 (also note that $\UU^{(n)}(x,2\varepsilon)=  \UU^{(n)}(x_i,2\varepsilon)$);
 \eqref{eq:DK3} follows by definition of $\UU^{(n)}(x,\varepsilon)= \UU^{(n)}(x_i,\varepsilon)$;
 here $C'$ is a suitable constant, not depending on $n$; \eqref{eq:DK4} follows from Lemma \ref{lemma:ball}.

Now, let $\{n_j\}$ be a sequence of indices such that, for each $x$ and $i$,   $q^{n_j}_i(x)$
converges to a limit, say $L_i(x)$ (such a sub-sequence must exist, by Bolzano-Weierstrass.)
The inequality
\[
q^n_i(x)\geq \frac{ 1-2^{-\frac  n k\varepsilon}C(\frac n k +1)^{k|\Y|}   }{ \frac{p(c_i)}{p(x)} + 2^{-n\varepsilon}C'}
\]
\noindent
which holds for each $n$ that is a multiple of $k$, implies that these limits satisfy
$L_i(x)\geq \frac {p(x)}{p(c_i)}$. Since point-wise convergence for each $x$ implies convergence of $q^{n_j}_i(\cdot)$   to a probability distribution, we have that, for each $i$ and $x$, actually equality must hold: $L_i(x)= \frac {p(x)}{p(c_i)}$.
Thus, for each $i=1,\ldots,K$,  $q^{n_j}_i(\cdot)\rightarrow p(\cdot|c)$,   which proves
\eqref{eq:limit2}.
\end{proof}

\section{Backward Induction} \label{app:backward}
We give some details of the derivation of an optimal strategy of length 2 for the system in Example \ref{exDB}.
We take   Shannon entropy as the chosen uncertainty measure.
In Figure  \ref{fig:exMDP} we give a partial representation of the related \mdp.
According to the Backward Induction method, we   compute
the reward of each node, starting from the leaves of the \mdp-tree and then propagating them up to the root. Let us denote by $R(n)$ the reward assigned to a node $n$. Applying the algorithm, we compute as follows (recall that levels are counted from the root, which has level 0).
\begin{itemize}
\item  Level 4. The reward associated to each leaf node $n$ is $R(n)=0$
\item  Level 3. Each probabilistic node, which is the $a$-child of a decision node $h$, receives as a reward the sum of its immediate gain, $I_{[a]}(\St,p^h)$, and the average reward of its children:
 $R(4)=R(5)=\log{5}-\frac{2}{5}-\frac{3}{5}\log{3}$, $R(6)=R(8)=0$, $R(7)=\frac{1}{3}$, $R(9)=\log{3}-\frac{4}{3}$, and so on.
\item Level 2. Each decision node receives as  a reward the maximum of its children' rewards (and the corresponding action is recorded):
  $R(B)=\log{5}-\frac{2}{5}-\frac{3}{5}\log{3}$,  $R(C)=\frac{1}{3}$, $R(D)=\log{3}-\frac{4}{3}$ and so on.
\item Level 1.
Each probabilistic node receives the following rewards: $R(1)=\log{10}-\frac{9}{25}-\frac{27}{50}\log{3}\approx 2.11$,
 $R(2)=\log{10}-\frac{11}{15}-\frac{1}{5}\log{3}\approx 2.27$
and $R(3)=\log{10}-\frac{3}{5}-\frac{1}{5}\log{3}\approx 2.4$.
\item Level 0. The decision node $A$ receives the reward $R(A) = R(3)\approx 2.4$
\end{itemize}
Taking into account the maximal children' rewards selected at each decision node, we have the following optimal strategy of length 2 (its tree representation is given in Figure \ref{optimalstrategy}.)
\[
\sigma\triangleq\left\{
\begin{array}{llllll}
\varepsilon\mapsto Age &29\mapsto ZIP &30\mapsto Date &31\mapsto ZIP &32\mapsto Date &64\mapsto Date\\
65\mapsto ZIP &66\mapsto ZIP &67\mapsto Date &68\mapsto Date &69\mapsto Date. &\
\end{array}
\right.
\]

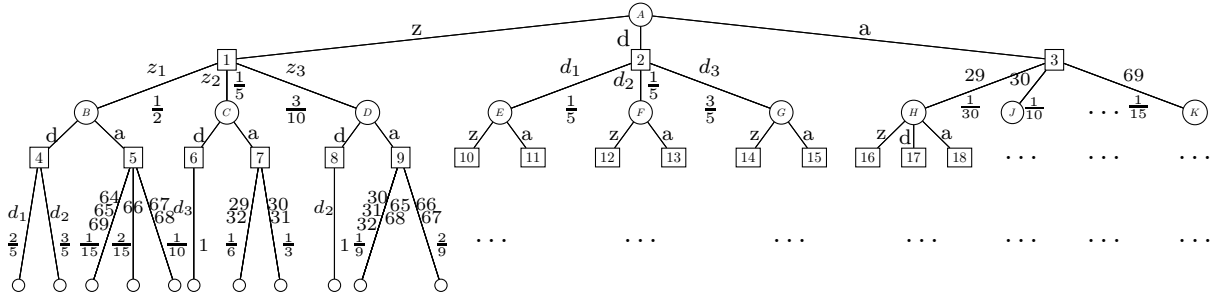
\begin{figure}
\tikzstyle{bag} = [text centered,circle, minimum width=3pt,fill, inner sep=0pt]
\tikzstyle{Node}=[draw, minimum width=1.5pt,fill, inner sep=1.5pt]
{\centering
\begin{tikzpicture}[xscale=1.1,
  level 1/.style={sibling distance=5cm, level distance=0.6cm},
  level 2/.style={sibling distance=1.7cm, level distance=0.7cm},
  level 3/.style={sibling distance=0.8cm, level distance=0.6cm},
  level 4/.style={sibling distance=0.5cm, level distance=1.7cm}]

\node[circle,draw, scale=0.45] (a) {\footnotesize $A$}
child { node[draw, scale=0.55] (1) {1}
child { node[circle,draw, scale=0.45] (b) {\footnotesize$B$} child { node [draw, scale=0.55, left=3pt] (4) {4}
      child { node [circle,draw, scale=0.45] (r) {}} child { node [circle, draw, scale=0.45] (s) {} }}
      child { node [draw, scale=0.55,right=3pt] (5) {5}
      child { node [circle,draw, scale=0.45] (t) {}} child { node [circle, draw, scale=0.45] (u) {} }
      child { node [circle,draw, scale=0.45] (v) {}}}}
child { node[circle,draw, scale=0.45] (c) {\footnotesize$C$} child { node [draw, scale=0.55] (6) {6}
      child { node [circle,draw, scale=0.45] (w) {}}}
      child { node [draw, scale=0.55] (7) {7}
      child { node [circle,draw, scale=0.45] (x) {}}
      child { node [circle,draw, scale=0.45] (z) {}}}}
child { node[circle,draw, scale=0.45] (d) {\footnotesize$D$} child { node [draw, scale=0.55] (8) {8}
      child { node [circle,draw, scale=0.45] (a1) {}}}
      child { node [draw, scale=0.55] (9){9}
      child { node[left=5pt] [circle,draw, scale=0.45] (b1) {}}
      child { node[right=5pt] [circle,draw, scale=0.45] (d1) {}}}}}
child { node[draw, scale=0.55] (2) {2}
child {node[circle,draw, scale=0.45] (e) {\footnotesize$E$} child {node[draw, scale=0.5] (10) {10}} child {node[draw, scale=0.5] (11) {11}}}
child {node[circle,draw, scale=0.45] (f) {\footnotesize$F$} child {node[draw, scale=0.5] (12) {12}} child {node[draw, scale=0.5] (13) {13}}}
child {node[circle,draw, scale=0.45] (g) {\footnotesize$G$} child {node[draw, scale=0.5] (14) {14}} child {node[draw, scale=0.5] (15) {15}}}}
child { node[draw, scale=0.55] (3) {3} child {node[circle,draw, scale=0.45] (h) {\footnotesize$H$} child {node[draw, scale=0.5, right=6pt] (16) {16}} child {node[draw, scale=0.5] (17) {17}} child {node[draw, scale=0.5, left=6pt] (18) {18}}} child {node[circle,draw, scale=0.45, left=0.9cm] (j) {\footnotesize$J$}} child {node[circle,draw, scale=0.45] (k) {\footnotesize$K$}}};
\draw (5.6cm,-1.3cm) node {$\ldots$};
\draw (5.6cm,-1.9cm) node {$\ldots$};
\draw (6.7cm,-1.9cm) node {$\ldots$};
\draw (4.6cm,-1.9cm) node {$\ldots$};
\draw (-1.8cm,-3cm) node {$\ldots$};
\draw (0cm,-3cm) node {$\ldots$};
\draw (1.8cm,-3cm) node {$\ldots$};
\draw (3.4cm,-3cm) node {$\ldots$};
\draw (4.6cm,-3cm) node {$\ldots$};
\draw (5.6cm,-3cm) node {$\ldots$};
\draw (6.7cm,-3cm) node {$\ldots$};

\path[]
    (a) edge node [above=3pt, left] {\footnotesize z} (1)
    (a) edge node [left] {\footnotesize d} (2)
    (a) edge node [above=3pt, right] {\footnotesize a} (3)
    (1) edge node [above] {\scriptsize $z_1$ } (b)
    (1) edge node [below] {\scriptsize $\frac{1}{2}$ } (b)
    (1) edge node [above=2pt, left=-2pt] {\scriptsize $z_2$ } (c)
    (1) edge node [right=-2pt] {\scriptsize $\frac{1}{5}$ } (c)
    (1) edge node [above] {\scriptsize $z_3$ } (d)
    (1) edge node [below] {\scriptsize $\frac{3}{10}$ } (d)
    (2) edge node [above] {\scriptsize $d_1$ } (e)
    (2) edge node [below] {\scriptsize $\frac{1}{5}$ } (e)
    (2) edge node [above=2pt, left=-2pt] {\scriptsize $d_2$ } (f)
    (2) edge node [right=-2pt] {\scriptsize $\frac{1}{5}$ } (f)
    (2) edge node [above] {\scriptsize $d_3$ } (g)
    (2) edge node [below] {\scriptsize $\frac{3}{5}$ } (g)
    (3) edge node [above=4pt, left=-4pt] {\tiny $29$ } (h)
    (3) edge node [below=8pt, right=-14pt] {\tiny $\frac{1}{30}$ } (h)
    (3) edge node [above=3pt, left=-3pt] {\tiny $30$} (j)
    (3) edge node [below=8pt, right=-8pt] {\tiny $\frac{1}{10}$ } (j)
    (3) edge node [above=4pt, right=-4pt] {\tiny $69$ } (k)
    (3) edge node [below=8pt,left=-14pt] {\tiny $\frac{1}{15}$ } (k)
    (b) edge node [left=-2pt] {\scriptsize d} (4)
    (b) edge node [right=-2pt] {\scriptsize a} (5)
    (c) edge node [left=-2pt] {\scriptsize d} (6)
    (c) edge node [right=-2pt] {\scriptsize a} (7)
    (d) edge node [left=-2pt] {\scriptsize d} (8)
    (d) edge node [right=-2pt] {\scriptsize a} (9)
    (e) edge node [left=-2pt] {\scriptsize z} (10)
    (e) edge node [right=-2pt] {\scriptsize a} (11)
    (f) edge node [left=-2pt] {\scriptsize z} (12)
    (f) edge node [right=-2pt] {\scriptsize a} (13)
    (g) edge node [left=-2pt] {\scriptsize z} (14)
    (g) edge node [right=-2pt] {\scriptsize a} (15)
    (h) edge node [left=-2pt] {\scriptsize z} (16)
    (h) edge node [left=-3pt] {\scriptsize d} (17)
    (h) edge node [right=-2pt] {\scriptsize a} (18)
    (4) edge node [above=-8pt, left=-1pt] {\tiny $\frac{2}{5}$ } (r)
    (4) edge node [above=4pt, left=-4pt] {{\tiny $d_1$}} (r)
    (4) edge node [above=-8pt, right=-1pt] {\tiny $\frac{3}{5}$ } (s)
    (4) edge node [above=4pt, right=-4pt] {\tiny $d_2$ } (s)
    (5) edge node [above=10pt, left=-7pt] {{\tiny $64$}} (t)
    (5) edge node [above=5pt, left=-5pt] {{\tiny $65$}} (t)
    (5) edge node [above=0pt, left=-3pt] {{\tiny $69$}} (t)
    (5) edge node [above=-8pt, left] {\tiny $\frac{1}{15}$} (t)
    (5) edge node [above] {\tiny $66$} (u)
    (5) edge node [above=-8pt, left=-4pt] {\tiny $\frac{2}{15}$} (u)
    (5) edge node [above=-8pt, right] {\tiny $\frac{1}{10}$} (v)
    (5) edge node [above=7.5pt, right=-6pt] {{\tiny $67$}} (v)
    (5) edge node [above=2.5pt, right=-4pt] {{\tiny $68$}} (v)
    (6) edge node [above=6pt, left=-4pt] {\tiny $d_3$} (w)
    (6) edge node [above=-8pt, right=-2pt] {\tiny $1$} (w)
    (7) edge node [above=-8pt, left] {\tiny $\frac{1}{6}$} (x)
    (7) edge node [above=7.5pt, left=-4pt] {{\tiny $29$}} (x)
    (7) edge node [above=2.5pt, left=-3pt] {{\tiny $32$}} (x)
    (7) edge node [above=-8pt, right] {\tiny $\frac{1}{3}$} (z)
    (7) edge node [above=7.5pt, right=-5pt] {{\tiny $30$}} (z)
    (7) edge node [above=2.5pt, right=-4pt] {{\tiny $31$}} (z)
    (8) edge node [above=6pt, left=-4pt] {\tiny $d_2$} (a1)
    (8) edge node [above=-8pt, right=-2pt] {\tiny $1$} (a1)
    (9) edge node [above=10pt, left=-7pt] {{\tiny $30$}} (b1)
    (9) edge node [above=5pt, left=-5pt] {{\tiny $31$}} (b1)
    (9) edge node [above=0pt, left=-3pt] {{\tiny $32$}} (b1)
    (9) edge node [above=7.2pt, right=0pt] {{\tiny $65$}} (b1)
    (9) edge node [above=2.2pt, right=-2pt] {{\tiny $68$}} (b1)
    (9) edge node [above=-8pt, left=1pt] {\tiny $\frac{1}{9}$} (b1)
    (9) edge node [above=-8pt, right=1pt] {\tiny $\frac{2}{9}$} (d1)
    (9) edge node [above=7.5pt, right=-6pt] {{\tiny $66$}} (d1)
    (9) edge node [above=2.5pt, right=-4pt] {{\tiny $67$}} (d1)
    ;
\end{tikzpicture}
}
\caption{{\small A partial representation of the \mdp\  for Example \ref{exDB}. 
 Here,  the symbols `z', `d'  and `a' are abbreviations for   actions $ZIP$, $Date$ and $Age$, respectively. For simplicity, probability distributions labelling  decision nodes are not shown.
  Moreover, leaves   with  the same labels and the same father,  and the corresponding incoming arcs, have been coalesced.
}}
\label{fig:exMDP}
\end{figure}
\end{document}